\newtheorem{theorem}{Theorem}
\newtheorem{definition}{Definition}
\newtheorem{lemma}{Lemma}
\newtheorem{observation}{Observation}
\newtheorem{proposition}{Proposition}
\newtheorem{claim}{Claim}
\DeclareMathOperator*{\E}{\mathbb{E}}
\DeclareMathOperator{\aut}{aut}
\DeclareMathOperator{\stab}{Stab}
\DeclareMathOperator{\orb}{Orb}
\newcommand{\set}[1]{\left\{#1\right\}}
\DeclarePairedDelimiter{\ceil}{\lceil}{\rceil}
\DeclarePairedDelimiter{\floor}{\lfloor}{\rfloor}
\newcommand{\Prs}[1]{\Pr\left[#1\right]}
\newcommand{\suchthat}{\,\middle\vert\,}
\newcommand{\items}{\mathcal{M}}
\newcommand{\agents}{\mathcal{N}}
\newcommand{\alloc}{\mathcal{A}}
\newcommand{\sdpref}{\succeq^{\mathrm{sd}}}
\newcommand{\notsdpref}{\not\succeq^{\mathrm{sd}}}
\newcommand{\sdprefneq}{\succ^{\mathrm{sd}}}
\newcommand{\EF}{\mathrm{EF}}
\newcommand{\EFX}{\mathrm{EFX}}
\newcommand{\SDEF}{\mathrm{SD}\text{-}\mathrm{EF}}
\newcommand{\SDEFX}{\mathrm{SD}\text{-}\mathrm{EFX}}
\newcommand{\hatsig}{\widehat{\sigma}}
\newcommand{\hatA}{\widehat{A}}
\newcommand{\cE}{\mathcal{E}}
\newcommand{\hgraph}{H}
\newcommand{\hgraphset}{\mathcal{H}}
\def\rotateclockwise#1{
  \newdimen\xrw
  \pgfextractx{\xrw}{#1}
  \newdimen\yrw
  \pgfextracty{\yrw}{#1}
  \pgfpoint{\yrw}{-\xrw}
}
\def\rotatecounterclockwise#1{
  \newdimen\xrcw
  \pgfextractx{\xrcw}{#1}
  \newdimen\yrcw
  \pgfextracty{\yrcw}{#1}
  \pgfpoint{-\yrcw}{\xrcw}
}
\def\outsidespacerpgfclockwise#1#2#3{
  \pgfpointscale{#3}{
    \rotateclockwise{
      \pgfpointnormalised{
        \pgfpointdiff{#1}{#2}}}}
}
\def\outsidespacerpgfcounterclockwise#1#2#3{
  \pgfpointscale{#3}{
    \rotatecounterclockwise{
      \pgfpointnormalised{
        \pgfpointdiff{#1}{#2}}}}
}
\def\outsidepgfclockwise#1#2#3{
  \pgfpointadd{#2}{\outsidespacerpgfclockwise{#1}{#2}{#3}}
}
\def\outsidepgfcounterclockwise#1#2#3{
  \pgfpointadd{#2}{\outsidespacerpgfcounterclockwise{#1}{#2}{#3}}
}
\def\outside#1#2#3{
  ($ (#2) ! #3 ! -90 : (#1) $)
}
\def\cornerpgf#1#2#3#4{
  \pgfextra{
    \pgfmathanglebetweenpoints{#2}{\outsidepgfcounterclockwise{#1}{#2}{#4}}
    \let\anglea\pgfmathresult
    \let\startangle\pgfmathresult

    \pgfmathanglebetweenpoints{#2}{\outsidepgfclockwise{#3}{#2}{#4}}
    \pgfmathparse{\pgfmathresult - \anglea}
    \pgfmathroundto{\pgfmathresult}
    \let\arcangle\pgfmathresult
    \ifthenelse{180=\arcangle \or 180<\arcangle}{
      \pgfmathparse{-360 + \arcangle}}{
      \pgfmathparse{\arcangle}}
    \let\deltaangle\pgfmathresult

    \newdimen\x
    \pgfextractx{\x}{\outsidepgfcounterclockwise{#1}{#2}{#4}}
    \newdimen\y
    \pgfextracty{\y}{\outsidepgfcounterclockwise{#1}{#2}{#4}}
  }
  -- (\x,\y) arc [start angle=\startangle, delta angle=\deltaangle, radius=#4]
}
\def\corner#1#2#3#4{
  \cornerpgf{\pgfpointanchor{#1}{center}}{\pgfpointanchor{#2}{center}}{\pgfpointanchor{#3}{center}}{#4}
}
\def\hedgeiii#1#2#3#4{
  \outside{#1}{#2}{#4} \corner{#1}{#2}{#3}{#4} \corner{#2}{#3}{#1}{#4} \corner{#3}{#1}{#2}{#4} -- cycle
}
\def\hedgem#1#2#3#4{
  
  \outside{#1}{#2}{#4}
  \pgfextra{
    \def\hgnodea{#1}
    \def\hgnodeb{#2}
  }
  foreach \c in {#3} {
    \corner{\hgnodea}{\hgnodeb}{\c}{#4}
    \pgfextra{
      \global\let\hgnodea\hgnodeb
      \global\let\hgnodeb\c
    }
  }
  \corner{\hgnodea}{\hgnodeb}{#1}{#4}
  \corner{\hgnodeb}{#1}{#2}{#4}
  -- cycle
}
\def\hedgeii#1#2#3{
  \hedgem{#1}{#2}{}{#3}
}
\title{On the Existence of Envy-Free Allocations \\ Beyond Additive Valuations}
 \author{
    Gerdus Benad\`e \\ 
    Boston University \\ 
    \texttt{\small{benade@bu.edu}}
    \and Daniel Halpern \\ 
    Harvard University \\ 
    \texttt{\small{dhalpern@g.harvard.edu}} 
    \and Alexandros Psomas \\ 
    Purdue University \\ 
    \texttt{\small{apsomas@cs.purdue.edu}} 
    \and Paritosh Verma \\ 
    Purdue University \\ 
    \texttt{\small{verma136@purdue.edu}}
}
\date{}
\begin{document}

\maketitle

\begin{abstract}

We study the problem of fairly allocating $m$ indivisible items among $n$ agents. Envy-free allocations, in which each agent prefers her bundle to the bundle of every other agent, need not exist in the worst case. However,   when agents have additive preferences and the value $v_{i,j}$ of agent $i$ for item $j$ is drawn independently from a distribution $D_i$,  envy-free allocations exist with high probability when $m \in \Omega( n \log n / \log \log n )$.

 In this paper, we study the existence of envy-free allocations under stochastic valuations far beyond the additive setting. We introduce a new stochastic model in which each agent's valuation is sampled by first fixing a worst-case function, and then drawing a uniformly random renaming of the items, independently for each agent. This strictly generalizes known settings;  for example, $v_{i,j} \sim D_i$ may be seen as picking a random (instead of a worst-case) additive function before renaming. We prove that random renaming is sufficient to ensure that envy-free allocations exist with high probability in very general settings. When valuations are non-negative and ``order-consistent,'' a valuation class that generalizes additive, budget-additive, unit-demand, and single-minded agents, SD-envy-free allocations (a stronger notion of fairness  than envy-freeness) exist for $m \in \omega(n^2)$ when $n$ divides $m$, and SD-EFX allocations exist for all $m \in \omega(n^2)$. The dependence on $n$ is tight, that is, for $m \in O(n^2)$ envy-free allocations don't exist with constant probability.
For the case of arbitrary valuations (allowing non-monotone,  negative, or mixed-manna valuations) and $n=2$ agents, we prove envy-free allocations exist with probability $1 - \Theta(1/m)$ (and this  is tight).

\end{abstract}

\newpage
\section{Introduction}

We consider the fundamental problem of fairly allocating a set $\items$ of $m$ indivisible items among a set $\agents$ of $n$ agents. Each agent $i$ has a valuation function $v_i : 2^\items \mapsto \mathbb{R}$, which maps each subset of items $S \subseteq \items$ to a value for $S$. In this domain, the gold standard of fairness is, arguably, \emph{envy-freeness}. An allocation $\alloc = (A_1, A_2, \ldots, A_n)$ is envy-free if each agent prefers her own bundle to the bundle of every other agent, that is, $v_i(A_i) \geq v_i(A_j)$ for all $i,j \in \agents$. It is easy to see that envy-free allocations do not exist in the worst case:    consider a single item and two agents valuing it positively. 

Motivated to circumvent this simple non-existence result, a line of research in fair division studies the existence of envy-free allocations under stochastic valuations. 
To date, this work has  focused on additive valuation functions.\footnote{A valuation function $v_i : 2^\items \mapsto \mathbb{R}$ is additive if $v_i(S \cup T) = v_i(S) + v_i(T)$ for all $S, T \subseteq \items$, $S \cap T = \emptyset$. An additive function $v_i$ can be succinctly represented with a value $v_{i,j}$ for each item $j \in \items$, such that $v_i(S) = \sum_{j \in S} v_{i,j}$ for all $S \subseteq \items$ and $i\in \agents$.}  
Dickerson et al.~\cite{dickerson2014computational} 
show that when agents have additive and non-negative valuation functions, and all item values are drawn independently from a distribution $\mathcal{D}$,  allocations that simultaneously satisfy envy-freeness and Pareto efficiency exist with high probability for $m \in \Omega(n \log n)$.\footnote{The result of~\cite{dickerson2014computational}, in fact, allows for limited correlation between the agents' values; see~\Cref{subsec:related} for details.} On the other hand, 
envy-free allocations do not exist with constant probability for $m \in n + o(n)$.  
In the same setting, Manurangsi and Suksompong~\cite{manurangsi2020envy} show that an envy-free allocation exists with high probability as long as $m\geq 2n$ and  $n$ divides $m$. When $m$ is not ``almost divisible'' by $n$,\footnote{Formally, when the remainder of the division is not between $n^\epsilon$ and $n - n^\epsilon$ for some constant $\epsilon \in (0,1)$.} an envy-free allocation is unlikely to exist for $m \in O(n \log n / \log \log n )$.
 Manurangsi and Suksompong~\cite{manurangsi2021closing} close this gap by proving that envy-free allocations exist with high probability when $m \in \Omega (n \log n / \log \log n )$. 
More recently, Bai and G{\"{o}}lz~\cite{bai2021envy} extend these bounds ($m \in \Omega (n \log n / \log \log n )$  for envy-freeness and $\Omega(n \log n)$   for envy-freeness plus Pareto efficiency) to the additive non-i.i.d.\ case where item values drawn independently from (agent specific) distributions $\mathcal{D}_i$.


Simply put, the goal of this paper is to study the existence of envy-free allocations under stochastic valuations beyond the additive case.

\subsection{Our Contribution}

In order to even pose the question of whether envy-free allocations exist beyond additive valuations, one (naturally) needs to first specify a stochastic model for generating such valuations. The aforementioned additive model is very natural: for every agent, we simply need to sample a value for each item, and additivity readily gives us the value for any subset of items we want. How should we go about sampling a, e.g., submodular valuation function?

The Bayesian setting is the dominant paradigm in mechanism design, where the worst-case lens fails to provide any useful insights or meaningful guarantees for various fundamental problems, e.g. revenue maximization. One standard ``beyond additive''  model in Bayesian mechanism design, e.g. in the literature on prophet inequalities (see~\cite{lucier2017economic} for a survey), assumes that valuation functions are drawn from distributions over families of functions (e.g. submodular or XOS functions). Another standard model, e.g. in auctions, is the ``$\mathcal{C}$ over independent items'' model for a condition $\mathcal{C}$ (e.g. ``subadditive over independent items''), introduced by~\cite{rubinstein2018simple}, where, informally, an agent's valuation function is parameterized by a vector of types, drawn from a product distribution.\footnote{See~\cite{rubinstein2018simple,rubinstein2017combinatorial,cai2017simple,chawla2016mechanism} for more details  about this model.} Unfortunately, in both these models it is easy to pick a valuation function/distribution over valuation functions such that item $1$ has more value than all other items combined, in which case envy-free allocations trivially don't exist; we need a new approach. Our first insight is that, to bypass such trivial lower bounds, we need a stochastic model that is \emph{neutral} with respect to items, that is, there is no a priori ``discrimination'' between items.

\paragraph{Our model.} 
Our first contribution is to introduce such a neutral model for stochastic valuations. First, fix a worst-case valuation function for each agent. Then, rename the items uniformly at random and independently across agents.  
Slightly more formally,  fix a worst-case valuation function $v_i: 2^{|\items|}\rightarrow \mathbb{R}$ for agent $i$, and sample a uniformly random permutation $\pi_i: \items \rightarrow \items$. The valuation $v_i^{\pi_i}(S)$ for a subset of items $S$ after renaming is equal to $v_i( \pi_i^{-1}(S) )$.

Observe that this simple model generalizes the standard stochastic additive setting where $v_{i,j} \sim \mathcal{D}_i$, that is, positive results in our setting imply positive results in the old setting. To see why this is the case, observe that the distribution over values is invariant with respect to taking a random permutation of the items (renaming). Therefore, sampling a random additive function (instead of a worst-case additive function) and then renaming the items at random, is equivalent to sampling according to the old setting. 

Initially, it might appear that our model should still allow trivial lower bounds, since renaming does not tell us anything about how \emph{values} for bundles concentrate. This is crucial, since concentration of values was necessary for the analysis of all (to the best of our knowledge) previous ``envy-free with high probability'' results. So, consider, for example, an identical additive function ($v_{i,j} = 1$ for all $i \in \agents, j \in \items$), where random renaming has no power. For this function, a necessary condition for envy-free allocations to exist is that $m$ is divisible by $n$. For an arbitrary worst-case function, one may naturally expect additional conditions (beyond divisibility) for envy-free allocations to exist with high probability. Surprisingly,  this is not the case: divisibility is \emph{sufficient} for strong positive results in our model. 

\paragraph{EF allocations under arbitrary valuations.} In \Cref{sec: general}, we stress-test our model. We study \emph{arbitrarily general} valuations for the case of $n=2$ agents. We impose no constraint on the valuation function: it can be superadditive, non-monotone, or negative for some bundles and positive for others. By picking a valuation function such that bundles of size strictly less than $m/2$ and strictly more than $m/2$ are worthless and noticing that random renaming will not affect this property, we conclude that, if envy-free allocations exist, they must allocate exactly $m/2$ items to each agent (and therefore, $m$ must be even). Surprisingly, in \Cref{theorem:ef-whp-gen} we prove that, for an even number of items and two agents with arbitrary (!) valuation functions, envy-free allocations exist with high probability. Specifically, the probability that an envy-free allocation exists after random renaming is $1 - \frac{1}{m/2+1}$; this bound is almost tight, as there exist instances with additive valuations such that an envy-free allocation doesn't exist with probability $1/m$.


The proof of~\Cref{theorem:ef-whp-gen} is based on the following insight: a valuation function can be represented as an $m/2$-uniform hypergraph on $m$ vertices, where a hyperedge of size $m/2$ indicates that the corresponding subset of items is preferred to its complement. In this representation, a random permutation of items' names corresponds to picking a random, isomorphic hypergraph. Whenever the hypergraphs corresponding to the two agent's valuation functions are not identical an envy-free allocation exists --- allocate $i$ the bundle corresponding to the hyperedge present in $i$'s hypergraph and not $j$'s.
Using the orbit-stabilizer theorem, we reduce our question about the existence of envy-free allocations to a question about the number of  automorphisms  of  $k$-uniform hypergraphs. In \Cref{theorem:auto-bound}, which may be of independent interest, we give the main technical ingredient needed to establish \Cref{theorem:ef-whp-gen}: we prove that the number of automorphisms of a $k$-uniform hypergraph on $m$ vertices is at most $\frac{m!}{m-k+1}$.

\paragraph{EF allocations under order-consistent valuations.}
In \Cref{sec:order} we consider a structured valuation class that strictly generalizes additive valuations. Given an order over the items $\pi$, we say that a subset of items $A$ stochastically dominates a subset of items $B$, denoted as $A \sdpref_\pi B$, when the best item in $A$ (according to $\pi$) is weakly better than the best item in $B$,  the second best item in $A$ is weakly better than the second best item in $B$, and so on.
We say that a valuation function $v$ is \emph{order-consistent} with respect to $\pi$ if, for all bundles $A, B \subseteq \items$,  $A \sdpref_\pi B$ implies that $v(A) \geq v(B)$. By picking $\pi$ to be the items sorted in order of decreasing  value, it is clear that additive valuations are order-consistent. Similarly, budget additive,\footnote{A valuation is budget additive with budget $B$ if each item $j$ has value $v_{j}$ and the value of   $S\subseteq \items$ is $v(S) = \min\{B, \sum_{j\in S} v_j\}$. 
} single-minded,\footnote{A valuation is single-minded if there is a subset $S^*$, such that $v(S) = v(S^*) > 0$ for all $S \supseteq S^*$, and $v(S) = 0$ otherwise.} and unit-demand\footnote{A valuation is unit-demand if there is a value $v_j$ for each item $j$, and the value for a subset of items $S$ is equal to $\max_{j \in S} v_j$.} valuations are also order-consistent. Even though seemingly restrictive, the class of order-consistent valuation functions is incomparable with large valuation classes, e.g., subadditive valuations (that is, there exist order-consistent valuations that are not subadditive).
The class of order-consistent valuations was also considered by Bouveret et al.~\cite{bouveret2010fair}, who study various algorithmic and complexity questions (in the worst-case model).

We prove that, given $n$ agents with arbitrary order-consistent valuation functions over $m$ items, where $m$ is divisible by $n$, the probability that an envy-free allocation exists after random renaming is at least $1 - O\left(\frac{n^2}{m} + \frac{n \log m}{m^{\frac{n - 1}{n}}} \right)$ (\Cref{thm:positive for order consistent}). In fact, we prove the existence of a (much) stronger notion of fairness, SD-envy-freeness~\cite{bogomolnaia2001new}: an allocation $\alloc$ is SD-envy-free if, for all agents $i,j \in \agents$, $A_i \sdpref_{\pi_i} A_j$. Note that, if an allocation is SD-envy-free, then it is envy-free for all additive utility functions consistent with the agents' preferences~\cite{aziz2015fair}. As a corollary to~\Cref{thm:positive for order consistent}, we get that for $m \in \omega(n^2)$, SD-envy-free allocations exist with high probability. By ``high probability'' we mean that for all $n$ and $\delta > 0$, there exists a $m_0 = m_0(n,\delta)$ such that, for all $m > m_0$, the probability is at least $1-\delta$. 

Our proof of~\Cref{thm:positive for order consistent} is constructive; we show that a simple Round-Robin process (agents take turns picking the best, according to their order, available item) produces such an allocation. Specifically, for arbitrary agents $i, j\in \agents$, we upper bound the probability that $i$ does not sd-prefer their bundle over $j$'s, i.e.\ $A_i \notsdpref_{\pi_i} A_j$. For this to happen, there must be some $1\leq k\leq m/n$ such that $i$ prefers $j$'s $k$-th best item (according to $i$) over $i$'s $k$-th best item. At a high level, we'd like to compute this probability, and then take a union bound over $k$ (and then another union bound over all pairs of agents). Notice that for $k=1$ this probability is already $\Theta(n/m)$, so, in order to afford all the union bounds, it better be the case that the true probabilities of the bad events are much smaller than $n/m$, and that our analysis is relatively tight. For all items $k < m/n$, i.e.\ all items except the last one picked by $i$, we can directly upper bound the probability that $i$ prefers $j$'s $k$-th best item. The analysis leverages the insight that, from $i$'s perspective, and over the random draws of $\pi_{-i}$, items picked by other agents look like (uniformly) random selections from the pool of remaining items; therefore, the distribution of other agents' bundles is identical. For $i$ to prefer $j$'s $k$-th best item over their own, $j$ must have selected $k$ items all better than $i$'s $k$-th pick, which can only occur if these items have all been picked in rounds $1$ through $k$; this event is unlikely. The precise bound is $\Theta(1/{\binom{m/n}{k}})$ (see Lemma~\ref{lem: bound for k < q, EF}), and requires carefully accounting for the items remaining at each step of Round-Robin, coupled with careful applications of known facts about the gamma function (e.g. Gautchi's inequality~\cite{gautschi1959some}). When $k=m/n$, previous arguments fail to yield a sufficiently small probability. Instead, define $L$ as the set of (roughly) $3n\log{m}$ worst items of agent $i$. We show that neutrality implies that, with high probability, by the time $i$ picks their $m/n$-th and worst item, all items in $L$ have been picked by others. When this happens, $i$ must like their worst item more than the worst (from $i$'s perspective) item of every agent who picked an item in $L$. Moreover, conditioned on all items in $L$ being picked by the time of $i$'s last pick, it is very likely that every other agent received at least one of these items in $L$.

Since $|A_i| > |A_j|$ immediately implies that $A_j \not\succeq_{\pi_j} A_i$,   $m$ being divisible by $n$ is a necessary condition for envy-freeness. Moreover,  standard birthday paradox arguments imply   (even for unit-demand valuations) that envy-free allocations may not exist with constant probability  for $m \in O(n^2)$. Therefore, \Cref{thm:positive for order consistent} is tight both in terms of the divisibility assumption and the bound on $m$.

When $n$ does not divide $m$,  we guarantee a notion weaker than $\SDEF$, called $\SDEFX$. In an $\EFX$ allocation, it holds that every agent $i$ does not envy a different agent $j$ after the removal of any item from $j$'s bundle; the definition of $\SDEFX$ (with respect to EFX) is analogous to the definition of $\SDEF$ (with respect to EF). $\EFX$ is considered the ``best fairness analog of envy-freeness'' in discrete fair division~\cite{caragiannis2019envy}. The existence of $\EFX$ allocations  remains an elusive open problem. In contrast, $\SDEFX$ allocations (which is a strictly stronger notion of fairness) do not exist in the worst-case, even for additive valuations (see Appendix \ref{appendix:no-sdefx}). We prove that, for the more general class of order-consistent valuations, $\SDEFX$ allocations exist with probability at least $1 - O\left(\frac{n^2}{m} + \frac{n \log m}{m^{\frac{n - 1}{n}}} \right)$.

Though the relation between $m$ and $n$ is asymptotically tight in~\Cref{thm:positive for order consistent}, we show that it is possible to get better probability bounds for the important case when $n$ is small, i.e., when $n$ is a constant. Taking, for example, the case of $n=2$, Round-Robin will not find an $\SDEF$ allocation with probability $\Omega(1/\sqrt{m})$, while a careful analysis would say that such allocations don't exist with probability $\Omega(1/m)$. This difference is caused by the fact that if $n$ is small, the last item in Round-Robin is given to the  ``wrong'' person. This motivates a new algorithm to close this gap. Our algorithm,  ``Give-Away Round-Robin,'' initially has every agent give each other agent their least desired remaining item and then proceeds with  the standard Round-Robin algorithm.
The probability of Round-Robin failing is greatest at the beginning and end of the process: at the beginning of the process, $\SDEF$ fails if any other agent picks $i$'s most preferred item; towards the end, there is a risk that all the remaining items are bad. 
Give-Away Round Robin gains by ensuring that every other agent receives one of $i$'s worst items, making sure that agent $i$ does not receive her least preferred items. At the same time, the analysis becomes significantly trickier. 
For Round-Robin we could be certain that, other than the last item, items picked by agent $i$ are ``good'' in the sense that they were the best in some pool of items. For Give-Away Round-Robin, each agent essentially starts with a small, random set of items. Therefore, arguing that SD-EF does not fail because of, say,  the first ten items is not clear at all. In our analysis, we consider various cases (failure because of ``high,'' ``middle,'' ``low,'' and last items), which need delicate, separate handling. In~\Cref{thm:give-away}, we show that, when $m$ is divisible by $n$, Give-Away Round-Robin outputs an $\SDEF$ allocation with probability at least $1 - \tilde{O}_n\left(\frac{1}{m} \right)$.

\subsection{Related Work}\label{subsec:related}

Dickerson et al.~\cite{dickerson2014computational} initiated the study of asymptotic fair division and showed that the welfare maximizing algorithm (allocate each good to the agent with the highest value for it) is envy-free with high probability for $m \in \Omega(n \log n)$ for the case of additive, non-negative valuations. This result holds when items' values are drawn i.i.d. from a common prior, but also when agents are correlated, but, for every item $j$: (1) the probability that agent $i$ has the highest value for $j$ is exactly $1/n$, and (2) the expected value of $v_{i,j}$ conditioned on $i$ having the highest value for $j$, is bigger, by constant, than the expected value of $v_{i,j}$ conditioned on some other agent having the highest value for $j$. Manurangsi and Suksompong~\cite{manurangsi2020envy,manurangsi2021closing} establish tight bounds for the existence of envy-free allocations in the i.i.d. model: $m \in \Omega (n \log n / \log \log n )$ is a necessary and sufficient condition; similar to our~\Cref{thm:positive for order consistent}, this bound is achieved by the classic Round-Robin algorithm.  Bai and G{\"{o}}lz~\cite{bai2021envy} extend these results to the case of independent but non-identical additive agents. 
Kurokawa et al.~\cite{kurokawa2016can}, Amanatidis et al.~\cite{amanatidis2017approximation}, and Farhadi et al.~\cite{farhadi2019fair} show that weaker notions of fairness, namely maximin share fairness, also exist with high probability. Farhadi et al.~\cite{farhadi2019fair} also study a ``stochastic items'' model, where every item $j$ has a probability distribution $\mathcal{D}_j$ (and $v_{i,j}$ is drawn from $\mathcal{D}_j$); interestingly, this model is not neutral. Suksompong~\cite{suksompong2016asymptotic} shows that proportional allocations exist with high probability, while Manurangsi and Suksompong~\cite{manurangsi2017asymptotic} study envy-freeness when items are allocated to groups rather than to individuals. For non-additive stochastic valuations, Manurangsi and Suksompong~\cite{manurangsi2021closing} and Gan et al.~\cite{gan2019envy} study the so-called \emph{house allocation} problem, where there are $m$ houses, $n$ agents, with $v_{i,j} \sim D_i$ for each agent $i$ and house $j$, and each agent must be allocated a single house; equivalently, one can think of unit-demand valuations and independently picking a uniformly random ranking of the items for each agent. The former paper shows that an envy-free assignment is likely to exist if $m/n > e + \epsilon$, for any constant $\epsilon > 0$, but unlikely to exist if $m/n < e - \epsilon$.

The existence of fair allocations, and their compatibility with efficiency, has also been studied in dynamic settings with additive valuations \cite{benade2018make}. The algorithms of Dickerson et al.~\cite{dickerson2014computational} and Bai and G{\"{o}}lz~\cite{bai2021envy} can be readily applied even when items arrive in an online fashion, implying that Pareto efficiency ex-post and envy-freeness with high probability are compatible even in an online setting. The same result was recently shown to be (approximately) true even when the designer doesn't have access to the exact values of the agents~\cite{benade2022dynamic}, or even when agents' valuations are correlated~\cite{zeng2020fairness,benade2022fair}.

Bouveret et al.~\cite{bouveret2010fair}  study preferences represented by SCI-nets, which take  the form of a partial order over bundles induced by a linear ordering over items. Order-consistent valuations are exactly those where pairwise dominance of bundles implies envy-freeness. As opposed to our interest here,~\cite{bouveret2010fair} are interested in computational aspects of the problem (e.g., the complexity of computing an envy-free allocation in the worst-case).


A related research direction is smoothed analysis. Traditionally, in smooth analysis, an instance is generated by starting from a worst-case instance and adding a small amount of noise. This model has been widely successful in circumventing computational hardness~\cite{spielman2004smoothed,blum1995coloring,feige1998heuristics,kolla2011play}. More recently, this model has been used to bypass impossibility results in social choice~\cite{xia2020smoothed,xia2021semi,flanigan2022smoothed,xia2021semi}, mechanism design~\cite{psomas2019smoothed,blum2021incentive,blum2017opting}, and, closer to this paper, fair division~\cite{bai2022fair}. Specifically, Bai et al.~\cite{bai2022fair} start from a worst-case instance for additive agents and add an independent boost to the value of each item for each agent; the authors give sufficient conditions (on the boosting) for envy-free allocations to exist with high probability. 
One could   interpret our model as a smoothed/semi-random model  which starts from a worst-case instance and adds noise in the form of random renaming. In contrast with existing models, we permute the items and leave the valuation function untouched. 


\section{Preliminaries}

We consider the problem of dividing a set $\items$ of $m$ indivisible items among a set $\agents$ of $n \ge 2$ agents. Throughout, we assume that $\agents = \{1,2, \ldots, n\}$ and the items are indexed by $1, 2, \ldots, m$. 
 An allocation $\alloc = (A_1, A_2, \ldots, A_n)$ is an $n$-partition of $\items$, where $A_i$ denotes the set of items allocated to agent $i \in \agents$. That is, in any allocation $\alloc$, all items are allocated,\footnote{Without this restriction, envy-free allocations trivially exist, since we can simply not allocate any items.} $\bigcup_{i \in \agents} A_i = \items$, and each item is allocated to only one agent, $A_i \cap A_j = \emptyset$ for all distinct agents $i, j \in \agents$. We sometimes refer to a subset of items as a  \emph{bundle}.

Each agent $i \in \agents$ has a valuation function $v_i : 2^\items \mapsto \mathbb{R}$, which maps each subset of items $S \subseteq \items$ to $v_i(S)$, the agent's value for $S$. 
We are interested in proving the existence of allocations that are fair with respect to the agent valuations. Our primary notion of fairness is envy-freeness.
 \begin{definition}[Envy-freeness]
     An allocation $\alloc = (A_1, \ldots, A_n)$ is \emph{envy-free (EF)} iff each agent prefers her own bundle over the bundle of any other agent, i.e., for all agents $i,j \in \agents$, we have $v_i(A_i) \geq v_i(A_j)$.
 \end{definition}

 For the case of indivisible items, envy-freeness can be too stringent of a requirement. A common relaxation is called envy-freeness up to any good, or EFX. 
 \begin{definition}[Envy-freenesss up to any good (EFX)]
     An allocation $\alloc$ is \emph{envy-free up to any good (EFX)} iff for all agents $i, j \in \agents$ where $A_j \ne \emptyset$, we have $v(A_i) \ge v(A_j \setminus \set{g})$ for all items $g \in A_j$.
 \end{definition}
 

In \Cref{sec:order}, we focus on a class of valuation functions that are consistent with an underlying preference order over the items. Given an ordering of the items $\pi = g_1 \succ g_2 \succ \cdots \succ g_m$, we can define a partial order over bundles $\sdpref_\pi$ as follows. Let $A,\hatA \subseteq \items$ be any two bundles such that $A = \{g_{\sigma_1}, g_{\sigma_2} \ldots, g_{\sigma_{|A|}} \}$ and $A = \{g_{\hatsig_1}, g_{\hatsig_2} \ldots, g_{\hatsig_{|\hatA|}} \}$ where sequences $\sigma$ and $\hatsig$ are sorted in an increasing order. We say that $A \sdpref_\pi \widehat{A}$ iff 
$|A| \geq |\hatA|$ and for all $k \in \{1, 2, \ldots, |\hatA|\}$, we have $\sigma_k \le \hatsig_k$. 
Additionally, $A \sdprefneq_\pi \widehat{A}$ iff $A \sdpref_\pi \widehat{A}$ and $A \neq \hatA$.

\begin{definition}[Order-consistency]
A valuation function $v: 2^\items \mapsto \mathbb{R}$ is called \emph{order-consistent with respect to an ordering $\pi$} iff for all bundles $A,B \subseteq \items$,  $A \sdpref_\pi B$ implies that $v(A) \geq v(B)$. If $v$ is order consistent with respect to some order $\pi$, we will simply call $v$ order-consistent.
\end{definition}




\begin{definition}[$\SDEF$]
Suppose that each agent $i \in \agents$ has a valuation function $v_i$ which is order-consistent with respect to $\pi_i$. An allocation $\alloc$ is \emph{sd-envy-free (SD-EF)}  iff for all $i,j \in \agents$ we have $A_i \sdpref_{\pi_i} A_j$.
\end{definition}

$\SDEF$ allocations are also $\EF$. In fact,   an allocation $\alloc$ which is $\SDEF$ for agents with valuation functions $v_1, v_2, \ldots, v_n$ that are order-consistent with respect to $\pi_1, \pi_2, \ldots, \pi_n$, respectively, is  $\EF$ for agents all  valuation functions $\widehat
{v}_1, \widehat{v}_2, \ldots, \widehat{v}_n$ that are order-consistent with respect to $\pi_1, \pi_2, \ldots, \pi_n$ respectively. 
When an allocation $\alloc = (A_1, A_2, \ldots, A_n)$ is $\SDEF$,  it follows that  $|A_i| = |A_j|$ for all $i,j \in \agents$, i.e., the number of items are a multiple of the number of agents, $m = qn$ for some integer $q \in \mathbb{N}$. When $m \neq qn$ (for $q\in \mathbb{N}$), we  consider the analogous strengthening of EFX.

\begin{definition}[$\SDEFX$]
    Suppose   each agent $i \in \agents$ has a valuation function $v_i$ that is order-consistent with respect to $\pi_i$. An allocation $\alloc$ is \emph{sd-envy-free up to any item} ($\SDEFX$) iff for all agents $i,j \in \agents$, where $A_j \neq \emptyset$, we have $A_i \sdpref_{\pi_i} A_j\setminus \{g\}$ for all items $g \in A_j$.
\end{definition}

As with SD-EF and EF,  an  allocation $\alloc$  which is SD-EFX is also $\alloc$ is EFX. While the existence of $\EFX$ remains unknown, in Appendix \ref{appendix:no-sdefx}, we give an instance for which $\SDEFX$ allocations do not exist.


Finally, if $v$ is a valuation function and $\pi$ is a permutation over items, we will use the notation $v^\pi$ to represent the permuted valuation function after random renaming, where $v^{\pi}(S) \coloneqq v(\pi^{-1}(S))$, and $\pi(S) \coloneqq \set{\pi(g): g \in S}$. If $v$ is an order-consistent valuation, then    $v^\pi$ is   order-consistent   with respect to  $\pi$.



\section{Arbitrary Valuations}\label{sec: general}

In this section, we consider the case of two agents having arbitrary set functions as their valuation functions. Such valuation functions need not be monotone and are general enough to capture all well-studied settings, including the fair division of goods, chores, or mixed manna under additive, subadditive, or superadditive preferences, and so on. Our main result is that $\EF$ allocations exist with probability at least $1-\frac{1}{m/2 + 1}$ when the number of items, $m$, is even. 

When the number of items is odd, there are valuation functions for which $\EF$ allocations cannot exist. For example, consider the case of two identical, additive agents whose value for every single item is equal to $1$, where random renaming has no effect. For such a valuation function, if $m$ is odd, then envy-free allocations do not exist. And, if $m$ is even, every envy-free allocation is, in fact, \emph{balanced}, i.e., each agent receives a bundle of the same size $m/2$. In~\Cref{theorem:ef-whp-gen}, we show that balanced EF allocations exist with high probability for arbitrary valuation functions when $m$ is even. As we'll see later in this section, looking for a balanced EF allocation will allow us to reduce our problem to a question about the size of automorphisms in $k$-uniform hypergraphs.
First, in the following observation, we show that, without loss of generality, we may assume that each agent has a strict preference between a set and its complement.





\begin{observation}\label{observation:balanced}
    If any agent has a valuation function $v:2^\items \to \mathbb{R}_{\geq 0}$ such that, for some $S \subseteq \items$, it holds that $v(S) = v(\overline{S})$, then  an $\EF$ allocation exists with probability $1$.
\end{observation}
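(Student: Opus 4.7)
The plan is to exhibit a deterministic EF allocation that works for \emph{every} realization of the random renamings, which clearly implies existence with probability $1$. The key observation is that the property ``there is a set with the same value as its complement'' is preserved under renaming.

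Concretely, suppose without loss of generality that agent $1$ has a valuation $v_1$ and a set $S \subseteq \items$ with $v_1(S) = v_1(\overline{S})$. After sampling the permutations $\pi_1, \pi_2$, let $T \coloneqq \pi_1(S)$. Since $\pi_1$ is a bijection, $\overline{T} = \pi_1(\overline{S})$, and therefore
\[
v_1^{\pi_1}(T) = v_1\bigl(\pi_1^{-1}(T)\bigr) = v_1(S) = v_1(\overline{S}) = v_1\bigl(\pi_1^{-1}(\overline{T})\bigr) = v_1^{\pi_1}(\overline{T}).
\]
So agent $1$ is indifferent between $T$ and $\overline{T}$ under her renamed valuation, regardless of the draw of $\pi_1$.

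Now look at agent $2$'s renamed valuation applied to the same pair $(T, \overline{T})$: one of $v_2^{\pi_2}(T), v_2^{\pi_2}(\overline{T})$ is weakly larger than the other. Give that (weakly preferred) bundle to agent $2$ and the other bundle to agent $1$. Agent $2$ does not envy agent $1$ by the choice just made, and agent $1$ does not envy agent $2$ because she values $T$ and $\overline{T}$ identically. Thus this partition is $\EF$ for every realization of $(\pi_1,\pi_2)$, yielding probability $1$.

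There is essentially no technical obstacle here; the only thing to be careful about is that the argument uses only the two fixed bundles $T$ and $\overline{T}$ defined from the given $S$, so nothing in the construction depends on which permutations were drawn. The analogous statement holds verbatim if the agent witnessing $v(S)=v(\overline{S})$ is agent $2$ rather than agent $1$, by swapping roles.
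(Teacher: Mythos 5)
Your proof is correct and uses essentially the same argument as the paper: set $T = \pi_1(S)$, observe agent~$1$ is indifferent between $T$ and $\overline{T}$ under the renamed valuation, and hand agent~$2$ whichever of the two she weakly prefers. The only difference is that you spell out the chain of equalities $v_1^{\pi_1}(T) = v_1^{\pi_1}(\overline{T})$ that the paper leaves implicit.
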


\begin{proof}
    Without loss of generality, let $v$ be the valuation function of agent $1$. For any permutation $\pi_1$, agent $1$ is indifferent between the bundles $\pi_1(S)$ and $\pi_1(\bar{S})$, therefore, either the allocation $(\pi_1(S), \pi_1(\bar{S}))$ or $(\pi_1(\bar{S}), \pi_1(S))$ is envy-free. 
\end{proof}

We henceforth assume that $(i)$ the number of items, $m$, is even; and $(ii)$ the valuation function $v_i$ of each agent $i$ is such that, for every subset $S$ with $|S| = m/2$, $v_i(S) \neq v_i(\bar{S})$. Our proof is based on two key insights. First, (strict) preferences for bundles of size $m/2$ can be conveniently  represented as uniform hypergraphs. Second, we can formulate questions about the probability of an envy-free allocation existing as   questions about the number of hypergraph automorphisms. 

\subsection{Representing preferences as hypergraphs}
We use the following hypergraph representation of arbitrary set function preferences. Given a valuation function $v$ such that $v(S) \neq v(\bar{S})$ for all $S\subseteq \items, |S| = m/2,$ let $\hgraph^v = (\items, E_v)$ be the hypergraph where $E_v \coloneqq \{S \subset \items : |S| = m/2 ~~ \& ~~ v(S) > v(\overline{S})\}$. $\hgraph^v$ is a $m/2$-uniform hypergraph\footnote{A hypergraph $\hgraph = (V,E)$ is $k$-uniform when  $|e| = k$ for all $e \in E$.} with $m$ vertices, one for each item, and $\binom{m}{m/2}/2$ edges, one for each set of size $m/2$  that is preferred to its complement. 

Given a hypergraph $\hgraph = (\items, E)$ and a permutation $\pi: \items \to \items$, we will use $\pi(\hgraph) = (\items, E_\pi)$ to denote the hypergraph obtained by permuting the set of vertices by $\pi$, i.e., $E_\pi = \{\pi(S) : S \in E\}$. As an example, consider the valuations and corresponding hypergraphs shown in Figure~\ref{fig:example-hypergraph}.  Each preferred bundle of size 2 corresponds to an edge in the hypergraph, representing the corresponding agent's preferences. Since $\hgraph^1 \neq \hgraph^2$ there exists an envy-free allocation, in this case $(\{1,3\}, \{2,4\})$. 

\begin{figure}[htb]
    \centering
    \begin{tabular}{cc}
    \toprule
    $S$ & $\bar{S}$ \\
    \midrule
         ${\boxed{\underline{1,2}}}$ & $3,4$ \\
         $\boxed{1,3}$ & \underline{$2,4$} \\
         $\boxed{\underline{1,4}}$ & $2,3$ \\
    \bottomrule
    \end{tabular} $\quad$
    \adjustbox{valign=c}{
    \begin{tikzpicture}
        [
            he/.style={draw,  semithick},        
            ce/.style={draw, densely dotted, semithick}, 
            de/.style={draw, densely dashed, semithick}, 
        ]
        \node (5) at (-2,) {};
        \node (6) at (-1,1) {$H^{1}:$};
        \node (4) at (0,0) {$4$}; 
        \node (1) at (0,1) {$1$};
        \node (2) at (1,1) {$2$};
        \node (3) at (1,0) {$3$};
        
        \draw[ce] \hedgeii{1}{2}{3mm};
        \draw[ce] \hedgeii{1}{3}{3mm};
        \draw[ce] \hedgeii{1}{4}{3mm};
        \vspace{-1cm}
    \end{tikzpicture}
    \begin{tikzpicture}
        [
            he/.style={draw,  semithick},        
            ce/.style={draw, densely dotted, semithick}, 
            de/.style={draw, densely dashed, semithick}, 
        ]
        \node (5) at (-2,0) {};
        \node (6) at (-1,1) {$H^{2}:$};
        \node (4) at (0,0) {$4$}; 
        \node (1) at (0,1) {$1$};
        \node (2) at (1,1) {$2$};
        \node (3) at (1,0) {$3$};
        
        \draw[ce] \hedgeii{1}{2}{3mm};
        \draw[ce] \hedgeii{2}{4}{3mm};
        \draw[ce] \hedgeii{1}{4}{3mm};
        \vspace{-1cm}
        \end{tikzpicture}
        \begin{tikzpicture}
        [
            he/.style={draw,  semithick},        
            ce/.style={draw, densely dotted, semithick}, 
            de/.style={draw, densely dashed, semithick}, 
        ]
        \node (5) at (-2,0) {};
        \node (6) at (-1,1) {$\pi(H^{1}):$};
        \node (4) at (0,0) {$4$}; 
        \node (1) at (0,1) {$1$};
        \node (2) at (1,1) {$2$};
        \node (3) at (1,0) {$3$};
        
        \draw[ce] \hedgeii{1}{2}{3mm};
        \draw[ce] \hedgeii{2}{4}{3mm};
        \draw[ce] \hedgeii{2}{3}{3mm};
        \vspace{-1cm}
        \end{tikzpicture}
    }
    \caption{Example with $m=4$ items and $n=2$ agents where the hypergraphs corresponding to the valuations of agent $1$, $2$ are $\hgraph^{1}$, $\hgraph^{2}$ respectively.
    The leftmost table shows all bundles of size $m/2 =2$ and their complements; the preferred bundles of   agent $1$ ($2$)   are boxed (underlined, respectively).
    The rightmost figure shows the hypergraph $\pi(\hgraph^1)$ obtained by renaming $\hgraph^1$ using permutation $\pi = (2,1,3,4)$.
    } 
    \label{fig:example-hypergraph}
\end{figure}
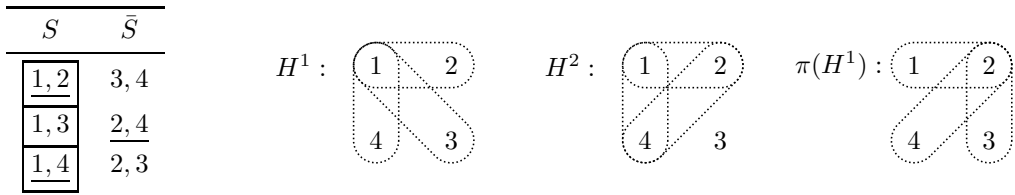


The next lemma formalizes the relationship between envy-free allocations and the hypergraph representation of valuations.


\begin{lemma}\label{lemma:hgraph-and-ef}
    Let $v_1, v_2 : 2^\items \to \mathbb{R}_{\geq 0}$ be two valuation functions with corresponding hypergraphs $\hgraph^{v_1} = (\items, E_{v_1})$ and $\hgraph^{v_2} = (\items, E_{v_2})$, respectively. If the hypergraphs are not identical, i.e. $\hgraph^{v_1} \neq \hgraph^{v_2}$, then an envy-free allocation exists.
\end{lemma}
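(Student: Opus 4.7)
The plan is very short: exhibit an explicit envy-free allocation using the set that witnesses $\hgraph^{v_1} \neq \hgraph^{v_2}$.

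First I would recall the standing assumption that both agents have strict preferences between every bundle of size $m/2$ and its complement. This means that for each unordered pair $\{S, \overline{S}\}$ with $|S| = m/2$, exactly one of the two sets belongs to $E_{v_i}$; concretely, $S \in E_{v_i}$ iff $v_i(S) > v_i(\overline{S})$, and $\overline{S} \in E_{v_i}$ iff $v_i(\overline{S}) > v_i(S)$.

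Next, since $\hgraph^{v_1} \neq \hgraph^{v_2}$, the edge sets $E_{v_1}$ and $E_{v_2}$ are not equal, so there exists some $S \subseteq \items$ with $|S| = m/2$ that lies in the symmetric difference. Without loss of generality assume $S \in E_{v_1}$ and $S \notin E_{v_2}$. By the observation in the previous paragraph, this translates to $v_1(S) > v_1(\overline{S})$ and, because agent $2$'s preference over $\{S,\overline{S}\}$ is strict and not in favor of $S$, $v_2(\overline{S}) > v_2(S)$.

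Finally I would conclude by proposing the allocation $\alloc = (A_1, A_2) = (S, \overline{S})$. Agent $1$ receives $S$ and satisfies $v_1(A_1) = v_1(S) > v_1(\overline{S}) = v_1(A_2)$, while agent $2$ receives $\overline{S}$ and satisfies $v_2(A_2) = v_2(\overline{S}) > v_2(S) = v_2(A_1)$. Hence $\alloc$ is envy-free, completing the proof. There is no real obstacle here; the lemma is essentially a definition-chasing step whose only purpose is to reduce the existence of an EF allocation to the non-coincidence of the two hypergraphs, setting up the later counting argument via hypergraph automorphisms.
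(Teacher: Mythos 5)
Your proof is correct and follows essentially the same approach as the paper: pick a set $S$ of size $m/2$ in the symmetric difference of the two edge sets and allocate $(S, \overline{S})$ accordingly. The only cosmetic difference is that you invoke WLOG to collapse the two symmetric cases the paper handles explicitly.
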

\begin{proof}
    Since $\hgraph^{v_1} \neq \hgraph^{v_2}$, there must exist an edge $S \subset \items$, $|S| = m/2$, present in one of the hypergraphs that is not present in both. 
      If $S \in E_{v_1}$ and $S \notin E_{v_2}$, then  the allocation $\alloc = (S, \overline{S})$ is envy-free, since $v_1(S) > v_1(\overline{S})$ and $v_2(S) < v_2(\overline{S})$. Similarly, if $S \notin E_{v_1}$ and $S \in E_{v_2}$, then the allocation $\mathcal{B} = (\overline{S}, S)$ is envy-free. 
      Thus, envy-free allocations exist when $\hgraph^{v_1} \neq \hgraph^{v_2}$.
\end{proof}

\subsection{Relating envy-freeness to hypergraph isomorphism}

When agent $i$ with valuation function $v_i$ and corresponding hypergraph $\hgraph^{v_i}$ renames the items by applying permutation $\pi$, then the hypergraph $\hgraph^{v_i^\pi}$ corresponding to the new valuation function $v^\pi$ is   $\hgraph^{v_i^\pi} = \pi(\hgraph^{v_i})$. That is, permuting the names of the items amounts to permuting the vertices of the hypergraph. This follows directly from the following sequence of equivalences: subset $S$ is present as an edge in $\hgraph^{v_i^\pi}$ iff $v_i^\pi(S) > v_i^\pi(\overline{S})$ iff $v_i(\pi^{-1}(S)) > v_i(\pi^{-1}(\overline{S}))$ (from the definition of $v_i^\pi$) iff $\pi^{-1}(S)$ is present as an edge in $\hgraph^{v_i}$ iff $S$ is present as an edge in $\pi(\hgraph^{v_i})$.

In our setting, agents $1$ and $2$, with valuation functions $v_1$ and $v_2$, apply permutations $\pi_1$ and $\pi_2$ to rename the items. As per \Cref{lemma:hgraph-and-ef}, an envy-free allocation will exist whenever $\pi_1(\hgraph^{v_1}) \neq \pi_2(\hgraph^{v_2})$. Since applying permutations $\pi_1, \pi_2$ does not change the structure of the hypergraph, 
$\pi_1(\hgraph^{v_1}) = \pi_2(\hgraph^{v_2})$ is only possible when 
$ \hgraph^{v_1}$ and $\hgraph^{v_2}$ are isomorphic (and therefore, envy-free allocations trivially exist with probability 1 whenever $\hgraph^{v_1}$ and $\hgraph^{v_2}$ are non-isomorphic). 

Accordingly, we  assume that $\hgraph^{v_1}$ and $\hgraph^{v_2}$ are isomorphic. 
%
Without loss of generality, we also assume that $\pi_1$ is the identity permutation. To see why, observe that for uniformly random $\pi_1, \pi_2$, the composition $\pi_1^{-1}\pi_2$ is also uniformly random. 
We view agent 2's renaming of the items as the permutation group\footnote{Recall that the elements of the permutation group are all permutations $\pi : \items \to \items$ and the group operation which combines two group elements is the function composition.} 
acting on the set of all hypergraphs on $m$ vertices,  denoted $\hgraphset_m = \{ H = (V,E): |V| = m \}$ and recall relevant group theoretic concepts. 
The \emph{orbit} of a hypergraph $\hgraph$ is an equivalence class defined as $\mathcal{O}(H)  = \{ G\in \hgraphset_m: \exists \text{ permutation }\pi, G =  \pi(\hgraph)\}.$
The permutation group partitions $\hgraphset_m$ into  $\ell$ orbits $\mathcal{O}_1, \mathcal{O}_2, \ldots, \mathcal{O}_\ell$, where any two hypergraphs are isomorphic if and only if  they are in the same orbit. 
In Figure \ref{fig:example-hypergraph}, $\hgraph^1$ and $\hgraph^2$ are not in the same orbit. Despite $\hgraph^1$ and $\pi(\hgraph^1)$ being isomorphic and in the same orbit, these valuations still permit an envy-free allocation. 



Let $\mathcal{O}$ be the orbit containing both $\hgraph^{v_1}$ and $\hgraph^{v_2}$. 
Starting from  $\hgraph \in \mathcal{O}$ and applying a random permutation $\pi$,   the distribution of the permuted hypergraph $\pi(H)$ is also uniform over $\mathcal{O}$.
To upper bound the probability of the non-existence of $\EF$ allocations,  it is sufficient to lower bound all  orbit sizes.
Specifically, $\Pr_{\pi_2}[ \hgraph^{v_1} = \pi_2(\hgraph^{v_2})] = 1/|\mathcal{O}|$. 
%
The orbit-stabilizer theorem~\cite{dummit2004abstract} relates a graph's orbit size to the number of  its stabilizers.

\begin{theorem}[Orbit-stabilizer theorem \cite{dummit2004abstract}] \label{thm:ost}
    If a finite group $\Pi$ acts on a set $X$, then for every $x\in X$,
    $|\Pi| = |\orb(x)| \cdot |\stab(x)|,$ where $\stab(x) = \{\pi\in \Pi: \pi\cdot x = x\}.$
\end{theorem}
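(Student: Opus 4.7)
The plan is to prove the orbit-stabilizer theorem by exhibiting an explicit bijection between the orbit $\orb(x)$ and the set of left cosets $\Pi / \stab(x)$, and then invoking Lagrange's theorem to convert coset-counting into the product formula. This is the standard textbook route; the finiteness of $\Pi$ lets us freely juggle cardinalities.

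First I would fix $x \in X$ and consider the map $\varphi : \Pi \to \orb(x)$ defined by $\varphi(\pi) = \pi \cdot x$. By the definition of the orbit, $\varphi$ is surjective. The key step is identifying when two group elements $\pi, \pi' \in \Pi$ satisfy $\varphi(\pi) = \varphi(\pi')$: this happens iff $\pi \cdot x = \pi' \cdot x$, iff $(\pi^{-1} \pi') \cdot x = x$, iff $\pi^{-1} \pi' \in \stab(x)$, iff $\pi' \in \pi \cdot \stab(x)$. (Here I use that $\stab(x)$ is a subgroup of $\Pi$, which is a routine verification from the axioms of a group action: $e \cdot x = x$ and $(\pi \sigma) \cdot x = \pi \cdot (\sigma \cdot x)$.) Therefore the fibers of $\varphi$ are exactly the left cosets of $\stab(x)$ in $\Pi$, giving a well-defined bijection $\bar{\varphi} : \Pi / \stab(x) \to \orb(x)$ sending $\pi \cdot \stab(x) \mapsto \pi \cdot x$.

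From here I would invoke the fact that all left cosets of a subgroup have the same cardinality (the map $\sigma \mapsto \pi \sigma$ is a bijection $\stab(x) \to \pi \cdot \stab(x)$), so $|\Pi| = [\Pi : \stab(x)] \cdot |\stab(x)|$, which is Lagrange's theorem for finite groups. Combining with the bijection above, $[\Pi : \stab(x)] = |\Pi/\stab(x)| = |\orb(x)|$, and substituting yields $|\Pi| = |\orb(x)| \cdot |\stab(x)|$, as desired.

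There is no real obstacle here since this is a classical result; the only thing worth being careful about is confirming that $\stab(x)$ is genuinely a subgroup (closure, identity, and inverses all follow immediately from the action axioms) and that the fiber characterization uses the group action associativity in the right direction. Given that the paper cites Dummit and Foote for the statement, I would most likely simply cite the theorem rather than reproduce this standard argument in full.
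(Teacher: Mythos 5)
Your proof is the standard, correct argument: establish the bijection between $\orb(x)$ and the left cosets $\Pi/\stab(x)$ via the fiber structure of $\pi \mapsto \pi \cdot x$, then apply Lagrange's theorem. The paper itself does not prove this statement --- it simply cites Dummit and Foote, exactly as you anticipate at the end of your write-up --- so there is no in-paper proof to compare against, and nothing further is needed.
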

In our context $\Pi$ is the permutation group and  $X = \hgraphset_m$. 
 A  permutation $\pi$ is an automorphism (and also a stabilizer) of  hypergraph $\hgraph$ iff $\pi(\hgraph) = \hgraph$.
Let $\aut(\hgraph) \coloneqq \{ \text{ permutation } \pi : \pi(\hgraph) = \hgraph \}$ be the set of all automorphisms. Now  $m! = |\mathcal{O}(H)| \cdot |\aut(H)|$ by Theorem~\ref{thm:ost}.
%
%
We now upper bound the number of automorphisms of a hypergraph $H$.



\begin{theorem}\label{theorem:auto-bound}
    For any non-empty and non-complete $k$-uniform hypergraph $\hgraph$ on $m$ vertices, the number of automorphisms of $\hgraph$ satisfies $|\aut(\hgraph)| \leq \frac{m!}{m - k + 1}$.
\end{theorem}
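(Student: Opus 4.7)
The plan is to bound $|\aut(\hgraph)|$ by applying the orbit--stabilizer theorem to the action of $\aut(\hgraph)$ on $\binom{V}{k-1}$, stabilizing a well-chosen $(k-1)$-subset $S$ rather than a full edge. The key structural idea: because $\hgraph$ is neither empty nor complete, we can find such an $S$ whose $k$-extensions $\{S \cup \{w\} : w \in V \setminus S\}$ include both edges and non-edges; any automorphism fixing $S$ setwise must then respect that bipartition, which suppresses the stabilizer enough to yield the claimed bound.

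First, I would establish the existence of such an $S$. Pick any edge $e_0 \in E$ and non-edge $f_0 \notin E$ (both exist by hypothesis) and connect them by a path $e_0 = T_0, T_1, \ldots, T_t = f_0$ of $k$-subsets with $|T_i \cap T_{i+1}| = k-1$ (a walk in the Johnson graph). Along this path, some consecutive pair must flip status between edge and non-edge; take $S$ to be the intersection of such a pair, so $|S| = k-1$ and $S$ has both an edge-extension and a non-edge-extension.

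Next, apply orbit--stabilizer at $S$:
\[
|\aut(\hgraph)| \;=\; |\orb_{\aut(\hgraph)}(S)| \cdot |\stab_{\aut(\hgraph)}(S)|.
\]
I bound the orbit trivially by $\left|\binom{V}{k-1}\right| = \binom{m}{k-1}$. For the stabilizer, let $N = V\setminus S$ and partition $N = A \sqcup B$ with $A = \{w \in N : S \cup \{w\} \in E\}$ and $B = N \setminus A$; by construction $a := |A| \geq 1$ and $b := |B| \geq 1$, with $a + b = m - k + 1$. Any $\pi \in \stab_{\aut(\hgraph)}(S)$ decomposes via the injective map $\pi \mapsto (\pi|_S, \pi|_N)$, and $\pi|_N$ must preserve $(A, B)$ because $\pi$ maps edges containing $S$ to edges containing $S$. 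Hence $|\stab_{\aut(\hgraph)}(S)| \leq (k-1)! \, a! \, b!$. Multiplying,
\[
|\aut(\hgraph)| \;\leq\; \binom{m}{k-1} (k-1)! \, a! \, b! \;=\; \frac{m! \, a! \, b!}{(m-k+1)!} \;=\; \frac{m!}{\binom{m-k+1}{a}},
\]
and since $1 \leq a \leq m-k$, we have $\binom{m-k+1}{a} \geq m-k+1$, giving the desired inequality.

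The main obstacle is recognizing that the right object to stabilize is a $(k-1)$-subset rather than a $k$-edge. Stabilizing an edge $e$ directly only gives $|\aut(\hgraph)| \leq k!(m-k)!(\binom{m}{k}-1)$, which falls far short of $m!/(m-k+1)$. The refinement comes from simultaneously exploiting both a near-miss edge and a near-miss non-edge, and the Johnson-graph argument is what guarantees they can be chosen to share exactly $k-1$ vertices; once this configuration is in hand, the bound follows from standard orbit--stabilizer and a one-line binomial inequality.
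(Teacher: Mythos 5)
Your proof is correct, and it takes a genuinely different route from the paper's. The paper argues by induction on $k$: the base case $k=1$ partitions vertices into those that form an edge and those that do not, and the inductive step for $(k+1)$-uniform $H$ splits into two cases depending on whether all vertex-induced $k$-uniform subhypergraphs $H_i$ (the "links") are mutually isomorphic --- if not, a partition argument closes it directly; if so, the factor $m$ accounts for where a fixed vertex can go and the induction hypothesis bounds the rest. Your argument is non-inductive: you locate, via connectivity of the Johnson graph $J(m,k)$, a ``boundary'' $(k-1)$-subset $S$ admitting both an edge-extension and a non-edge-extension, then apply orbit--stabilizer to the action of $\aut(\hgraph)$ on $\binom{V}{k-1}$ at $S$, observing that $\stab(S)$ must preserve the bipartition $(A,B)$ of $V \setminus S$ induced by edge/non-edge extensions. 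The arithmetic $\binom{m}{k-1}(k-1)!\,a!\,b! = m!/\binom{m-k+1}{a} \le m!/(m-k+1)$ is clean and correct (using $1 \le a \le m-k$, valid since non-empty and non-complete forces $k < m$). One thing your approach buys that the paper's does not: the intermediate bound $m!/\binom{m-k+1}{a}$ is strictly sharper than $m!/(m-k+1)$ whenever the boundary set can be chosen with $a$ bounded away from $1$ and $m-k$, so you get more information for free. The paper's inductive proof is more elementary in the sense of not needing the Johnson-graph preliminary, but at the cost of the two-case analysis at each level of the induction.
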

\begin{proof}
We prove this by induction on $k$. Our induction hypothesis is as follows: for a given $k$, the number of automorphisms of a non-empty and non-complete $k$-uniform hypergraph on $m$ vertices is at most $\frac{m!}{m - k + 1}$.

    \noindent
    \textbf{Base Case: $k = 1$}. Fix an arbitrary $m$. A $1$-uniform hypergraph $\hgraph = (\items, E)$ is simply a selection of some $\ell$ singleton sets of vertices: for all $S \in E$, we have $|S| = 1$. Since $\hgraph$ is non-empty and non-complete, we have $0 < \ell < m$. Note that, any automorphism $\pi$ of $\hgraph$ must map any vertex that forms (resp. does not form) an edge to another vertex that forms (resp. does not form) an edge. Since there are $\ell$ vertices that form an edge and $m - \ell$ vertices that don't form an edge, we have
    \[|\aut(\hgraph)| \leq \ell!(m - \ell)! = \frac{m!}{\binom{m}{\ell}} \le \frac{m!}{m} = \frac{m!}{m - k + 1}.
    \]
    
    \noindent
    \textbf{Induction Step:} Now, assume that the induction hypothesis is true for $k$. Fix an arbitrary $m$ and let $\hgraph = (\items, E)$ be a non-empty, non-complete $(k+1)$-uniform hypergraph on $m$ vertices. For any vertex $i \in \items$, define $\hgraph_i$ to be the $k$-uniform hypergraph induced by the vertex $i$. That is, $\hgraph_i = (V_i, E_i)$ where $V_i = \items \setminus \set{i}$ and $E_i = \set{e \setminus \set{i} : e \in E,  i \in e}$. We consider two cases, based on whether the hypergraphs $\set{\hgraph_i}_{i \in \items}$ are all isomorphic to each other or not.
    
    First, suppose the hypergraphs $\set{\hgraph_i}_{i \in \items}$ are not isomorphic to each other. Define $I_1 = \{j \in \items: \exists \text{ permutation } \pi \text{ satisfying }\pi(\hgraph_j) = \hgraph_1 \}$  as the set of vertices whose corresponding hypergraph is isomorphic to $\hgraph_1$ and let $\ell = |I_1|$. $1 \in I_1$, therefore $\ell > 0$. Since not all $H_i$'s are isomorphic to each other we have $I_1 \subsetneq \items$, and thus $\ell < m$. In any automorphism $\pi$, every vertex in $I_1$ must be mapped to a vertex in $I_1$. Hence, by the same argument as in the base case, the total number of automorphisms of $\hgraph$ is at most
    \[
        |\aut(\hgraph)| \leq \ell!(m - \ell)! \le \frac{m!}{m} \le \frac{m!}{m - k + 1}.
    \]
    
    Next, suppose the hypergraphs $\set{\hgraph_i}_{i \in \items}$ are isomorphic to each other. Because $\hgraph$ is non-empty and non-complete and $k \ge 2$, each $\hgraph_i$ cannot be empty or complete. That is, the hypergraphs $\set{\hgraph_i}_{i \in \items}$ are non-empty, non-complete, $k$-uniform, and have $m - 1$ vertices. Therefore, as per the induction hypothesis, we have $|\aut(\hgraph_j)| \leq \frac{(m - 1)!}{(m - 1) - k + 1}$ for all $j \in \items$. Finally, to count the total number of automorphisms of $\hgraph$, notice that there are exactly $m$ ways to map vertex $1$ to another vertex $j \in \items$. However, once this mapping is fixed, the hypergraph induced by the remaining vertices $\items \setminus \set{j}$, must get mapped to the hypergraph $\hgraph_1$. Hence, the total number of automorphisms of $\hgraph$ is at most
    \[
        |\aut(\hgraph)| \leq  m \cdot \frac{(m - 1)!}{(m - 1) - k + 1} = \frac{m!}{m - (k + 1) + 1},
    \]
    giving us the desired inequality. This completes the induction step.
\end{proof}

We combine the previous observations and~\Cref{theorem:auto-bound} to obtain the main result of this section.

\begin{theorem}\label{theorem:ef-whp-gen}
For $n=2$ agents, having arbitrary valuation functions, envy-free allocations exist with a probability of at least $1-\frac{1}{m/2+1}$ when $m$ is even. 
This probability is tight up to constants: for $n=2$ agents, there exist additive valuations such that an envy-free allocation does not exist with probability $1/m$.
\end{theorem}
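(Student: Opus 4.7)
The plan is to combine the three main ingredients developed above. By Observation~\ref{observation:balanced}, one may assume each agent strictly prefers exactly one of $S$ and $\bar S$ for every $|S|=m/2$, so each agent $i$ is encoded by the $(m/2)$-uniform hypergraph $\hgraph^{v_i}$ on $\items$ with exactly $\binom{m}{m/2}/2$ edges. By Lemma~\ref{lemma:hgraph-and-ef}, an envy-free allocation exists whenever the two renamed hypergraphs $\pi_1(\hgraph^{v_1})$ and $\pi_2(\hgraph^{v_2})$ are distinct, so it suffices to upper bound $\Pr[\pi_1(\hgraph^{v_1}) = \pi_2(\hgraph^{v_2})]$.

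For this upper bound, since $\pi_1$ and $\pi_2$ are independent and uniform, the composition $\pi \coloneqq \pi_1^{-1}\pi_2$ is itself a uniform random permutation, so $\Pr[\pi_1(\hgraph^{v_1}) = \pi_2(\hgraph^{v_2})] = \Pr_\pi[\hgraph^{v_1} = \pi(\hgraph^{v_2})]$. If $\hgraph^{v_1}$ and $\hgraph^{v_2}$ lie in different orbits of the permutation-group action this probability is $0$; otherwise, writing $\mathcal{O}$ for the common orbit, $\pi(\hgraph^{v_2})$ is uniform over $\mathcal{O}$ and the probability equals $1/|\mathcal{O}|$. By the orbit-stabilizer theorem (Theorem~\ref{thm:ost}), $|\mathcal{O}| = m!/|\aut(\hgraph^{v_2})|$. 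Because $\hgraph^{v_2}$ has exactly $\binom{m}{m/2}/2$ edges it is both non-empty and non-complete for $m\ge 2$, so Theorem~\ref{theorem:auto-bound} applied with $k = m/2$ yields $|\aut(\hgraph^{v_2})| \le m!/(m/2+1)$, hence $|\mathcal{O}| \ge m/2+1$ and the failure probability is at most $1/(m/2+1)$.

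For the tightness half, I will exhibit two identical additive agents with $v_{i,1} = m$ and $v_{i,j} = 1$ for $j > 1$, so that after renaming each agent $i$ has a unique ``special'' item $\pi_i(1)$. When the two special items differ, giving each agent her own special item together with any $m/2-1$ of the remaining items produces an envy-free allocation, since each agent then values her own bundle at $m + m/2 - 1$ and the other's at $m/2$. When $\pi_1(1) = \pi_2(1)$, in every partition the agent not holding the special item values her own bundle at $m - |S|$ and the other's at $m + |S| - 1$, where $|S| \ge 1$ is the other bundle's size, so envy is unavoidable. Since $\Pr[\pi_1(1) = \pi_2(1)] = 1/m$, envy-freeness fails with probability exactly $1/m$, matching the upper bound up to a factor of two.

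The main obstacle in the entire argument has already been absorbed into Theorem~\ref{theorem:auto-bound}; conditional on that bound the remaining work is bookkeeping. The one subtle step in the present proof is the symmetry reduction that replaces the pair $(\pi_1,\pi_2)$ with the single uniform permutation $\pi_1^{-1}\pi_2$, which is what enables a direct appeal to orbit-stabilizer.
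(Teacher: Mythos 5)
Your proof is correct and follows essentially the same route as the paper: the reduction via Observation~\ref{observation:balanced} and Lemma~\ref{lemma:hgraph-and-ef}, the replacement of the pair $(\pi_1,\pi_2)$ by the single uniform $\pi_1^{-1}\pi_2$, the orbit-stabilizer reduction, and the application of Theorem~\ref{theorem:auto-bound} with $k=m/2$ are all exactly the paper's argument. Your tightness example (one item worth $m$, the rest worth $1$) differs superficially from the paper's (one item worth something positive, the rest worth $0$), but both are additive, both fail precisely when the two distinguished items collide, and both give failure probability exactly $1/m$.
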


\begin{proof}
From Observation~\ref{observation:balanced} and Lemma~\ref{lemma:hgraph-and-ef}, we only need to consider instances where agents have valuation functions $v_1$ and $v_2$ such that $(i)$ $\hgraph^{v_1}$ is isomorphic to $\hgraph^{v_2}$, and $(ii)$ for all bundles $S \subseteq \items$ with $|S| = m/2$ we have $v(S) \neq v(\overline{S})$; if either $(i)$ or $(ii)$ is not satisfied, $\EF$ allocations will exist with probability $1$. We can therefore represent the valuation functions by the corresponding hypergraphs $\hgraph^{v_1}$ and $\hgraph^{v_2}$. Since  $\hgraph^{v_1}$ is isomorphic to $\hgraph^{v_2}$, instead of considering renaming $\hgraph^{v_1}$ and $\hgraph^{v_2}$ by random permutations $\hat{\pi}_1$ and $\hat{\pi}_2$ respectively, we can simply consider $v_1$ as fixed and permuting $\hgraph^{v_2}$ by $\pi = \hat{\pi}^{-1}_1 \hat{\pi}_2$.

By \Cref{lemma:hgraph-and-ef}, envy-free allocations do not exist with probability at most $\Pr_{\pi}[\hgraph^{v_1} = \pi(\hgraph^{v_2})]$. If both hypergraphs belong to the orbit $\mathcal{O}$,  i.e., $\hgraph^{v_1}, \hgraph^{v_2} \in \mathcal{O}$, then $\Pr_{\pi}[\hgraph^{v_1} = \pi(\hgraph^{v_2})] = \frac{1}{|\mathcal{O}|} = \frac{|\aut(\hgraph^{v_1})|}{m!}$; the last inequality follows from the orbit-stabilizer theorem~\cite{dummit2004abstract}. As the final step, we can use \Cref{theorem:auto-bound} with $k=m/2$ to obtain the bound
\[
\Pr_{\pi}[\hgraph^{v_1} = \pi(\hgraph^{v_2})] = \frac{|\aut(\hgraph^{v_1})|}{m!} \leq \frac{1}{m - (m/2) + 1} = \frac{1}{m/2 + 1}.
\]
Therefore, envy-free allocations exist with a probability of at least $1 - \frac{1}{m/2 + 1}$. 

To conclude the proof of the theorem, we show an upper bound on the probability of existence. Consider the case of two identical valuation functions that assign a positive value only to a single item (the same item for both agents). Any set that contains this item is preferred to its complement. For such an instance, with probability $1/m$ (over item renaming), this single item liked by both of the agents will be the same, and envy-free allocations won't exist.
\end{proof}

\section{Order-Consistent Valuations}\label{sec:order}
Our second main result is that $\SDEF$ allocations exist with high probability for order-consistent valuation functions. In fact, the well-known Round-Robin algorithm, presented as Algorithm~\ref{alg:rr}, outputs such allocations (with high probability) in polynomial time. Round-Robin takes as input an ordering of the items for each agent, and agents take turns taking the best available remaining item according to this order. As long as Round-Robin has access to these orderings after renaming (that is, no other information about the valuation functions is needed), and as long as $m$ is divisible by $n$, Round-Robin will result in an $\SDEF$ allocation with high probability. The precise probability is asymptotically optimal. Beyond this, if $m$ is not divisible by $n$ (so there is no hope for an $\EF$ allocation), Round-Robin will find an $\SDEFX$ allocation with, once again, asymptotically optimal probability.

\begin{algorithm}
\caption{Round Robin Algorithm} \label{alg:rr}
\DontPrintSemicolon
\Input{Valuations $v_1, \ldots, v_n$ which are order consistent with respect to $\pi_1, \ldots, \pi_n$.}
\Output{ Allocation $\alloc = (A_1, A_2, \ldots, A_n)$}
\hrulefill
\\
\textbf{Set} $A_i \gets \emptyset$ for all agents $i \in \agents$\;
\textbf{Set} $P \gets \items$\;
\For{$i = 1, \ldots, n, 1, \ldots, n, 1, \ldots$}{
    {\bf Let} $g \in P$ be the unallocated item with the lowest index in $\pi_i$.\;
    {\bf Set} $A_i \gets A_i \cup \{g\}$\;
    {\bf Set} $P \gets P \setminus \{g\}$\;
}
\Return{$(A_1, A_2, \ldots, A_n)$}
\end{algorithm}

\begin{theorem}\label{thm:positive for order consistent}
    Let $\pi_1, \ldots, \pi_n$ be sampled independently and uniformly at random. When  $m = qn$, $q\in \mathbb{N}$,
    \[
        \Prs{\text{Round Robin is $\SDEF$}} \ge 1 - O\left(\frac{n^2}{m} + \frac{n \log m}{m^{\frac{n - 1}{n}}} \right),
    \]
        and, if $m \ge 2n$ (where $m$ may or may not be a multiple of $n$)
    \[
        \Prs{\text{No $\EF$ allocation exists} }\in \Omega\left(\min\left(\frac{n^2}{m}, 1\right)\right).
    \]
    Further, for any $m$,
    \[
        \Prs{\text{Round Robin is $\SDEFX$}} \ge 1 - O\left(\frac{n^2}{m}\right).
    \]

\end{theorem}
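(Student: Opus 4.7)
The plan is to prove each of the three parts via arguments centered on the Round Robin algorithm, together with a birthday-paradox lower bound. For the $\SDEF$ upper bound (set $q = m/n$), fix an ordered pair of agents $(i,j)$ and bound $\Prs{A_i \notsdpref_{\pi_i} A_j}$. This event holds iff for some $k \in \{1, \ldots, q\}$, agent $j$'s $k$-th most preferred item according to $\pi_i$ strictly outranks (in $\pi_i$) agent $i$'s own $k$-th pick. I will union bound over $k$ and over the $n(n-1)$ ordered pairs of agents.

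For each $k < q$, the argument rests on a neutrality principle: conditioned on $\pi_i$, over the randomness of $\pi_{-i}$, the identities of the items taken by each other agent in each round are distributionally exchangeable among the items still available. Having $j$'s top $k$ items (per $\pi_i$) all outrank $i$'s $k$-th pick forces those $k$ items to have been removed from the pool by the start of round $k$, i.e.\ taken from a set of at most $kn - 1$ earlier picks. A careful bookkeeping of surviving items at each step, combined with Gautschi's inequality on ratios of gamma functions, yields a per-pair per-$k$ bound of order $1/\binom{q}{k}$; summing over $k$ and over the ordered pairs, together with the structural cancellations inherent to the Round Robin schedule, produces the $O(n^2/m)$ contribution.

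The principal obstacle is the $k = q$ case: the ``beating'' item could be $i$'s very worst item, making the $1/\binom{q}{q}$ bound useless. To handle it, define $L \subset \items$ as the set of $\Theta(n \log m)$ items of largest $\pi_i$-index, i.e.\ $i$'s worst items, and establish two high-probability events. First, by the time $i$ takes his $q$-th pick, every item of $L$ has been claimed by some agent other than $i$: an item $\ell \in L$ is unclaimed after round $q - 1$ only if it is one of the $n - 1$ items still in the pool, an event whose probability can be made polynomially small in $m$ by choosing $|L| = \Theta(n \log m)$ with a sufficiently large constant. Second, conditional on the first event, every other agent has received at least one item of $L$: the probability that a particular $j$ receives none of $L$ across $q$ picks behaves like $(1 - |L|/m)^{q}$, and a careful refinement of this estimate (tracking the actual available pool at each round) yields the advertised $O((n \log m)/m^{(n-1)/n})$ term after union bounding over the other $n - 1$ agents. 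When both events hold, each $j \ne i$ owns some item $g_j^\star \in A_j \cap L$ that $i$ ranks strictly worse than $i$'s own last pick, so the $k = q$ comparison cannot fail.

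For the matching lower bound, take all $n$ agents to share the identical additive valuation giving value $1$ to a single distinguished item $g^\star$ and $0$ elsewhere; after renaming, agent $i$ values only the random item $\pi_i(g^\star)$. Any $\EF$ allocation requires these $n$ uniformly random items to be pairwise distinct (otherwise some agent envies the holder of ``his'' unique valuable item), which by the birthday paradox fails with probability $\Omega(\min(n^2/m, 1))$ whenever $m \ge 2n$. Finally, for $\SDEFX$ the only change is the $k = q$ case: permitting removal of one item from $A_j$ allows us to drop $j$'s $\pi_i$-worst item and reduce the comparison to the top $q - 1$ items of $j$, putting us back in the $k < q$ regime that already delivered $1 - O(n^2/m)$. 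A near-identical analysis (with $q$ replaced by $\lceil m/n \rceil$ or $\lfloor m/n \rfloor$ as appropriate, and with the small bookkeeping needed when bundle sizes differ by one) handles the case when $n \nmid m$.
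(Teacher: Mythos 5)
Your overall plan — decompose into events $\cE^{ij}_k$ for $k = 1, \ldots, q$, handle $k < q$ with a Gautschi/gamma-function argument yielding roughly $1/\binom{q}{k}$, handle $k = q$ via a bad set $L$ of $\Theta(n\log m)$ worst items, get the lower bound from a birthday-paradox collision of favorite items, and weaken to SD-EFX by dropping one item — tracks the paper's approach. But there are two genuine gaps.

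First, in the $k = q$ case you have the two contributions swapped. The paper's decomposition conditions on whether $b_{iq} \in L$ (equivalently, whether $A_i \cap L = \emptyset$). The probability $\Prs{b_{iq} \in L}$ is \emph{not} merely ``polynomially small with a sufficiently large constant'' in the way you suggest; it requires tracking that the entire residual pool $P_{n(q-1)+i-1}$ of size $n - i + 1$ is contained in $L$, and the careful sequential analysis yields a bound of the form $2\bigl(8n\log m / m^{1-1/n}\bigr)^{n+1-i}$. It is the sum of this quantity over $i$ that produces the $O\!\left(n\log m / m^{(n-1)/n}\right)$ term in the theorem. Meanwhile, your second event — that each other agent $j$ picks up at least one item of $L$, conditional on $A_i \cap L = \emptyset$ — has failure probability $O(1/m)$ per $j$ (via a Chernoff bound on the number of rounds in which $j$ catches an $L$-item), contributing to the $O(n^2/m)$ term, not to the $O(n\log m / m^{(n-1)/n})$ one. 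Your proposed bound $(1 - |L|/m)^q \approx m^{-c}$ for $|L| = cn\log m$ actually corroborates this, since that is much smaller than $n\log m / m^{(n-1)/n}$. So your claimed source of the dominant sub-polynomial term is incorrect, and the actual source (the $b_{iq} \in L$ event) is left essentially unbounded in your sketch.

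Second, the SD-EFX reduction is incomplete. Your idea — remove $j$'s $\pi_i$-worst item and compare top $q-1$ positions — works cleanly only when all bundles have the same size $q$. When $n \nmid m$, agent $j < i$ may receive $q+1$ items while $i$ receives $q$, so $A_j \setminus \{g\}$ has $q$ items and the position-$q$ comparison is back. Moreover, this does not explain why the $n\log m / m^{(n-1)/n}$ term disappears in the SD-EFX bound. The paper instead proves that if the truncated allocation $\alloc^q$ (after $qn$ Round-Robin steps) is SD-EF, then the final $\alloc$ is SD-EFX, and then re-runs the $k=q$ analysis on $\alloc^q$; the $r \ge 1$ leftover items give $\Prs{b_{iq} \in L}$ an exponent of $n+1-i+r \ge 2$, which is what kills the sub-polynomial term (with a separate, sharper argument for $n = 2$). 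Your sketch does not supply this ingredient.
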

Notice that  if $m \in \omega(n^2)$, then, as $n$ grows, $\SDEF$/$\SDEFX$ allocations exist with high probability, while the lower bound implies that even $\EF$ allocations do not exist when $m \in O(n^2)$. In this sense, Round-Robin is asymptotically optimal. 


For ease of exposition, we break the proof of~\Cref{thm:positive for order consistent} into separate lemmas.  We first assume that  $m = qn$ for some integer $q$, and prove the existence of $\SDEF$ allocations. 

\begin{lemma}[Round-Robin is $\SDEF$]\label{lem: round robin is ef}
When  $m = qn$, 
    $\Prs{\text{Round Robin is $\SDEF$}} \ge 1 - O\left(\frac{n^2}{m} + \frac{n \log m}{m^{\frac{n - 1}{n}}} \right)$.
\end{lemma}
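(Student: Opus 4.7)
The plan is to bound, via a union bound over all ordered pairs of agents $(i,j)$, the probability that $A_i \notsdpref_{\pi_i} A_j$. Writing $q = m/n$, the definition of $\sdpref_{\pi_i}$ tells us that $A_i \notsdpref_{\pi_i} A_j$ holds precisely when there exists some $k \in \{1, \ldots, q\}$ such that the $k$-th best item of $A_j$ under $\pi_i$ is strictly preferred (under $\pi_i$) to the $k$-th best item of $A_i$ under $\pi_i$; call this event $E^k_{i,j}$. By neutrality I may additionally condition on $\pi_i$ being the identity, so that ``$i$'s $k$-th best item'' is just the lowest-indexed item she has received. The bound on $\Pr[E^k_{i,j}]$ will be computed in two regimes, the intermediate ranks $k<q$ and the final rank $k=q$.

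For $k<q$, the key observation is that the $n-1$ other agents' selections are, from $i$'s viewpoint, uniformly random picks from the shrinking pool of available items at each of $i$'s turns. In particular, for $E^k_{i,j}$ to hold, agent $j$ must have picked $k$ items all preferred (under $\pi_i$) to $i$'s $k$-th pick; those $k$ items must each have been selected by $j$ during rounds $1,\ldots,k$ of Round-Robin. Tracking the evolving pool sizes, $\Pr[E^k_{i,j}]$ simplifies to a product of ratios of binomial coefficients whose dominant behaviour is $\Theta(1/\binom{q}{k})$; Gautschi's inequality handles the gamma-function estimates that appear during simplification. Because $\sum_{k=1}^{q-1} 1/\binom{q}{k} \in O(1/q) = O(n/m)$, summing over all $n^2$ pairs contributes $O(n^2/m)$ to the total failure probability.

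The case $k=q$ requires a different argument since the hypergeometric-style bound degenerates to $1$. Let $L \subseteq \items$ be the set of the $\lceil 3 n \ln m \rceil$ items lowest under $\pi_i$. I would first prove, by a neutrality-based argument coupled with a union bound over items of $L$, that with failure probability at most $O(n \log m / m^{(n-1)/n})$ every item of $L$ has been picked by some agent $j \neq i$ by the time $i$ makes her $q$-th (last) pick. Conditioned on this event, I would then argue --- via a coupon-collector-type analysis of how the $|L|$ items are distributed across the $n-1$ other agents --- that each of those agents receives at least one item from $L$ except with probability $O(n \log m / m^{(n-1)/n})$. When both events hold, every $j \neq i$ has $q$-th (worst) item of $A_j$ under $\pi_i$ lying in $L$, while $i$'s worst pick is strictly better than everything in $L$ under $\pi_i$ (since $L$ was already exhausted when $i$ made her final choice). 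Hence $E^q_{i,j}$ cannot occur for any $j$, and combining the two regimes and union-bounding over pairs yields the claimed failure bound $O(n^2/m + n \log m / m^{(n-1)/n})$.

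The main obstacle I expect is making the intermediate-rank estimate tight. The content of the pool at each of $i$'s turns depends on the joint behaviour of all other agents' random orderings, so $\Pr[E^k_{i,j}]$ is not a single clean expression but an average over highly correlated events. Disentangling these correlations enough to extract the sharp $\binom{q}{k}^{-1}$ bound --- which is essential, since even a polylogarithmic slack here would spoil the $O(n^2/m)$ total after the union bound over pairs --- is the technical heart of the proof and is where the careful gamma-function manipulations come in.
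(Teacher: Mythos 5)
Your high-level decomposition is exactly the paper's: fix a pair $(i,j)$ and a rank $k$, bound the event $\cE^{ij}_k$ that $j$'s $k$-th best item under $\pi_i$ beats $i$'s $k$-th best, split into $k<q$ and $k=q$, and for the last rank introduce a set $L$ of the bottom $\Theta(n\log m)$ items under $\pi_i$. The route is the same; the gaps are quantitative, and as written they prevent the union bound from closing at the stated rate.

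For $k<q$ you assert the per-pair bound is $\Theta(1/\binom{q}{k})$ and that ``summing over all $n^2$ pairs contributes $O(n^2/m)$.'' That arithmetic doesn't follow: $\sum_{k<q}1/\binom{q}{k}\in O(1/q)=O(n/m)$, so $n^2$ pairs would give $O(n^3/m)$. The per-pair probability in fact carries an extra factor of $1/n$ --- the paper shows $\Pr\bigl[\cE^{ij}_k\bigr]\le \frac{8}{n}\cdot\frac{1}{\binom{q}{k}}$, with the $1/n$ coming from $\Gamma(1/n)=n\Gamma(1+1/n)\ge n/2$ in the rising-factorial step. A sanity check at $k=1$: $j$'s first pick beats $i$'s with probability $\approx 1/m = 1/(nq) = \frac{1}{n}\cdot\frac{1}{\binom{q}{1}}$, not $1/\binom{q}{1}$. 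You correctly identify this estimate as the technical heart, but the target you set for yourself is off by a factor of $n$.

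For $k=q$, you propose to bound two events --- every item of $L$ is picked by others before $i$'s last pick, and then every $j\ne i$ receives an $L$-item --- each at rate $O(n\log m/m^{(n-1)/n})$. Even if achievable for a fixed $i$ (and $j$), this breaks after the union bounds: summing over $i$ gives $O(n^2\log m/m^{(n-1)/n})$, and over $j$ as well $O(n^3\log m/m^{(n-1)/n})$. The paper needs two sharper estimates here. First, $\Pr[b_{iq}\in L]\le 2\bigl(8n\log m/m^{1-1/n}\bigr)^{n+1-i}$, where the exponent $n+1-i$ is exactly the number of items still in the pool at $i$'s last pick; this geometric decay in $i$ is what lets $\sum_i$ collapse to a single $O(n\log m/m^{(n-1)/n})$ term. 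Second, $\Pr\bigl[\cE^{ij}_q \mid b_{iq}\notin L\bigr]\le 1/m$ via a Chernoff bound on the number of rounds in which $j$ picks an $L$-item (expectation $\gtrsim 3\log m$), so that the double union over $j$ and $i$ costs only $O(n^2/m)$. Your coupon-collector sketch points toward the right calculation for the second part --- a Chernoff bound would naturally deliver $1/\mathrm{poly}(m)$, much tighter than the rate you wrote, and you actually need that tighter rate. A minor further point: $L$ should have size $\lceil 3n\log m\rceil + (n-1)$ rather than $\lceil 3n\log m\rceil$, since the $n-1$ items falling in $R_0\cup R_q$ (before $i$'s first pick and after $i$'s last) are unavailable to $j$ in the $q-1$ intermediate rounds, and without this slack the Chernoff expectation drops below $3\log m$.
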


Second, we show how to extend our arguments to the case of arbitrary $m$  and show the existence of $\SDEFX$ allocations. 

\begin{lemma}[Round-Robin is $\SDEFX$]\label{lem: rr is sdefx}
For any $m$, $\Prs{\text{Round Robin is $\SDEFX$}} \ge 1 - O\left(\frac{n^2}{m}\right)$.
\end{lemma}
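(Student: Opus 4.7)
The plan is to adapt the analysis of Lemma~\ref{lem: round robin is ef}, exploiting the fact that SDEFX's ``drop one item'' slack lets us avoid the problematic last-index comparison that contributed the $n\log m / m^{(n-1)/n}$ term in the SDEF bound. Fix agents $i,j$ and sort $A_i, A_j$ in decreasing order of $\pi_i$ as $a_1 \succ \cdots \succ a_{|A_i|}$ and $b_1 \succ \cdots \succ b_{|A_j|}$. The first observation I would make is that the SDEFX requirement from $i$ to $j$ (that $A_i \sdpref_{\pi_i} A_j \setminus \{g\}$ for every $g \in A_j$) is equivalent to $a_k \succeq_{\pi_i} b_k$ for $k = 1, \ldots, |A_j|-1$, together with $|A_i| \geq |A_j|-1$: the removal $g = b_{|A_j|}$ yields the strongest constraint, since removing any earlier $b_\ell$ merely shifts later items forward in the comparison and weakens the requirement (as $b_{k+1} \prec_{\pi_i} b_k$). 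This matches the SDEF condition exactly, except the last index is dropped.

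Next, I would split into the easy and hard directions. If $i$ precedes $j$ in Round-Robin, the standard pairing argument (match each pair's round-$k$ picks; $i$ picks first, so $a_k$ is at least as good by $\pi_i$ as $j$'s round-$k$ pick) gives $A_i \sdpref_{\pi_i} A_j$ deterministically, which in particular implies the SDEFX condition from $i$ to $j$. For $j$ preceding $i$, I would union-bound
\[
\Pr[\text{SDEFX fails from } i \text{ to } j] \;\leq\; \sum_{k=1}^{|A_j|-1} \Pr[b_k \succ_{\pi_i} a_k].
\]
Each summand is bounded by the per-index probability estimate driving Lemma~\ref{lem: round robin is ef}, of order $\Theta(1/\binom{|A_i|}{k})$, obtained by observing that $b_k \succ_{\pi_i} a_k$ forces $j$ to have acquired $k$ items all preferred by $i$ to $a_k$, which can only happen if those items were drawn in rounds $1, \ldots, k$. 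Crucially, the index $k = |A_j|$---the source of the $n\log m / m^{(n-1)/n}$ term for SDEF---is never reached. Summing over $k$ and over the $\binom{n}{2}$ ordered pairs with $j < i$ yields the claimed $O(n^2/m)$ bound in the divisible case.

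Finally, for non-divisible $m = qn + r$ with $0 < r < n$, the first $r$ agents receive $q+1$ items and the rest receive $q$. The only genuinely new case is $j \leq r < i$, where $|A_j| = q+1$ and $|A_i| = q$, so the union-bound sum extends up to $k = q$. The key point is that this $k = q$ is \emph{not} the ``last pick'' pathology that forced the logarithmic correction for SDEF: $j$ still picks in round $q+1$, so $b_q$ is not $j$'s $\pi_i$-worst item, and $i$'s pool at round $q$ has size $r + n - i + 1 \geq 2$, so the same per-index estimate applies. For all other pair configurations we have $|A_j| \leq |A_i|$, and the union-bound range $k \leq |A_i|-1$ again avoids $i$'s last pick. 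The main obstacle will be verifying this pool-size argument precisely---ensuring the per-index bound from the SDEF analysis transfers cleanly to the non-divisible regime at the index $k = q$ with $|A_j| = q+1$, without re-invoking the logarithmic correction---which is the technical crux.
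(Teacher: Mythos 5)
Your overall strategy is sound, and the opening observations are correct and clean: the characterization that $A_i \sdpref_{\pi_i} A_j \setminus \{g\}$ for all $g \in A_j$ is equivalent to $|A_i| \ge |A_j| - 1$ together with $a_k \succeq_{\pi_i} b_k$ for $k \le |A_j| - 1$ is a nice way to see SD-EFX as ``SD-EF with the last index dropped,'' and the deterministic domination $A_i \sdpref_{\pi_i} A_j$ when $i < j$ is correct. For divisible $m$ this cleanly avoids the $k=q$ pathology and your union-bound over $k \le q - 1$ with the $\Theta(1/\binom{q}{k})$ per-index estimate gives $O(n^2/m)$, recovering the lemma. The paper's own proof reaches the same conclusion via a slightly different decomposition (it analyzes the partial allocation $\alloc^q$ after $qn$ picks and shows $\alloc^q$ SD-EF implies $\alloc$ SD-EFX), but the two decompositions are equivalent.

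The gap is in the non-divisible case at the index you flag as the crux, and your ``pool-size'' argument does not close it. When $j \le r < i$, the union-bound range runs up to $k = q$ and $a_q$ is still agent $i$'s \emph{last} pick, so this is precisely the last-pick pathology---your remark that ``$b_q$ is not $j$'s $\pi_i$-worst item'' is true but beside the point, since the issue is that $a_q$ can be globally bad. The per-index product bound driving Lemma~\ref{lem: bound for k < q, EF} evaluated at $k = q$ collapses: the rising-factorial ratio becomes
\[
  \prod_{\ell=1}^{q} \frac{(q-\ell)n + (i-j)}{(q-\ell)n + n + r - j + 1} \;\approx\; q^{(i - r - n - 1)/n},
\]
and the exponent $(n + r - i + 1)/n$ is smaller than $1$ whenever $i > r + 1$ (e.g.\ $i = n$, $r = 1$ gives $q^{-2/n}$). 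Summing over the $\Theta(rn)$ pairs $(i,j)$ with $j \le r < i$ yields a contribution of order $n^2 q^{-(r+1)/n}$, which for $r < n - 1$ is $\omega(n^2/m)$---indeed for constant $r$ and growing $n$ it does not even tend to zero. Having pool size $n + r - i + 1 \ge 2$ is not enough; what matters is that the exponent equals the pool size divided by $n$, and for most $i > r$ that ratio is well below $1$. So the ``same per-index estimate'' simply does not apply at $k = q$.

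The paper sidesteps this by \emph{not} using the product estimate at the top index. It analyzes $\alloc^q$ and handles $k = q$ with the $L$-set argument of Lemma~\ref{lem: bound for k = q, EF}, showing that the probability $i$'s worst pick lands in the bottom $\Theta(n\log m)$ items is roughly $\bigl(\tfrac{n\log m}{m^{1-1/n}}\bigr)^{n+1-i+r}$; the extra $+r$ in the exponent (coming from the larger pool at $i$'s last pick) is what makes the bound small enough---for $n \ge 3$ this is $o(1/m)$. Even this is not enough for $n = 2$, where $m^{1-1/n} = \sqrt m$ makes the square of the log term $\Theta(\log^2 m / m)$, so the paper gives a separate combinatorial argument for $n=2$ showing $\Pr[\cE^{21}_q \mid b_{2q} \notin L] = 0$ with $|L| = 1 + r$. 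Your proposal would need an analogue of both of these steps (the $L$-set bound, adapted to require $|A_j \cap L| \ge 2$ since you compare against the second-worst item of $A_j$, plus the sharper $n=2$ case), and neither follows from the pool-size observation alone.
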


Finally, we show the lower bound on the probability that $\EF$ or $\SDEFX$  allocations exist. 

\begin{lemma}
\label{lem: rr lower bound}
For any $m$, if $m \ge 2n$, $\Prs{\text{No $\EF$ allocation exists} }\in \Omega\left(\min\left(\frac{n^2}{m}, 1\right)\right)$. The same bound holds for the probability that no $\SDEFX$ exists.
\end{lemma}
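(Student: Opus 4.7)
The plan is to exhibit a single unit-demand instance and use a birthday-paradox collision event to simultaneously rule out $\EF$ and $\SDEFX$. Take every agent's worst-case valuation to be $v_i(S)=\mathbf{1}[1\in S]$, which is unit-demand, non-negative, and order-consistent with respect to the identity ordering. After random renaming, $v_i^{\pi_i}(S)=\mathbf{1}[\pi_i(1)\in S]$, so each agent's unique valuable item is $\pi_i(1)$, and these $n$ items are i.i.d.\ uniform over $\items$. Let $\cE$ be the event that $\pi_i(1)=\pi_j(1)$ for some distinct $i,j\in\agents$. I will show $\Pr[\cE]\in\Omega(\min(n^2/m,1))$ and that, conditional on $\cE$, no $\EF$ or $\SDEFX$ allocation exists, which jointly gives both claims of the lemma.

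For the probability bound, set $X_{ij}=\mathbf{1}[\pi_i(1)=\pi_j(1)]$ for $i<j$. Then $\E[\sum_{i<j} X_{ij}]=\binom{n}{2}/m$, and since any two pair-collision events (whether disjoint or sharing an index) co-occur with probability exactly $1/m^2$, we get $\E[(\sum_{i<j} X_{ij})^2]=\binom{n}{2}/m+O(n^4/m^2)$. The Paley--Zygmund inequality then yields
\[
\Pr[\cE] \;\ge\; \frac{\left(\E[\sum_{i<j} X_{ij}]\right)^2}{\E[(\sum_{i<j} X_{ij})^2]} \;=\; \frac{\binom{n}{2}/m}{1+O(n^2/m)} \;\in\; \Omega\!\left(\min\!\left(\tfrac{n^2}{m},\,1\right)\right),
\]
which is constant in the dense regime $m\lesssim n^2$ and $\Theta(n^2/m)$ in the sparse regime $m\gtrsim n^2$.

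Now assume $\cE$ holds with $\pi_i(1)=\pi_j(1)=g^{\ast}$, fix an arbitrary allocation $\alloc$, and let $A_k$ denote the bundle containing $g^{\ast}$. Since $|\{i,j\}\setminus\{k\}|\ge 1$, pick any $i'\in\{i,j\}\setminus\{k\}$; then $g^{\ast}=\pi_{i'}(1)\notin A_{i'}$, so $v_{i'}(A_{i'})=0<1=v_{i'}(A_k)$ and $\EF$ fails immediately. For $\SDEFX$, split on $|A_k|$. If $|A_k|\ge 2$, choose any $g\in A_k\setminus\{g^{\ast}\}$: then $g^{\ast}\in A_k\setminus\{g\}$, so the top-ranked item of $A_k\setminus\{g\}$ under $\pi_{i'}$ has rank $1$, whereas every item of $A_{i'}$ has rank $\ge 2$, violating $A_{i'}\sdpref_{\pi_{i'}}A_k\setminus\{g\}$. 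If $|A_k|=1$, so $A_k=\{g^{\ast}\}$, then $m\ge 2n$ forces $(m-1)/(n-1)>2$; by pigeonhole some $\ell\ne k$ has $|A_\ell|\ge 3$, and for every $g\in A_\ell$ the SD size requirement $|A_k|\ge|A_\ell\setminus\{g\}|$ reduces to $1\ge 2$, so $A_k\not\sdpref_{\pi_k}A_\ell\setminus\{g\}$. In either sub-case $\SDEFX$ is violated for some pair, so no $\SDEFX$ allocation exists.

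The key subtlety is the singleton sub-case $A_k=\{g^{\ast}\}$ in the $\SDEFX$ argument: the rank-$1$ trick that works for $|A_k|\ge 2$ collapses when the colliding favorite is isolated in its own bundle, and the hypothesis $m\ge 2n$ is invoked precisely to guarantee a concurrently heavy bundle that defeats the SD size requirement.
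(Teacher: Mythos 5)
Your proposal is correct and follows the same overall blueprint as the paper's proof: exhibit a unit-demand worst-case valuation so that each agent's only desirable item is $\pi_i(1)$, show that a collision among these favorite items rules out both $\EF$ and $\SDEFX$, and lower-bound the collision probability. Two details differ. For the probability bound, you apply a self-contained second-moment (Paley--Zygmund) calculation, computing $\E\bigl[\sum_{i<j}X_{ij}\bigr]=\binom{n}{2}/m$ and $\E\bigl[(\sum_{i<j}X_{ij})^2\bigr]\le \mu+\mu^2$ with $\mu=\binom{n}{2}/m$, obtaining $\Pr[\cE]\ge \mu/(1+\mu)\ge \tfrac{1}{2}\min(\mu,1)$ in a single stroke across both the sparse and dense regimes; the paper instead cites a known hash-collision bound that holds for $m\ge n^2/2$ and then handles $m<n^2/2$ by a monotonicity-in-$m$ argument. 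Both are valid; yours is arguably cleaner because it avoids the external citation and the two-case reduction. For the $\SDEFX$ non-existence, you case-split on the size of the bundle $A_k$ that contains the colliding favorite $g^{\ast}$ ($|A_k|\ge 2$ giving the rank-1 argument, $|A_k|=1$ giving a size violation against some bundle of size $\ge 3$ guaranteed by $m\ge 2n$), whereas the paper case-splits on whether the allocation is balanced (unbalanced failing the SD size constraint outright, balanced forcing $|A_j|\ge 2$ so the rank-1 argument applies). These partitions of the case space are different but cover the same ground; the paper's phrasing of ``not balanced'' as $||A_i|-|A_j||\le 1$ in the source contains a typo (it should be $>1$), which your version sidesteps entirely.
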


The three lemmas combined imply Theorem~\ref{thm:positive for order consistent}.
We prove Lemma~\ref{lem: round robin is ef} in~\Cref{subsec: RR EF}. Lemma~\ref{lem: rr is sdefx} largely follows the same proof; we prove it in~\Cref{subsec: RR is EFX}. Finally, Lemma~\ref{lem: rr lower bound} is proved in~\Cref{subsec: RR is asymptotically optimal}.

In~\Cref{subsec: give away}, we observe that, even though Round-Robin is asymptotically optimal as $n$ grows, for the important case of small (i.e., constant) $n$, it leaves a small gap in our understanding. We introduce a new algorithm, Give-Away Round Robin, which closes this gap.


\subsection{Round-Robin is $\SDEF$: The proof of Lemma~\ref{lem: round robin is ef}}\label{subsec: RR EF}

Our goal will be to show that the probability that Round-Robin does not output an $\SDEF$ allocation is at most $\frac{33n^2}{m} + \frac{16n\log m}{m^{1 - 1/n}}$. We restrict our analysis to $m$ and $n$ such that $m \ge 2n$ and $\frac{n\log m}{m^{n - 1/n}} \le 1/16$, since otherwise, our target upper bound holds trivially. Fix $m$, $n$, and $q$ such that $m = qn$.  
    In each round of the Round-Robin algorithm, lower-indexed agents pick before higher-indexed ones, and therefore the former do not envy the latter. That is, agent $i$ can only envy agent $j$'s bundle if  $j < i$.
    For agents $i, j \in \agents$ with $j < i$, let $\cE^{ij}$ be the event that $A_i \notsdpref_{\pi_i} A_j$, i.e., agent $i$ does not sd-prefer their own bundle over that of agent $j$. The probability that the final allocation is not SD-EF is then exactly the probability that one of these events occurs; formally, $\Prs{\bigcup_{i \in \agents} \bigcup_{j: j < i} \cE^{ij}}$. 
    
    
    We use $g \succ_i g'$ to denote that item $g$ is preferred to item $g'$ under $\pi_i$. For a set of items $S \subseteq \items$, we   write $g \succ_i S$ when $g$ is preferred to all items in $S$, i.e., $g \succ_i g'$ for all $g' \in S$. Additionally, we use $P_t$ to denote the pool of available items at ``time'' $t$, i.e., after $t$ picks have been made (so, $P_0 = \items$ and $P_m = \emptyset$). Let $g_t$ be the $t$'th item picked during the execution of the Round Robin algorithm, that is, $P_{t + 1} = P_t \setminus \set{g_{t + 1}}$. Since agents pick items sequentially as per their index, $g_t$ is picked by agent $j$ exactly when $t \equiv j \bmod n$, $g_{(\ell - 1)n + j}$ is the $\ell$'th item picked by agent $j$, and $A_j = \set{g_{0 \cdot n + j}, \ldots, g_{(q - 1)n + j}}$.
    
  Taking the perspective of agent $i$, let $b^i_{jk}$ denote the $k$'th best item according to $\pi_i$ in the final bundle $A_j$; we will write $b_{jk}$ if the agent $i$ is clear from the context. For agent $i$'s picks, $b_{ik} = g_{(k - 1)n + i}$, i.e., their $k$'th pick is also their $k$'th favorite item in their bundle. However, this is not the case for items in others' bundles. Since the final bundles are all of size $q$, $A_i \sdpref_{\pi_i} A_j$ exactly when $b_{ik} \succ_i b_{jk}$ for all $k \le q$. 
  Hence, we decompose the event $\cE^{ij}$ into events $\cE^{ij}_k$, where $\cE^{ij}_k$ is the event that $b_{jk} \succ_i b_{ik}$. Now $\cE^{ij} = \bigcup_{k = 1}^q \cE^{ij}_k$, and (by a union bound) our goal will be to upper bound
    \begin{equation}\label{eq: sum of probabilities}
        \sum_{i \in \agents} \sum_{k = 1}^q \Prs{\bigcup_{j: j < i} \cE^{ij}_k}.
    \end{equation}

    Fix an agent $i$ and value $k$. To get a handle on $\Prs{\bigcup_{j \in \agents: j < i} \cE^{ij}_k}$, we condition on agent $i$'s ordering of the items, $\pi_i$. Whenever agent $i$ picks $g_t$, $g_t$ must be the best item remaining in $P_{t - 1}$ according to $\pi_i$.
    When agent $j \ne i$ picks $g_t$ (so $t \not\equiv i \pmod{n}$), the distribution of $g_t$ over the random draws of $\pi_{-i} = \{\pi_1, \ldots, \pi_n\} \setminus \{\pi_i\}$, even conditioned on all previous picks $g_1, \ldots, g_{t - 1}$, is uniformly random over $P_{t - 1}$.
    Let $R_\ell$ be the set of items picked between agent $i$'s $\ell$'th and $(\ell + 1)$'th pick. That is, $R_0 = \set{g_1, \ldots, g_{i - 1}}$, for $1 \le \ell \le q - 1$, $R_\ell = \set{g_{n (\ell - 1) + i + 1}, \ldots, g_{\ell n + i - 1}}$, and $R_q = \set{g_{n(q - 1) + i+1}, \ldots, g_m}$.  From the perspective of agent $i$, each item in  $R_\ell$ is picked sequentially and uniformly at random from the remaining items. This allows us to analyze an equivalent two-step process wherein (1) $R_\ell$ is first selected uniformly at random from $P_t$, after which, (2) each item in $R_\ell$ is matched to a uniformly random agent (different than $i$) that picked in the corresponding timesteps. 

    We upper bound the expression in~\eqref{eq: sum of probabilities} by upper bounding each of the summands. Our analysis splits into cases based on the value of $k$, as care must be taken when comparing $i$'s worst item to that of other agents.
    
    \begin{lemma}\label{lem: bound for k < q, EF}
    For $k < q$, it holds that $\Prs{\bigcup_{j: j < i} \cE^{ij}_k} \le \frac{8}{\binom{q}{k}}$.
    \end{lemma}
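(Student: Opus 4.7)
The plan is to condition on $\pi_i$ (WLOG the identity order $h_1 \succ_i h_2 \succ_i \cdots$) and compute $\Prs{\cE^{ij}_k}$ for each $j < i$, then union-bound over the at most $n - 1$ values of $j$. Since $j < i$ picks before $i$ in every round, $\cE^{ij}_k$ is equivalent to saying that $j$'s pick in each of the first $k$ blocks $R_0, R_1, \ldots, R_{k - 1}$ lies in $S := \{g : g \succ_i b_{ik}\}$, where $b_{ik}$ is $i$'s $k$-th pick.

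Using the two-step framework, I would decompose the randomness into (i) sampling the blocks $R_\ell$ and (ii) assigning each $R_\ell$'s items to the $n - 1$ non-$i$ agents via a uniformly random bijection. Conditional on the partition, $j$'s $\ell$-th pick is uniform over $R_{\ell - 1}$ independently, which gives
\[
\Prs{\cE^{ij}_k \mid R_0, \ldots, R_{k - 1}} = \prod_{\ell = 0}^{k - 1} \frac{|R_\ell \cap S|}{|R_\ell|}.
\]
A crucial structural fact is that $b_{ik}$ equals the $k$-th $\pi_i$-smallest item in $\items \setminus R$, where $R := \bigcup_\ell R_\ell$, so both $b_{ik}$ and $S$ are functions of $R$ alone, and $|R \cap S|$ equals the rank of $b_{ik}$ under $\pi_i$ minus $k$. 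Via a combinatorial argument that tracks how items of $R$ are distributed across the blocks---respecting the validity constraints arising from $i$'s greedy picks---I would upper bound the expected product by $\binom{|R \cap S|}{k} / \binom{|R|}{k}$.

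It then remains to bound the expectation over $R$, which reduces to analyzing the distribution of the $k$-th ``gap'' among the $|R|$ occupied ranks. This produces ratios of Gamma functions that I would control using Gautschi's inequality. Plugging in $|R| = (k - 1)(n - 1) + (i - 1)$ together with $m = qn$, and summing over the at most $n - 1$ values of $j < i$, the bound would simplify to $8 / \binom{q}{k}$ as claimed. The main obstacle is the transition from the conditional-on-$(R_0, \ldots, R_{k - 1})$ formula to the marginal bound: the distribution of the partition given $R$ is non-uniform because not every set-partition is consistent with $i$'s greedy picking (some valid partitions would force $b_{i\ell}^\ast \in R_\ell$, which is impossible), and verifying that the ``uniform-subset'' upper bound still applies requires a careful coupling or monotonicity argument. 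The factor $8$ in the final inequality provides slack to absorb the resulting looseness.
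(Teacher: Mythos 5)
Your proposal diverges from the paper's proof at its central step, and the divergence opens a genuine gap. You correctly set up the two-step view and arrive at the conditional identity
\[
\Prs{\cE^{ij}_k \mid R_0, \ldots, R_{k-1}} = \prod_{\ell = 0}^{k - 1} \frac{|R_\ell \cap S|}{|R_\ell|},
\]
and you correctly observe that $b_{ik}$ and hence $S$ are functions of $R = \bigcup_{\ell < k} R_\ell$ alone. But the step from there to $\E[\cdot \mid R] \le \binom{|R\cap S|}{k}/\binom{|R|}{k}$ is exactly where the argument is hard, and you have not supplied it. The conditional law of the partition given $R$ is emphatically non-uniform: for instance, if the globally $\pi_i$-best item lies in $R$, it \emph{must} lie in $R_0$, because otherwise $i$ would have taken it at time $i$. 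More generally, items of $R$ that are very high in $\pi_i$ are forced into early blocks. This bias pushes $S\cap R$ towards the early $R_\ell$'s, so whether the product $\prod_\ell |R_\ell\cap S|/|R_\ell|$ is stochastically smaller or larger than its uniform-partition value is not at all obvious; it would require, as you say, a coupling or monotonicity argument that you have not given, and it is not a routine one. A second unfinished piece is the averaging over $R$: you gesture at a ``gap'' distribution and Gautschi's inequality but give no computation, and you would still need to combine this with the partition bound and check that the constants land under $8/\binom{q}{k}$.

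The paper takes a cleaner route that sidesteps the entire partition issue. Rather than conditioning on the blocks, it fixes attention on agent $j = i-1$ (by a symmetry argument using the same two-step view you invoke, the bound for one $j<i$ gives it for all) and derives a \emph{necessary} condition for $\cE^{ij}_k$ phrased in terms of \emph{pool ranks}: for each $\ell \le k$, $j$'s $\ell$'th pick must fall among the top $(k-\ell)n+1$ items of the pool $P_{(\ell-1)n+j-1}$ according to $\pi_i$. The point is that, from $i$'s perspective, the $\pi_i$-rank within the current pool of each non-$i$ pick is an independent uniform variable, so these necessary events are mutually independent with explicit probabilities $\frac{(k-\ell)n+1}{|P_{(\ell-1)n+j-1}|}$, and the product telescopes into a ratio of rising factorials which Gautschi's inequality then controls. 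No averaging over a complicated conditional partition law is ever needed. If you want to rescue your version, you should either establish the stochastic comparison between the true partition law and the uniform one (which I believe is delicate because the bias has effects in both directions), or replace the conditioning-on-$R$ step with the paper's pool-rank formulation.
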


    \begin{lemma}\label{lem: bound for k = q, EF}
    For $k = q$, it holds that $\Prs{\bigcup_{j: j < i} \cE^{ij}_q} \le
     \frac{n}{m} + 2\left(\frac{8n\log m}{m^{1 - 1/n}}\right)^{n + 1 - i}$.
    \end{lemma}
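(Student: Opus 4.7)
My plan is to introduce a ``bad set'' $L \subseteq \items$ consisting of the $\alpha$ worst items according to $\pi_i$, where $\alpha = \lceil 8 n \log m \cdot m^{1/n} \rceil$, so that $\alpha/m = 8 n \log m / m^{1 - 1/n}$ matches the base of the second term in the target bound. I then track two events: $\mathcal{F}_1 = \{b_{iq} \notin L\}$, which (since $L$ contains the $\alpha$ $\pi_i$-worst items and $b_{iq}$ is the $\pi_i$-worst element of $A_i$) is equivalent to $A_i \cap L = \emptyset$; and $\mathcal{F}_2 = \{A_j \cap L \ne \emptyset$ for every $j < i\}$. On $\mathcal{F}_1 \cap \mathcal{F}_2$ no event $\cE^{ij}_q$ can occur: whenever $A_j$ contains some $g \in L$, the $\pi_i$-worst element $b_{jq}$ of $A_j$ is $\pi_i$-ranked no better than $g$ and so lies in $L$, while $b_{iq} \notin L$ forces $b_{iq} \succ_i b_{jq}$. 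This gives $\Prs{\bigcup_{j<i}\cE^{ij}_q} \le \Prs{\mathcal{F}_1^c} + \Prs{\mathcal{F}_2^c \mid \mathcal{F}_1}$, and the plan is to bound these by $2(\alpha/m)^{n+1-i}$ and $n/m$, respectively.

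For $\Prs{\mathcal{F}_1^c}$: because $i$ always takes the $\pi_i$-best remaining item, $b_{iq} \in L$ if and only if $P_{(q-1)n + i - 1} \subseteq L$, i.e., all $n - i + 1$ items in the pool immediately before $i$'s last pick lie in $L$. Conditioning on the set $A_i$ and using that, by symmetry of $\pi_{-i}$, the order in which the $m - q$ items of $\items \setminus A_i$ are drawn by the other agents is uniform, I reduce this to a uniform-subset ratio of the form $\binom{\alpha}{n-i+1}/\binom{m}{n-i+1} \le (\alpha/m)^{n - i + 1}$. A Gautschi-type refinement absorbs the small bias coming from the fact that $i$'s greedy picks preferentially remove non-$L$ items (which slightly concentrates the leftover pool in $L$ relative to a fully random process), producing the factor of $2$ and the claimed $2\left(\frac{8n\log m}{m^{1 - 1/n}}\right)^{n + 1 - i}$.

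For $\Prs{\mathcal{F}_2^c \mid \mathcal{F}_1}$: a union bound over the at most $n - 1$ agents $j < i$ reduces to controlling $\Prs{A_j \cap L = \emptyset \mid \mathcal{F}_1}$ for a single $j$. Conditional on $\mathcal{F}_1$, all $\alpha$ items of $L$ are partitioned among the bundles of the other $n - 1$ agents, and by symmetry of $\pi_{-i}$ the placement of $L$-items across those bundles is well-approximated by a balanced random partition. Tracking the pool at each of $j$'s $q$ turns and applying a hypergeometric-style inequality gives $\Prs{A_j \cap L = \emptyset \mid \mathcal{F}_1} \le 1/m$ (comfortably, since $\alpha q / m = 8 \log m \cdot m^{1/n}$). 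Summing yields the $n/m$ term.

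The hardest step will be the tight bound on $\Prs{\mathcal{F}_1^c}$. A direct stochastic-dominance comparison to a fully random pool fails, because $i$'s greedy behavior biases $P_{(q-1)n + i - 1}$ toward containing $L$ more than a purely random process would. The cleanest route I see is to condition on $A_i$, which specifies $i$'s contribution to the history exactly and renders the remaining picks exchangeable; the event then reduces to a clean binomial-ratio tail which, via Gautschi's inequality, is controlled to within a constant factor of $(\alpha/m)^{n - i + 1}$, completing the proof.
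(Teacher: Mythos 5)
Your decomposition into $\mathcal{F}_1 = \{b_{iq} \notin L\}$ and $\mathcal{F}_2 = \{A_j \cap L \ne \emptyset \text{ for all } j < i\}$ matches the paper's structure exactly, and your handling of $\mathcal{F}_2$ (Chernoff / hypergeometric bound giving $1/m$ per $j$, then union bound) is the right idea. However, there is a genuine quantitative gap in the $\mathcal{F}_1$ step that makes your proposed choice of $L$ fail.

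You choose $|L| = \alpha \approx 8n\log m \cdot m^{1/n}$ so that $\alpha/m$ equals the base of the target bound, and then argue that $\Prs{P_{(q-1)n+i-1}\subseteq L}$ is close to the ``uniform'' value $\binom{\alpha}{n+1-i}/\binom{m}{n+1-i} \approx (\alpha/m)^{n+1-i}$, with the bias from $i$'s greedy picks contributing only a factor of $2$. This is where the argument breaks: the bias is \emph{not} small. If you actually write out $\Pr[P_{(q-1)n+i-1} = S]$ for a fixed $S$ of size $n+1-i$, it factors as $\bigl(1/\binom{m}{n+1-i}\bigr)$ times a second product over $i$'s picks, and that second product evaluates (via Gautschi's inequality) to roughly $2\,(m^{1/n})^{\,n+1-i}$, i.e.\ a multiplicative bias of order $m^{(n+1-i)/n}$, not $O(1)$. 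The intuition is exactly what you identify — $i$ greedily removing good items concentrates the leftover pool in $L$ — but the effect is polynomially large in $m$, not a constant.

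Concretely, with your $L$ the correct bound becomes
\[
\Prs{\mathcal{F}_1^c} \;\lesssim\; \binom{\alpha}{n+1-i}\cdot 2\left(\frac{2}{m^{1-1/n}}\right)^{n+1-i} \;\lesssim\; 2\left(\frac{16 n\log m\cdot m^{1/n}}{m^{1-1/n}}\right)^{n+1-i} \;=\; 2\left(\frac{16 n\log m}{m^{1-2/n}}\right)^{n+1-i},
\]
which has an extra factor of $m^{1/n}$ in the base relative to the target. For $n=2$ the base is $\Theta(n\log m)$, not $o(1)$, so the claim becomes vacuous. The fix is to account for the $m^{(n+1-i)/n}$ bias explicitly and compensate by taking $|L| = \Theta(n\log m)$ (without the $m^{1/n}$ factor); the paper takes $|L| = \lceil 3n\log m\rceil + (n-1) \le 4n\log m$. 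Then $\binom{|L|}{n+1-i}\cdot 2\,(2/m^{1-1/n})^{n+1-i} \le 2\,(8n\log m/m^{1-1/n})^{n+1-i}$ as required, and the Chernoff step for $\mathcal{F}_2$ still works because the expected number of $L$-items that agent $j$ receives is $\gtrsim |L|/(n-1) \gtrsim 3\log m$, which is enough to get probability $\le 1/m$.
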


    We show how to conclude the proof of~\Cref{lem: round robin is ef} given these two lemmas,   then proceed to prove them. Using Lemmas~\ref{lem: bound for k < q, EF} and~\ref{lem: bound for k = q, EF}, we have
    \begin{align*}
        \Prs{\bigcup_{i,j} \cE^{ij} } &\le \sum_{i \in \agents} \left(\sum_{k = 1}^{q - 1}\Prs{\bigcup_{j} \cE^{ij}_k } + \Prs{\bigcup_{j} \cE^{ij}_q }  \right)\\
        &\le \sum_{i \in \agents} \left( \sum_{k = 1}^{q - 1}\frac{8}{\binom{q}{k}} + \frac{n}{m} + 2\left(\frac{8n\log m}{m^{1 - 1/n}}\right)^{n + 1 - i} \right)\\
        &=8n \sum_{k = 1}^{q - 1}\frac{1}{\binom{q}{k}} + \frac{n^2}{m} +  2 \sum_{i = 1}^n \left(\frac{8n\log m}{m^{1 - 1/n}}\right)^{n + 1 - i}.
    \end{align*}
    We now bound each of these sums. For the first,
    \begin{equation}\label{eq:q-sum}
        \sum_{k = 1}^{q - 1}\frac{1}{\binom{q}{k}} \le \frac{1}{\binom{q}{1}} + \frac{1}{\binom{q}{q - 1}} + \sum_{k = 2}^{q - 2} \frac{1}{\binom{q}{k}} \le \frac{2}{q} + \sum_{k = 2}^{q - 2} \frac{1}{\binom{q}{2}} = \frac{2}{q} + \frac{2(q - 3)}{q(q - 1)} \le \frac{4}{q} = \frac{4n}{m}.
    \end{equation}
    For the second, since $\frac{8n\log m}{m^{1 - 1/n}} \le 8/16 = 1/2$,
    \begin{align*}
        \sum_{i = 1}^n \left(\frac{8n\log m}{m^{1 - 1/n}}\right)^{n + 1 - i}
        &= \sum_{i = 1}^n \left(\frac{8n\log m}{m^{1 - 1/n}}\right)^{i} \le \sum_{i = 1}^\infty \left(\frac{8n\log m}{m^{1 - 1/n}}\right)^{i} = \frac{\frac{8n\log m}{m^{1 - 1/n}}}{1 - \frac{8n\log m}{m^{1 - 1/n}}} \le \frac{16n\log m}{m^{1 - 1/n}}.
    \end{align*}
    We conclude that
    \[
    \Prs{\bigcup_{i,j} \cE^{ij} } \le
        \frac{33n^2}{m} + \frac{16n\log m}{m^{1 - 1/n}} \in O\left(\frac{n^2}{m} + \frac{n \log m}{m^{\frac{n - 1}{n}}} \right),
    \]
    as needed.

    \begin{proof}[Proof of Lemma~\ref{lem: bound for k < q, EF}]
    We will  directly upper bound each $\Prs{\cE^{ij}_k}$ and union bound over the (at most) $n$ possible choices of $j$. 
    The equivalent random process of picking $R^{\ell}$ and   assigning each item randomly discussed above implies $\Prs{\cE^{ij}_k} = \Prs{\cE^{ij'}_k}$, for each $j, j' < i$. Indeed, the items assigned to  $j$ and $j'$ in this process  appear in exactly the same $R_\ell$ sets. Now  the distributions of $A_j$ and $A_{j'}$ are identical even when conditioning on $A_i$. As a result, we  only upper bound $\Prs{\cE^{ij}_k}$ for $j = i - 1$; the same bound  holds for all agents that $i$ could possibly envy (i.e., for all agents $j'$ with $j' < i$). 
    
    Consider the item $g_{(k - 1)n + i}$, which is  picked by $i$ in the $k$'th round. As noted above, $g_{(k - 1)n + i} = b_{ik}$, so $\cE^{ij}_k$ occurs exactly when $b_{jk} \succ_i b_{ik} = g_{(k - 1)n + i}$. 
    This  can only occur if there are $k$ timesteps, $t_1,\dots, t_k$, such that $j$ picks an item $g_{t_z}$ satisfying $ g_{t_z} \succ_i g_{(k - 1)n + i}$. For any $t > (k - 1)n + i$, $g_{(k - 1)n + i} \succ_i g_t$,  since $i$ always picks the best available remaining item. Hence, all $k$ of these (unfortunate) picks must have occurred before $i$'s  $k$'th pick. Agent $j$ made exactly $k$ picks before $i$ made their $k$'th pick, so for $b_{jk} \succ_i g_{(k - 1)n + i}$ to hold, a necessary and sufficient condition is that all these picks  were  preferred to $g_{(k - 1)n + i}$ (according to $\pi_i$), i.e.,  $g_{(\ell - 1)n + j} \succ_i g_{(k - 1)n + i}$ for all $\ell = 1, \ldots, k$. 

    Let $P_t^{(r)}$ be the $r$'th best item remaining in $P_t$ according to $\pi_i$. For all $t < t'$, $|P_t \setminus P_{t'}| = t' - t$, i.e., there are $t'-t$ fewer items available after $t'-t$ picks. This implies that for all $t < (k - 1)n + i$, $g_{(k - 1)n + i} \succeq_i P_t^{((k - 1)n + i - t)}$.\footnote{And this is tight only when all picks $g_t, \ldots, g_{(k - 1)n + i-1}$  are preferred to $g_{(k - 1)n + i}$ in $\pi_i$.}
   Furthermore,  $g_{(k - 1)n + i} = P_{(k - 1)n + i - 1}^{(1)}$ as $i$ picks the best available item remaining at time $(k - 1)n + i - 1$.  
   $|P_t \setminus P_{((k - 1)n + i - 1)}| = (k - 1)n + i - t - 1$, so at least one of $P_t^{(1)}, \ldots, P_t^{((k - 1)n + i - t)}$ must be available at time $(k - 1)n + i - 1$, and the best available item at time $(k - 1)n + i - 1$ must be at least as good as the worst of these. This implies that if $g_{(\ell - 1)n + j} \succ_i g_{(k - 1)n + i}$, then $g_{(\ell - 1)n + j} \succ_i P^{((k - \ell)n + (i - j)+1)}_{(\ell - 1)n + j - 1}$. Using the fact that $j = i - 1$, this simplifies to $g_{(\ell - 1)n + j} \succeq_i P^{((k - \ell)n + 1)}_{(\ell - 1)n + j - 1}$
   or, in words, agent $j$'s $\ell$'th pick must be one of the top $(k - \ell)n + 1$ available items according to $\pi_i$. This necessary condition means that
    \[
        \Prs{\cE^{ij}_k} \le \Prs{\bigcap_{\ell = 1}^k \left\{g_{(\ell - 1)n + j} \succeq_i P^{((k - \ell)n + 1)}_{(\ell - 1)n + j - 1}\right\}}.
    \]
    The key observation is that each of the events $g_{(\ell - 1)n + j} \succeq_i P^{((k - \ell)n + 1)}_{(\ell - 1)n + j - 1}$, over the samples of $\pi_{-i}$, are mutually independent. Hence, we get
    \[
    \Prs{\bigcap_{\ell = 1}^k \left\{g_{(\ell - 1)n + j} \succeq_i P^{((k - \ell)n + 1)}_{(\ell - 1)n + j - 1}\right\}}
    = \prod_{\ell = 1}^k \Prs{g_{(\ell - 1)n + j} \succeq_i P^{((k - \ell)n + 1)}_{(\ell - 1)n + j - 1}}.
    \]
    Furthermore, $g_{(\ell - 1)n + j}$ is a uniform random sample from $P_{(\ell - 1)n + j - 1}$, so \[\Prs{g_{(\ell - 1)n + j} \succeq_i P^{((k - \ell)n + 1)}_{(\ell - 1)n + j - 1}} = \frac{(k - \ell)n + 1}{|P_{(\ell - 1)n + j - 1}|}.\]
    Noticing that $|P_t| = m - t$, we get that,
    \begin{align}
    	\Prs{\cE^{ij}_k}
    	&= \prod_{\ell=1}^k \frac{(k - \ell)n + 1}{|P_{(\ell - 1)n + j - 1}|} = \prod_{\ell=1}^k \frac{(k - \ell)n + 1}{m - ((\ell - 1)n + j - 1)} = \prod_{\ell=1}^k \frac{(k - \ell)n + 1}{(q - \ell)n + n - j + 1} \label{eq:m-to-q}\\
    	&\le \prod_{\ell=1}^k \frac{(k - \ell)n + 1}{(q - \ell)n + 1} = \prod_{\ell = 1}^k \frac{k - \ell + 1/n}{q - \ell + 1/n} = \frac{\prod_{\ell = 0}^{k - 1} (1/n + \ell)}{\prod_{\ell = 0}^{k - 1} (q - k + 1/n + \ell)} = \frac{(1/n)^{\overline{k}}}{(q - k + 1/n)^{\overline{k}}}\nonumber.
    \end{align}
    The inequality follows because $(n - j) \ge 0$, and the second to last equality follows by a change of variables. The notation $(x)^{\overline{r}}$ here represents the rising factorial $(x)^{\overline{r}} = x(x + 1) \cdots (x + r - 1)$. We make use of the known equality $(x)^{\overline{r}} = \frac{\Gamma(x + r)}{\Gamma(x)}$ to get that
    \[
        \Prs{\cE^{ij}_k} = \frac{\Gamma(k + 1/n)}{\Gamma(1/n)}\cdot \frac{\Gamma(q - k + 1/n)}{\Gamma(q + 1/n)} \leq \frac{2}{n} \cdot \frac{\Gamma(k + 1/n)\Gamma(q - k + 1/n)}{\Gamma(q + 1/n)},
    \]
    where we replaced $\Gamma(1/n)$ by using the fact that $\Gamma(1 + x) = x\Gamma(x)$ for all $x$ (which implies that $\Gamma(1/n) = n\Gamma(1 + 1/n)$), and that the $\Gamma$ function has a   global minimum on the positive reals above $1/2$~\cite{wrench1968concerning} (and thus, $n\Gamma(1 + 1/n) \ge \frac{n}{2}$).

    Ideally, we could replace the $1/n$ terms inside the gamma function with $1$s, which would allow us to convert the Gammas to the corresponding factorial terms, and would simplify to $
    1/\binom{q}{k}$. We show that this is approximately correct by making use of Gautchi's inequality~\cite{gautschi1959some}, which states
    \[
        x^{1 - s} \le \frac{\Gamma(x + 1)}{\Gamma(x + s)} \le (x + 1)^{1 - s},
    \]
    for all $x > 0$ and $s \in (0, 1)$. We state our bound as the following claim, the proof of which is deferred to Appendix~\ref{app: missing proofs}.

    \begin{claim}\label{claim: gamma stuff 1}
$\frac{2}{n} \cdot \frac{\Gamma(k + 1/n)\Gamma(q - k + 1/n)}{\Gamma(q + 1/n)} \leq \frac{8}{n} \cdot \frac{1}{\binom{q}{k}}$.
    \end{claim}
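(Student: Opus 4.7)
The plan is to turn the claim into a product of three $\Gamma$-ratios of a single functional form, bound each using Gautschi's inequality in the appropriate direction, and then finish with an elementary estimate on a simple rational function of $q$ and $k$.

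First I will cancel the common $2/n$ prefactor and rewrite $\binom{q}{k}^{-1}=\Gamma(k+1)\Gamma(q-k+1)/\Gamma(q+1)$, reducing the claim to
\[
\frac{\Gamma(q+1)}{\Gamma(q+1/n)}\cdot\frac{\Gamma(k+1/n)}{\Gamma(k+1)}\cdot\frac{\Gamma(q-k+1/n)}{\Gamma(q-k+1)}\;\le\;4.
\]
Each factor is (possibly a reciprocal of) an expression of the form $\Gamma(x+1)/\Gamma(x+s)$ with $s=1/n\in(0,1)$, so Gautschi's inequality $x^{1-s}\le \Gamma(x+1)/\Gamma(x+s)\le(x+1)^{1-s}$ applies directly.

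Next I will apply Gautschi in the right direction on each of the three factors. On the first factor I use the upper bound with $x=q$, obtaining $(q+1)^{1-1/n}$. On the second and third factors, which are already reciprocals, I use the lower bound with $x=k$ and $x=q-k$ respectively, obtaining $k^{-(1-1/n)}$ and $(q-k)^{-(1-1/n)}$. Multiplying, the left-hand side is at most $\left((q+1)/(k(q-k))\right)^{1-1/n}$.

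The last step is elementary: bound $(q+1)/(k(q-k))$ over integers $1\le k\le q-1$. Since $k\mapsto k(q-k)$ is concave, it is minimized at the endpoints with value $q-1$, so the ratio is at most $(q+1)/(q-1)$, which is decreasing in $q$ and has maximum value $3$ at $q=2$. Because $1-1/n<1$ and the base is at least $1$, we conclude $\left((q+1)/(k(q-k))\right)^{1-1/n}\le 3<4$, giving the claim. The main obstacle is bookkeeping: one must match each of the three $\Gamma$-ratios to the correct direction of Gautschi so that the exponents combine cleanly into the single ratio $(q+1)/(k(q-k))$, rather than leaving extra slack that would blow up for small $k$.
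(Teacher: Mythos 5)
Your proposal is correct and follows essentially the same route as the paper: the same three-way factorization into Gautschi-applicable $\Gamma$-ratios, the same application of Gautschi's inequality in each direction, and the same reduction to bounding $(q+1)/(k(q-k))$, differing only in the elementary endgame (you bound it by $3$ via the endpoint minimum $k(q-k)\ge q-1$, the paper bounds it by $4$ via $k(q-k)\ge q/2$). One small phrasing hiccup: the base $(q+1)/(k(q-k))$ is \emph{not} always at least $1$ (e.g.\ $q=10$, $k=5$), so rather than arguing $x^{1-1/n}\le x$ you should argue by monotonicity that $x^{1-1/n}\le 3^{1-1/n}\le 3$ (or simply note that $x^{1-1/n}\le\max(1,x)$); the conclusion is unaffected.
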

    
    Therefore, for all $j < i$ and $k < q$, $\Prs{\cE^{ij}_k} \le \frac{8}{n} \cdot \frac{1}{\binom{q}{k}}$. Union bounding over the (at most) $n$ possible values of $j$, we have that 
    \begin{equation}
    \Prs{\bigcup_{j: j < i} \cE^{ij}_k} \le \frac{8}{\binom{q}{k}}.\end{equation}
    \end{proof}

    \begin{proof}[Proof of Lemma~\ref{lem: bound for k = q, EF}]
    Let $L \subseteq \items$ be the set of the bottom $\ceil{3n\log m} + (n - 1)$ items according to $\pi_i$. Notice that $\ceil{3n\log m} + (n - 1) \le 3n\log m + n \le 4n\log m$ since $\log(m )\ge \log(2n) \ge \log(4) \ge 1$.\footnote{We always interpret $\log$ as the natural log.} Recall that $b_{iq}$ is the worst item in agent $i$'s bundle. We consider the event that $b_{iq} \notin L$. 
    We show that the probability of $b_{iq} \in L$ is small and, conditioned on $b_{iq} \notin L$, it is unlikely that $i$ envies anyone. Formally, notice that
    \begin{align*}
        \Prs{\bigcup_{j : j < i} \cE^{ij}_q}
        &=\Prs{\bigcup_{j : j < i} \cE^{ij}_q \suchthat b_{iq} \notin L} \cdot \Prs{b_{iq} \notin L} + \Prs{\bigcup_{j : j < i} \cE^{ij}_q \suchthat b_{iq} \in L} \cdot \Prs{b_{iq} \in L} \\
        &\le \Prs{\bigcup_{j : j < i} \cE^{ij}_q \suchthat b_{iq} \notin L} + \Prs{b_{iq} \in L}\\
        &\le \sum_{j : j < i} \Prs{\cE^{ij}_q \suchthat b_{iq} \notin L} + \Prs{b_{iq} \in L}.
    \end{align*}
    We upper bound each of these terms individually.

    \begin{claim}\label{claim: bound on not L}
        For all $j < i$, $\Prs{\cE^{ij}_q \suchthat b_{iq} \notin L} \leq 1/m$.
    \end{claim}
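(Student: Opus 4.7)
The plan is to reduce $\cE^{ij}_q$ to the larger event $\set{A_j \cap L = \emptyset}$ and then bound the latter using the two-step process from the proof of \Cref{lem: bound for k < q, EF}. The main obstacle is verifying that, even after conditioning on $b_{iq}\notin L$, enough items of $L$ still land in rounds $0$ through $q - 1$ so that the independent, per-round random matchings in step 2 almost surely give $j$ at least one of them.

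Since $L$ consists of the $\ceil{3n\log m} + (n-1)$ worst items of $\pi_i$, the event $b_{iq}\notin L$ implies $b_{iq} \succ_i g$ for every $g \in L$. Thus $\cE^{ij}_q$, which asserts that every item of $A_j$ is preferred by $i$ to $b_{iq}$, implies $A_j \cap L = \emptyset$, and likewise $A_i \cap L = \emptyset$, so $L \subseteq \bigcup_{\ell=0}^q R_\ell$. To bound the probability that $A_j \cap L = \emptyset$, I would invoke the two-step process: step 1 determines the sets $R_0, \ldots, R_q$ (and in particular whether $b_{iq}\notin L$), while step 2 matches the items inside each $R_\ell$ uniformly at random, independently across rounds, to the agents picking in that round.

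Setting $X_\ell = |R_\ell \cap L|$, the inclusion $L \subseteq \bigcup_\ell R_\ell$ together with $|R_q| = n - i \leq n - 1$ yields $\sum_{\ell=0}^{q-1} X_\ell \geq |L| - (n-1) = \ceil{3n\log m}$. For $j < i$, agent $j$ receives exactly one uniformly random item from $R_0$ (of size $i-1$) and from each middle round $R_\ell$ (of size $n-1$), independently across rounds. Hence, conditionally on any step 1 outcome with $b_{iq}\notin L$,
\[
    \Prs{A_j \cap L = \emptyset \suchthat \text{step 1}}
    = \frac{i - 1 - X_0}{i - 1}\prod_{\ell=1}^{q-1}\frac{n - 1 - X_\ell}{n - 1}
    \leq \exp\Bigl(-\frac{X_0}{i-1} - \sum_{\ell=1}^{q-1}\frac{X_\ell}{n-1}\Bigr),
\]
using $1 - x \leq e^{-x}$. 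Applying $1/(i-1) \geq 1/(n-1)$ for $i \leq n$, the exponent is at most $-\sum_{\ell=0}^{q-1} X_\ell/(n-1) \leq -\ceil{3n\log m}/(n-1) \leq -3\log m$, so the conditional probability is bounded by $m^{-3}$. Averaging over step 1 outcomes in $\set{b_{iq}\notin L}$ gives $\Prs{\cE^{ij}_q \suchthat b_{iq}\notin L} \leq 1/m^3 \leq 1/m$, as claimed.
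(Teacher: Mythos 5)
Your argument matches the paper's: reduce $\cE^{ij}_q$ conditioned on $b_{iq}\notin L$ to $A_j\cap L=\emptyset$, invoke the two-step process and the per-round independence of $j$'s picks, and show the conditional miss probability is small. The only difference is cosmetic at the final step: the paper applies a multiplicative Chernoff bound to $\sum_{\ell=1}^{q-1}\mathbb{I}[A_j\cap R_\ell\cap L\ne\emptyset]$, whereas you bound the product directly with $1-x\le e^{-x}$ (and also retain the $R_0$ factor), which gives the slightly sharper $m^{-3}$; both suffice for the stated $1/m$.
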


    \begin{claim}\label{claim:bound on L}
    $\Prs{b_{iq} \in L} \leq 2\left(\frac{8n\log m}{m^{1 - 1/n}}\right)^{n + 1 - i}$.
    \end{claim}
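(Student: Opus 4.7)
The plan is to first reformulate the event, then apply a union bound over subsets of $L$, and finally analyze individual survival probabilities via a round-by-round calculation that parallels the argument used in Lemma~\ref{lem: bound for k < q, EF}. First observe that $b_{iq} \in L$ if and only if the entire pool $P_{t_0} \coloneqq P_{(q-1)n + i - 1}$ of items remaining just before agent $i$'s last pick is contained in $L$: since $b_{iq}$ is the $\pi_i$-best item of $P_{t_0}$ and $L$ consists of the bottom $|L|$ items of $\pi_i$, the best of $P_{t_0}$ lies in $L$ exactly when every item of $P_{t_0}$ does. Because $|P_{t_0}| = n - i + 1$ equals $|S|$ for each relevant $S$, the event $\set{S \subseteq P_{t_0}}$ coincides with $\set{P_{t_0} = S}$, yielding
\[
\Prs{b_{iq} \in L} \;=\; \sum_{\substack{S \subseteq L \\ |S| = n - i + 1}} \Prs{S \subseteq P_{t_0}}.
\]

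Next, for a fixed $S \subseteq L$, I would bound $\Prs{S \subseteq P_{t_0}}$ by tracking the random process round by round. Items in $S$ have the lowest $\pi_i$-ranks, so agent $i$'s greedy picks avoid $S$ as long as $\bar L \coloneqq \items \setminus L$ is nonempty in the pool; since $|\bar L| = m - O(n\log m)$ vastly exceeds the $q - 1$ picks agent $i$ makes before time $t_0$, this ``good event'' holds except with probability that can be folded into the final constant. Conditioned on it, the only threats to the survival of $S$ come from the $n - 1$ uniformly random picks per round made by agents $j \ne i$. In round $\ell$ with starting pool size $u_\ell = m - (\ell - 1)n$, together with the first $i - 1$ picks of round $q$, the conditional probability that no item of $S$ is selected is the telescoping product
\[
\prod_{\ell = 1}^{q - 1} \prod_{\substack{k = 1 \\ k \ne i}}^{n} \frac{u_\ell - k + 1 - |S|}{u_\ell - k + 1} \;\cdot\; \prod_{k = 1}^{i - 1} \frac{n - k + 1 - |S|}{n - k + 1},
\]
which I can simplify via rising factorials and Gautschi's inequality in direct analogy with Claim~\ref{claim: gamma stuff 1}. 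Plugging in $|S| = n - i + 1$ and using $\sum_{\ell = 1}^{q - 1} 1/u_\ell \gtrsim (\log q)/n$, this product is bounded by $O\!\left(1 / m^{1 - 1/n}\right)^{n - i + 1}$.

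Finally, combining these ingredients via $\binom{|L|}{n - i + 1} \leq |L|^{n - i + 1}/(n - i + 1)!$ and $|L| \leq 4n \log m$ gives
\[
\Prs{b_{iq} \in L} \;\leq\; \binom{|L|}{n - i + 1} \cdot O\!\left(\frac{1}{m^{1 - 1/n}}\right)^{n - i + 1} \;\leq\; 2\left(\frac{8n\log m}{m^{1 - 1/n}}\right)^{n + 1 - i},
\]
where the final inequality absorbs all lost absolute constants into the factor of $8$ and the leading factor of $2$ (the latter also absorbs the cost of the ``bad event'' in which agent $i$'s greedy pick accidentally lands in $S$ because $\bar L$ is prematurely exhausted). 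The main obstacle will be the per-round product calculation: extracting the exact exponent $(1 - 1/n)(n - i + 1)$ of $m$ and verifying that all the resulting constants combine cleanly with the $\binom{|L|}{n - i + 1}$ factor from the union bound requires delicate bookkeeping of rising factorials and a Gautschi-type bound on Gamma-function ratios—analogous to but somewhat more involved than the computation used in the $k < q$ case of Lemma~\ref{lem: bound for k < q, EF}.
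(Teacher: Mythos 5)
Your plan matches the paper's proof in all its essential ingredients: you reformulate $b_{iq}\in L$ as $P_{(q-1)n+i-1}\subseteq L$, note that equality of cardinalities makes ``$\subseteq$'' into ``$=$'', decompose over $(n+1-i)$-subsets $S$ of $L$, bound $\Prs{P_{t_0}=S}$ by a product of per-step survival probabilities, and close with a union bound over the $\binom{|L|}{n+1-i}$ choices of $S$; the per-step probabilities you write down, $\frac{u_\ell-k+1-|S|}{u_\ell-k+1}$, are exactly the paper's $\frac{m+1-t-(n+1-i)}{m+1-t}$ after identifying $t=(\ell-1)n+k$, and the rising-factorial/Gautschi simplification you invoke is exactly how the paper handles the telescoped product.

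The one genuine deviation is your ``good event'' conditioning for agent $i$'s own picks. This is both unnecessary and hazardous. The paper simply observes that $\Prs{g_t\notin S\mid S\subseteq P_{t-1}}\le 1$ for timesteps where $i$ picks and moves on; there is no need to argue that $i$ deterministically avoids $S$. Your good event---that $\bar L\cap P_{t-1}\ne\emptyset$ at every one of $i$'s picks before $t_0$---is not obviously small to quantify independently: its failure at $i$'s $(q-1)$'th pick is the event $P_{(q-2)n+i-1}\subseteq L$, which is essentially a harder-to-avoid cousin of the very event $\Prs{b_{iq}\in L}$ you are bounding. You can show $\Prs{\neg E}$ is dominated by $\Prs{b_{iq}\in L}$, which makes the decomposition $\Prs{B}\le\Prs{B\mid E}+\Prs{\neg E}$ circular unless you bound $\Prs{\neg E}$ from scratch. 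Simply dropping the conditioning and bounding $i$'s pick probabilities by $1$ reproduces the same product you are targeting with no extra work. One small point in your favor: your union bound keeps the $1/(n+1-i)!$ from $\binom{|L|}{n+1-i}$, which you will want, because the exact evaluation of the ``all-$t'$'' part of the product produces $1/\binom{m}{n+1-i}$ whose honest bound carries a compensating $(n+1-i)!$; the two cancel cleanly and yield the stated constant. Beyond these remarks, the ``delicate bookkeeping'' you flag as the remaining obstacle is indeed where the content of the paper's proof lives: the change of variables, the split into the $\binom{m}{n+1-i}^{-1}$ term and the $t'\equiv 0\pmod n$ correction, and the Gautschi step turning $\frac{\Gamma(q+\frac{n+1-i}{n})}{\Gamma(q)}$ into $O\left(m^{(n+1-i)/n}\right)$. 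Your heuristic $\sum_\ell 1/u_\ell\approx(\log q)/n$ gives the right exponent of $m$ but does not by itself control the constants (in particular it leaves an $n^{O(n)}$ slack); you will need the rising-factorial identities to get the tight $8n\log m$ base.
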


    Given these two bounds, we can conclude the proof of the lemma as follows:
    \begin{equation}\label{eq:part2}
    	\Prs{\bigcup_{j: j < i} \cE^{ij}_q} 
     \le \sum_{j : j < i} \Prs{\cE^{ij}_q \suchthat b_{iq} \notin L} + \Prs{b_{iq} \in L}
     \le
     \frac{n}{m} + 2\left(\frac{8n\log m}{m^{1 - 1/n}}\right)^{n + 1 - i}.
    \end{equation}
    It remains to prove the two claims.

    \begin{proof}[Proof of Claim~\ref{claim: bound on not L}]
    Notice that $b_{iq} \notin L$ implies $A_i \cap L = \emptyset$, so all the items in $L$ must have been chosen by other agents.
    If $A_j \cap L \ne \emptyset$ it follows that   $b_{jq} \in L$ and, hence,  $b_{iq} \succ_i b_{jq}$.
    Therefore, $\Prs{\cE^{ij}_q \suchthat b_{iq} \notin L} \le \Prs{A_j \cap L = \emptyset \suchthat b_{iq} \notin L}$ and it suffices to upper bound the latter. 

    To analyze the bundle of an agent $j \neq i$, we revert to the second view of the allocation process in which the bundles $R_0, \ldots, R_q$ are first sampled and then matched to agents different that $i$. 
    We condition on any valid $R_0, \ldots, R_q$ such that $L \subseteq \bigcup_\ell R_\ell$, and now consider sampling the actual bundles $A_{j'}$ for $j' \ne i$ from these. Except for $R_0$ and $R_q$, each $|R_\ell| = n -1$ and $A_j$ contains exactly one item from $R_\ell$ chosen uniformly at random. Therefore, with probability $\frac{|R_{\ell} \cap L|}{n - 1}$, $A_j \cap R_{\ell} \cap L \ne \emptyset$. 
    In words, if there are $r$ items from $L$ in $R_\ell$, then $j$   receives one of them with probability $r/(n - 1)$. Importantly, these are independent across rounds $\ell$. It follows that
    \begin{align*}
    	\Prs{A_j \cap L = \emptyset \suchthat b_{iq} \notin L}
    	&\le \Prs{\sum_{\ell = 1}^{q-1} \mathbb{I}[A_j \cap L \cap R_\ell = \emptyset] = 0 }. 
    \end{align*}
    By  linearity of expectation and the choice of $L$, 
    \begin{equation}\label{eq:l-lower}
    	\E \left[\sum_{\ell = 1}^{q-1} \mathbb{I}[A_j \cap L \cap R_\ell] \right] = \sum_{\ell = 1}^{q - 1} \frac{|R_\ell \cap L|}{n - 1} \ge \frac{|L| - |R_0| - |R_q|}{n - 1} \ge \frac{|L| - (n - 1)}{n - 1} \ge \frac{3n \log m}{n - 1} \ge 3 \log m.
    \end{equation}
    The sum $\sum_{\ell = 1}^{q-1} \mathbb{I}[A_j \cap L \cap R_\ell = \emptyset]$ is the sum of independent Bernoulli variables indicating whether $j$ received an item from $L$ in round $\ell$. Applying a Chernoff bound we have
    \begin{align*}
    	\Prs{\sum_{\ell = 1}^{q-1} \mathbb{I}[A_j \cap L \cap R_\ell = \emptyset] = 0 }
    	&\le \Prs{\sum_{\ell = 1}^{q-1} \mathbb{I}[A_j \cap L \cap R_\ell = \emptyset] < \left(1 - \sqrt{\frac{2}{3}}\right)\E \left[\sum_{\ell = 1}^{q-1} \mathbb{I}[A_j \cap L \cap R_\ell] \right]}\\
    	&\le \exp\left(\frac{-\left(\sqrt{\frac{2}{3}}\right)^2 \E \left[\sum_{\ell = 1}^{q-1} \mathbb{I}[A_j \cap L \cap R_\ell ] \right]}{2}\right)\\
    	&\le \exp\left(\frac{-\frac{2}{3} (3 \log m)}{2}\right)\\
    	&= \exp(-\log m) = \frac{1}{m}.
    \end{align*}
    Putting this all together, we have that 
    \begin{equation}
        \Prs{\cE^{ij}_q \suchthat b_{iq} \notin L} \le \Prs{A_j \cap L = \emptyset \suchthat b_{iq} \notin L} \le 1/m, \label{eq:preijq}
    \end{equation}
    which concludes the proof of Claim~\ref{claim: bound on not L}.
    \end{proof}

    The proof of Claim~\ref{claim:bound on L} is deferred to Appendix~\ref{app: missing proofs}. This concludes the proof of Lemma~\ref{lem: bound for k = q, EF}.
    
\end{proof}

\subsection{Round-Robin is $\SDEFX$: The proof of Lemma~\ref{lem: rr is sdefx}}\label{subsec: RR is EFX}

We extend the analysis of~\Cref{subsec: RR EF} to get $\SDEFX$ allocations for arbitrary $m$.  Fix an arbitrary $m$ and let $m = qn + r$ for $1 \le r \le n$ and $r,q\in \mathbb{N}$. Let $\alloc^{q}$ be the allocation after $qn$ steps, and   $\alloc$ be the  complete allocation of all $qn+r$ items. Also, let $P_t$ denote the pool of available items at ``time'' $t$, i.e., after $t$ picks have been made (so, $P_0 = \items$ and $P_m = \emptyset$). Notice that
    \begin{align*}
    \Prs{\alloc \text{ is SD-EFX}}
    &\ge \Prs{\alloc \text{ is SD-EFX and } \alloc^q \text{ is SD-EF}}\\
    &= \Prs{\alloc \text{ is SD-EFX} \suchthat \alloc^q \text{ is SD-EF}} \cdot \Prs{\alloc^q \text{ is SD-EF}}.
    \end{align*}
    We   show that $\Prs{\alloc \text{ is SD-EFX} \suchthat \alloc^q \text{ is SD-EF}} = 1$ and $\Prs{\alloc^q \text{ is SD-EF}} \ge 1 - O\left(\frac{n^2}{m}\right)$. 
    
    We begin with the former.    
    Fix arbitrary samples $\pi_1, \ldots, \pi_n$ such that $\alloc^q$ is SD-EF. We want to show that $\alloc$ is SD-EFX. Consider agents $i, j \in \agents$. We show that either $A^q_i \sdpref_{\pi_i} A_j$ or, for all $g \in A_j$, $A^q_i \sdpref_{\pi_i} A_j \setminus \set{g}$.
    Since $A^q_i \subseteq A_i$, the same holds for $A_i$.
    It must be that either $A_j = A^q_j$ or $A_j = A^q_j \cup \set{g}$ for some $g \in P_{qn}$. If $A_j = A^q_j$ it follows that $A^q_i \succeq A^q_j = A_j$, since $\alloc^q$ is $\SDEF$ by assumption. 
    Suppose instead that $A_j = A^q_j \cup \set{g^*}$ for some $g^* \in P_{qn}$.  
    The key observation is that  $g^* \in P_{qn}$, implying it was available (and not selected) at every time step that $i$ picked an item, so  $g' \succ g^*$  for all $g' \in A_i$. Note that, this doesn't immediately imply the EFX condition, since $g^*$ might not be the worst item (from $i$'s perspective) that $j$ owns.
    Let $b_{ik}$ and $b_{jk}$ denote the $k$'th best item in $A^q_i$ and $A^q_j$ according to $\pi_i$, respectively. By assumption $b_{ik} \succ_i b_{jk}$ for all $k$. Let $b'_{jk}$ be the $k$'th best item in $A_j \setminus \set{g}$. 
    If $g^* = g$ then $A^q_i \sdpref_{\pi_i} A_j \setminus \set{g^*} = A^q_j$. 
    Otherwise, if $g^* \ne g$, suppose $g^* = b'_{j\ell}$ for some $\ell$. Notice that for all $k < \ell$, $b'_{jk} = b_{jk}$ so $b_{ik} \succ_i b'_{jk}$. For $k \ge \ell$, we have that $b_{ik} \succ_i g^* = b_{j\ell} \succeq_i b'_{jk}$. Hence, $A^q_j \sdpref_{\pi_i} A_j \setminus \set{g^*}$, as needed.
    
    Next, we show $\Prs{\alloc^q \text{ is SD-EF}} \ge 1 - O\left(\frac{n^2}{m}\right)$. Specifically, we will show that the probability that $\alloc^q$ is not SD-EF is upper bounded by $\frac{2066n^2}{m}$. 
    We assume that $m \ge 2000n^2$ (otherwise the bound is trivial), which implies that $qn \ge m/2$. The analysis is very similar to showing Round Robin is SD-EF when $m = qn$, so we simply describe which changes need to be made. Let $b^i_{jk}$ denote the $k$'th best item according to $\pi_i$ in the bundle $A^q_j$; for ease of notation we write $b_{jk}$ when $i$ is clear from the context. Let $\cE^{ij}_k$ be the that $b_{jk} \succ_i b_{ik}$, and let $\cE^{ij} = \bigcup_{k = 1}^q \cE^{ij}_k$ be $A^q_i \notsdpref_{\pi_i} A^q_j$.
    That is, $\cE^{ij}$, $b^i_{jk}$, and $\cE^{ij}_k$ are defined analogously to the proof of Lemma~\ref{lem: round robin is ef}, but with respect to $\alloc^q$ instead of $\alloc$.
    Let $R_\ell$ be the set of items picked between agent $i$'s $\ell$'th and $(\ell + 1)$'th pick, and 
    $R_q = P_{qn}$ be the set of items remaining items after the allocations in  $\alloc^q$.

    On a high level, our goal is again to upper bound
    by getting bounds on $\Prs{\cE^{ij}_k}$ and $\sum_{i \in \agents} \Prs{\bigcup_{j: j < i} \cE^{ij}_q}$, similarly to Lemmas~\ref{lem: bound for k < q, EF} and~\ref{lem: bound for k = q, EF}. Using very similar bounds we have

    \begin{align*}
        \Prs{\bigcup_{i,j} \cE^{ij} }
        &\le \sum_{i \in \agents} \sum_{k = 1}^q \Prs{\bigcup_{j: j < i} \cE^{ij}_k}\\
        &\le \sum_{i \in \agents} \left(\sum_{k = 1}^{q - 1}\Prs{\bigcup_{j} \cE^{ij}_k } + \Prs{\bigcup_{j} \cE^{ij}_q }  \right)\\
        &\le n \sum_{k = 1}^{q - 1}\Prs{\bigcup_{j} \cE^{ij}_k } + \sum_{i \in \agents} \Prs{\bigcup_{j} \cE^{ij}_q } \\
        &\le 8n \sum_{k = 1}^{q - 1}\frac{1}{\binom{q}{k}} + \sum_{i \in \agents} \Prs{\bigcup_{j} \cE^{ij}_q }  .
    \end{align*}

Since $qn \ge m/2$ we can, similarly to Inequality~\eqref{eq:q-sum}, bound the first term by $\frac{64n^2}{m}$. For the second term, one complication is that our bound on $\Prs{\bigcup_{j} \cE^{ij}_q }$ needs to take care of the $n=2$ case, separately. This is caused because the analog of Lemma~\ref{lem: bound for k = q, EF} is not a bound of $\frac{n}{m} + 2\left(\frac{8n\log m}{m^{1 - 1/n}}\right)^{n + 1 - i}$, but a bound of $\frac{n}{m} + 2\left(\frac{10n\log m}{m^{1 - 1/n}}\right)^{n + 1 - i + r}$, which is too weak for $n=2$.
We explain these differences, and how to address them, in Appendix~\ref{app: missing from rr efx}.

\subsection{Round-Robin is asymptotically optimal: The proof of Lemma~\ref{lem: rr lower bound}}\label{subsec: RR is asymptotically optimal}


Fix $n$ and $m$ with $m \ge 2n$. We prove that $\EF$ and $\SDEFX$ allocations do not exist with probability at least $\min\left(\frac{n^2}{8m}, \frac{1}{8}\right)$. For $\SDEFX$ we first show that if two agents have the same favorite item, no $\SDEFX$ allocation exists. For $\EF$, if two agents have unit demand valuations, then clearly having the same favorite item also implies that no $\EF$ allocations exist. We then show using standard birthday paradox bounds that the probability two agents have the same favorite item is at least $\min\left(\frac{n^2}{8m}, \frac{1}{8}\right)$.

For the former, suppose two agents have the same favorite item $g$ with $m \ge 2n$. We show that no $\SDEFX$ allocation exists. Fix an allocation $\alloc$. First note that if $\alloc$ is not \emph{balanced}, i.e., $||A_i| - |A_j|| \le 1$, then it cannot be SD-EFX. Indeed, if $|A_i| - |A_j| > 1$, then for any $g \in A_i$, $A_i \setminus \set{g} > |A_j|$ so it cannot be the case that $|A_j| \sdpref_{\pi_j} A_i \setminus \set{g}$. Next, suppose $\alloc$ is balanced. Since the two agents have the same favorite item, there is some agent $i$ whose favorite good is $g$ and who did not receive $g$. Further, there is an agent $j$ such that $g \in A_j$, and since $\alloc$ is balanced, $|A_j| \ge 2$. Therefore, there is some $g' \ne g$ such that $g' \in A_j$. However, since $g \in A_j \setminus \set{g'}$, $A_i \notsdpref_{\pi_i} A_j \setminus \set{g'}$, so $\alloc$ is not SD-EFX.

Next, we show that the probability two agents have the same favorite item is at least $\min\left(\frac{n^2}{8m}, \frac{1}{8}\right)$. Notice that each agent's favorite item is a uniformly selected item, independent of other agents. The probability that two agents have the same first item is known from the collision analysis of Hash functions. Indeed, it is known as long as $n \le \sqrt{2m}$, the probability that two agents have the same first item is at least $\frac{n(n - 1)}{4m}$~\cite{katz2020introduction}. In other words, as long as $m \ge n^2/2$, two agents have the same first item with probability at least $\frac{n(n - 1)}{4m} \ge \frac{n^2}{8m}$. For $m < n^2/2$, notice that the probability of a collision is decreasing in $m$. Hence, the probability of a collision is at least the probability with $n$ agents and $m' = \ceil{n^2/2} \le n^2$ items. The previous result says that this probability is then at least $\frac{n^2}{8m'} \ge \frac{1}{8}$.

\subsection{Give-Away Round Robin}\label{subsec: give away}

  As we've shown, Round-Robin is asymptotically optimal as $n$ grows. However, while it is often reasonable to assume that the number of items is large, it is often less reasonable to assume the same for the number of agents. 
  When $n$ is a constant, no $\SDEF$ allocations exist with probability $\Omega_n(1/m)$. Although Round-Robin finds $\SDEF$ allocations with probability $1 - O_n(1/m)$, when $m$ is divisible by $n$, \Cref{thm:positive for order consistent} only gives an upper bound of $1 - O_n(\frac{\log m}{m^{1 - 1/n}})$ on the probability of finding such an allocation.
  A careful analysis shows this is relatively tight: with probability $\Omega_n(\frac{1}{m^{1 - 1/n}})$, the last agent will be left with their worst item, which implies the resulting allocation cannot be $\SDEF$. For example, with $n = 2$, Round-Robin will not find an $\SDEF$ allocation with probability $\Omega(1/\sqrt{m})$, while the lower bound only implies such allocations don't exist with probability $\Omega(1/m)$. This motivates an alternative algorithm that closes this gap. To avoid receiving a truly bad item, agents first \emph{give away} a bad item to each of the other agents, then proceeds with  regular Round-Robin on the remaining items. We formalize this in Algorithm~\ref{alg:give-away}, and prove it achieves optimal (up to log factors) probability of finding an $\SDEF$ allocation.

\begin{algorithm}\label{alg:give-away}
\caption{Give-Away Round Robin}
\DontPrintSemicolon
    \textbf{Input:} Valuations $v_1, \ldots, v_n$ which are order consistent with respect to $\pi_1, \ldots, \pi_n$.\\
    \textbf{Output:} An allocation $\alloc = (A_1, A_2, \ldots, A_n)$.\\
    \hrulefill

    \textbf{Set} $A_i \gets \emptyset$ for all agents $i \in \agents$\;
\textbf{Set} $P \gets \items$\;

    \textbf{Phase 1: Give-away} ($n(n - 1)$ items)

    \For{$i = 1, \ldots, n$}{
        \For{$j = 1, \ldots, n$, $j \ne i$}{
            {\bf Let} $g \in P$ be the unallocated item with the highest index in $\pi_i$.\;
            {\bf Set} $A_j \gets A_j \cup \{g\}$\;
            {\bf Set} $P \gets P \setminus \{g\}$\;
        }
        
    }
    
    \textbf{Phase 2: Round-Robin}
    
    \For{$i = 1, \ldots, n, 1, \ldots, n, 1, \ldots$}{
    {\bf Let} $g \in P$ be the unallocated item with the lowest index in $\pi_i$.\;
    {\bf Set} $A_i \gets A_i \cup \{g\}$\;
    {\bf Set} $P \gets P \setminus \{g\}$\;
}
\Return{$(A_1, A_2, \ldots, A_n)$}

\end{algorithm}

\begin{theorem}\label{thm:give-away}
    Let $\pi_1, \ldots, \pi_n$ be sampled independently and uniformly at random. When  $m = qn$, $q\in \mathbb{N}$,
    \[
        \Prs{\text{Give-Away Round Robin is SD-EF}} \ge 1 - \tilde{O}_n\left(\frac{1}{m} \right).
    \]
\end{theorem}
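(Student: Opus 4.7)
The proof plan mirrors the analysis of Lemma~\ref{lem: round robin is ef}: for each ordered pair $(i,j)$ with $j \ne i$, decompose $\cE^{ij} = \bigcup_{k=1}^{q} \cE^{ij}_k$, where $\cE^{ij}_k$ is the event that $i$'s $k$-th best item in $A_j$ (according to $\pi_i$) is strictly preferred to $i$'s $k$-th best item in $A_i$, and union-bound. The technical novelty is the hybrid structure of each bundle: $A_i = A_i^1 \cup A_i^2$, where $|A_i^1| = n-1$ consists of the gifts $i$ received in Phase~1 and $|A_i^2| = q-(n-1)$ consists of $i$'s Round-Robin picks in Phase~2. Following the informal sketch in the introduction, the plan is to split $k$ into four ranges, ``high'' (small $k$), ``middle'', ``low'' (large $k$, not $q$), and ``last'' ($k=q$), and analyze each separately.

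For the \emph{middle} range, I would fix $\pi_i$ and condition on the history of Phase~1. From $i$'s perspective, each gift handed out by another agent $j'$ is a uniformly random item from the pool at that time (since $\pi_{j'}$ is independent of $\pi_i$, the worst item in $\pi_{j'}$ restricted to any fixed set is uniform over that set), while $i$'s own giveaways are deterministic functions of $\pi_i$ and the history. After this conditioning, Phase~2 is a standard Round-Robin instance on a reduced pool of $m - n(n-1)$ items, and the analysis of Lemma~\ref{lem: bound for k < q, EF} applies essentially verbatim to yield $\Prs{\cE^{ij}_k} = O(1/\binom{q-(n-1)}{k})$. Summing over middle $k$ and pairs $(i,j)$ contributes $O(n^2/m)$ to the failure probability, exactly as in \eqref{eq:q-sum}.

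For the \emph{high} range, the concern is that a gift received by $i$ in Phase~1 could land near the top of $\pi_i$ and be outperformed by $j$'s corresponding item. Since each such gift is a uniformly random item from a pool of size $\Theta(m)$, the probability that any of the $n-1$ gifts to $i$ falls in the top $\tilde{O}_n(1)$ positions is $\tilde{O}_n(1/m)$. I would union-bound over these events and the $O(\log m)$ positions of the high range to absorb the cost into $\tilde{O}_n(1/m)$. The \emph{low} range (large $k < q$) is handled symmetrically, combining the goodness of $i$'s Phase~2 picks with the fact that $A_j$ already contains a gift from $i$ situated near the bottom of $\pi_i$.

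The \emph{last} position $k = q$ is where Give-Away pays off versus plain Round-Robin, and I expect it to be the main obstacle. The key observation is that for every $j \ne i$, the bundle $A_j$ contains the gift sent from $i$ to $j$ in Phase~1, which is necessarily among the $O(n^2)$ worst items in $\pi_i$; hence $b^i_{jq}$ is always in this bottom layer. It therefore suffices to show that, with probability $1 - \tilde{O}_n(1/m)$, agent $i$'s own worst item $b^i_{iq}$ is strictly better than every item in this bottom layer. This splits into two sub-claims: (a) none of the $n-1$ random gifts $i$ receives lies in the bottom $\Theta(n^2 \log m)$ of $\pi_i$, which fails with probability $\tilde{O}_n(1/m)$ by a direct calculation over uniform draws; and (b) the last Phase~2 pick of $i$ is not in that bottom layer, which is bounded via a refinement of Claim~\ref{claim:bound on L}, using that Phase~1 has already removed $n(n-1)$ items so the remaining pool at $i$'s last turn in Phase~2 is less likely to be dominated by $\pi_i$'s very worst items. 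The delicate part is that the random set of items removed in Phase~1 and the items picked in Phase~2 are correlated through $\pi_i$, so I expect the proof to require careful conditioning on the positions of the Phase~1 gifts in $\pi_i$ before invoking a Chernoff-style bound analogous to \eqref{eq:l-lower} on the Phase~2 process.
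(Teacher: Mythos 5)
Your four-range decomposition matches the paper's (Lemmas~\ref{lem:sum1}--\ref{lem:sum4}), but two of your range analyses have genuine gaps.

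\textbf{High range.} You worry that ``a gift received by $i$ in Phase~1 could land near the top of $\pi_i$,'' but gifts \emph{to} $i$ can never hurt: $i$'s Phase~2 picks are unaffected by what $i$ already holds, so adding a gift to $A_i$ only (weakly) improves each $b^i_{ik}$. The real risk is that one of $i$'s top items ends up in $A_j$ for some $j\neq i$, either because a third agent gives it away in Phase~1 or because some $j$ picks it in Phase~2 before $i$ has taken it. The paper bounds $\Prs{T \nsubseteq A_i}$ directly (Lemma~\ref{lem:sum1}) using that $i$ never gives away an item in $T$ and that items reaching others from $i$'s top are uniform draws from a large pool. Your bound, as stated, does not address either of these two mechanisms and does not establish the claim.

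\textbf{Middle range.} You claim ``after conditioning on Phase~1, Phase~2 is a standard Round-Robin instance and Lemma~\ref{lem: bound for k < q, EF} applies essentially verbatim to yield $\Prs{\cE^{ij}_k} = O(1/\binom{q-(n-1)}{k})$.'' This misses that $\cE^{ij}_k$ is defined with respect to the \emph{full} bundle $A_j$, which contains the $n-2$ gifts $j$ received from non-$i$ agents. From $i$'s perspective these are uniformly random items and may sit near the top of $\pi_i$. The reduction to ``a standard Round-Robin instance'' discards them, and the event $\cE^{ij}_k$ in the reduced instance is strictly weaker than the true event. The paper handles this by requiring only $k-(n-2)$ of $j$'s Phase~2 picks to beat $g_{(k-1)n+i}$ and bounding the count via a Chernoff bound on independent Bernoullis (Lemma~\ref{lem:sum2}). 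Because the middle range starts at $k \geq 32\log m + 4n$, a corrected combinatorial bound $O(1/\binom{q-(n-1)}{\,k-(n-2)\,})$ would also be small enough, but you need to make that correction explicit; the verbatim reuse does not go through.

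\textbf{Low range and $k=q$.} Your $k=q$ plan (show $b^i_{iq}$ escapes the bottom $\Theta(n^2\log m)$ layer via two sub-claims) is plausible and may succeed, but it is substantially heavier than the paper's argument: with probability $1-O(n/m)$, $i$ manages to give away exactly its bottom $n-1$ items, in which case $b^i_{iq}$ strictly dominates every $b^i_{jq}$ immediately (Lemma~\ref{lem:sum4}). Your ``symmetric'' treatment of the low range is too vague to evaluate; the paper's Lemma~\ref{lem:sum3} requires conditioning on $b^i_{i(q-1)}\notin L$, itself a nontrivial event to control.

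In short: right skeleton, but the high-range concern is misdirected, the middle-range reduction discards the gift items that actually drive the event, and the remaining pieces are under-specified.
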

\begin{proof}
The proof follows a very similar structure to the proof of \Cref{thm:positive for order consistent}, though with more intricate analysis in various places. Fix $m = qn$ for an integer $q$. We will show the probability that Give-Away Round Robin is not SD-EF is upper bounded by
\[
    \frac{67584n^3\log^2 m + 25216n^4\log m + 2354n^5 + 128n^2 \log m + 24n^3 + 3n^2 + n}{m}. 
\]
Throughout, we will assume that $m$ is lower bounded by the numerator, as otherwise, the bound is trivial. 

For $i \ne j$, we again let $\cE^{ij}$ be the event that $A_i \notsdpref_{\pi_i} A_j$.
Unlike in Round Robin,  the give-away phase makes it  possible that lower-indexed agents envy higher-indexed agents, so $\cE^{ij}$ is defined for all  agents $i \ne j$. 
We again wish to upper bound $\Prs{\bigcup_{i \in \agents}\bigcup_{j: j \ne i} \cE^{ij}}$. We reuse the notation $g \succ_i g'$ to denote preferences under $\pi_i$, and $b^i_{jk}$ for $i$'s $k$'th favorite item in $A_j$. We   decompose $\cE^{ij}$ into $\bigcup_{k = 1}^q \cE^{ij}_k$ as before,  where $\cE^{ij}_k$ is the event that $b_{jk} \succ_i b_{ik}$. Using a union bound, we wish to upper bound
\[
    \sum_{i \in \agents} \Prs{\bigcup_{k \le q} \bigcup_{j: j \ne i} \cE^{ij}_k}.
\]

We let $\gamma_{ij}$ for $i \ne j$ be the item given from $i$ to $j$ in the give-away phase. We use $\tilde{P}_t$ for $0 \le t \le n(n - 1)$ to denote the pool of available items after $t$ giveaway steps. We then let $P_t$ be the pool after $t$ Round Robin steps (so $\tilde{P}_0 = \items$, $\tilde{P}_{n(n-1)} = P_0$, and $P_{m - n(n - 1)} = \emptyset$). We let $g_t$ be the item taken at the $t$'th step of Round Robin. Therefore, $A_j = \set{\gamma_{ij} \suchthat i \ne j} \cup \set{g_t \suchthat t \equiv j \bmod n}$.

We fix an agent $i$ and condition on $\pi_i$. The random process of selecting items is now similar to in Round Robin. For times $t$ when $\gamma_{ij}$ is given, $\gamma_{ij}$ is the worst remaining item from $\tilde{P}_{t-1}$, and for $\gamma_{i'j}$ with $i' \ne i$ it is a uniformly selected item from $\tilde{P}_{t-1}$ (from  agent $i$'s perspective). 
During the  Round Robin phase, for $g_t$ with $t \equiv i \bmod n$, $g_t$ is the best available item from $P_{t - 1}$, and when $t \not\equiv i \bmod n$, $g_t$ is a uniformly selected item from $P_{t - 1}$. For the Round Robin phase, we again consider the sets $R_\ell$ of items picked in the Round Robin phase between $i$'s $\ell$'th and $(\ell + 1)$'th pick, so $R_0 = \set{g_1, \ldots, g_{i - 1}}$, for $1 \le \ell \le q - n$, $R_\ell = \set{g_{n (\ell - 1) + i + 1}, \ldots, g_{\ell n + i - 1}}$, and $R_q = \set{g_{n(q - 1) + i+1}, \ldots, g_m}$.  As with Round Robin, each item in  $R_\ell$ is picked sequentially and uniformly at random from the remaining items. This allows us to analyze an equivalent two-step process wherein $R_\ell$ is first selected uniformly at random from $P_t$, after which each item in $R_\ell$ is matched to a (non-$i$) agent that is picked in the corresponding timesteps uniformly at random. 

We now break down $\bigcup_{k \le q}$ via a union bound into different groupings of $k$. 
\begin{align*}
    \sum_{i \in \agents} \Prs{\bigcup_{k \le q} \bigcup_{j: j \ne i} \cE^{ij}_k} =  \sum_{i \in \agents}\Bigg(& \Prs{\bigcup_{k: k < 32\log m + 4n} \bigcup_{j: j \ne i} \cE^{ij}_k} + \Prs{\bigcup_{k: 32\log m + 4n \le k \le q - 32\log m - 5n} \bigcup_{j: j \ne i} \cE^{ij}_k}\\
    &\qquad + \Prs{\bigcup_{k: q - 32\log m - 5n <  k \le q - 1 } \bigcup_{j: j \ne i} \cE^{ij}_k} + \Prs{\bigcup_{k: k = q} \bigcup_{j: j \ne i} \cE^{ij}_k}\Bigg).
\end{align*}
Fix an agent $i$.  The following lemmas  handle each of the summands individually. 
\begin{lemma}\label{lem:sum1}
    $\Prs{\bigcup_{k: k < 32\log m + 4n} \bigcup_{j: j \ne i} \cE^{ij}_k} \leq \frac{2048n^2 \log^2 m + 640n^3\log m + 50n^4}{m}$.
\end{lemma} 

\begin{lemma}\label{lem:sum2}
    $\Prs{\bigcup_{k: 32\log m + 4n \le k \le q - 32\log m - 5n} \bigcup_{j: j \ne i} \cE^{ij}_k} \leq \frac{1}{m} $.
\end{lemma} 
\begin{lemma}\label{lem:sum3}
    $\Prs{\bigcup_{k: q - 32\log m - 5n <  k \le q - 1 } \bigcup_{j: j \ne i} \cE^{ij}_k} \leq \frac{65536 n^2\log^2 m + 24576 n^3 \log m + 2304 n^4 + 128n \log m + 24n^2 + n}{m}$.
\end{lemma} 
\begin{lemma}\label{lem:sum4}
   $\Prs{\bigcup_{k: k = q} \bigcup_{j: j \ne i} \cE^{ij}_k} \leq \frac{2n}{m}$.
\end{lemma} 

Together, these imply that 
\begin{align*}
    &\sum_{i \in \agents} \Prs{\bigcup_{k \le q} \bigcup_{j: j \ne i} \cE^{ij}_k}\\
    &\qquad\leq n\left(\frac{67584n^2\log^2 m + 25216n^3\log m + 2354n^4 + 128n \log m + 24n^2 + 3n + 1}{m} \right)\\
    &\qquad \le \frac{67584n^3\log^2 m + 25216n^4\log m + 2354n^5 + 128n^2 \log m + 24n^3 + 3n^2 + n}{m} \in \tilde{O}_n\left(\frac{1}{m}\right). \qedhere
\end{align*}
\end{proof}

We prove \Cref{lem:sum1,lem:sum2,lem:sum3,lem:sum4} in Appendix~\ref{app: missing from give away}.

\section{Conclusion}

In this paper, we study the existence of envy-free allocations beyond the case of additive valuations.
We introduce a simple model to study this question: starting from a worst-case valuation function, randomly rename the items. We show that, in this model, if valuations are order-consistent (a valuation class general enough to include additive, unit-demand, budget additive, single-minded, etc), even sd-envy-free allocations exist with high probability. In fact, a simple Round-Robin process will output such an allocation with high probability. Our bound on the probability is asymptotically tight, but we can improve upon it for the important case of a constant number of agents, using a variation of Round-Robin, that might be of independent interest. For arbitrary valuations, we show a positive result for the case of $n=2$ agents. Our proof reduces the question about the existence of envy-free allocations to a question about the number of automorphisms in a uniform hypergraph.

An important problem we leave open is whether envy-free allocations exist for general valuation functions for $n > 2$ agents. 
Our approach of viewing valuations as hypergraphs immediately fails since now an allocation is not just a set and its complement. Interestingly, the existence of envy-free allocations in our model does not seem to become any easier, even if one is willing to make fairly common structural assumptions on the valuation functions. For example, one might attempt to prove~\Cref{theorem:ef-whp-gen} for, e.g., monotone submodular functions (where strong concentration results are readily available~\cite{vondrak2010note}), hoping to get a proof for $n=2$ that can extend to more agents. In Section~\ref{sec: general} we show that corresponding to a valuation function, we can construct a unique hypergraph; we complement this in  \Cref{appendix:submod-hgraph} by showing that for any  $\hgraph$ we can   construct a monotone submodular function $v$ such that $\hgraph^v = \hgraph$. 
This implies that, in a sense, the  general existence of balanced EF allocations for two agents reduces to the existence of EF allocations for  
monotone submodular valuation functions.

\bibliographystyle{alpha}
\bibliography{refs.bib}

\newcommand{\etalchar}[1]{$^{#1}$}
\begin{thebibliography}{AGMW15}

\bibitem[AGMW15]{aziz2015fair}
Haris Aziz, Serge Gaspers, Simon Mackenzie, and Toby Walsh.
\newblock Fair assignment of indivisible objects under ordinal preferences.
\newblock {\em Artificial Intelligence}, 227:71--92, 2015.

\bibitem[AMNS17]{amanatidis2017approximation}
Georgios Amanatidis, Evangelos Markakis, Afshin Nikzad, and Amin Saberi.
\newblock Approximation algorithms for computing maximin share allocations.
\newblock {\em ACM Transactions on Algorithms (TALG)}, 13(4):1--28, 2017.

\bibitem[BCH{\etalchar{+}}17]{blum2017opting}
Avrim Blum, Ioannis Caragiannis, Nika Haghtalab, Ariel~D Procaccia, Eviatar~B
  Procaccia, and Rohit Vaish.
\newblock Opting into optimal matchings.
\newblock In {\em Proceedings of the Twenty-Eighth Annual ACM-SIAM Symposium on
  Discrete Algorithms}, pages 2351--2363. SIAM, 2017.

\bibitem[BEL{\etalchar{+}}10]{bouveret2010fair}
Sylvain Bouveret, Ulle Endriss, J{\'e}r{\^o}me Lang, et~al.
\newblock Fair division under ordinal preferences: Computing envy-free
  allocations of indivisible goods.
\newblock In {\em ECAI}, pages 387--392, 2010.

\bibitem[BFGP22]{bai2022fair}
Yushi Bai, Uriel Feige, Paul G{\"o}lz, and Ariel~D Procaccia.
\newblock Fair allocations for smoothed utilities.
\newblock In {\em Proceedings of the 23rd ACM Conference on Economics and
  Computation}, pages 436--465, 2022.

\bibitem[BG21]{blum2021incentive}
Avrim Blum and Paul G{\"o}lz.
\newblock Incentive-compatible kidney exchange in a slightly semi-random model.
\newblock In {\em Proceedings of the 22nd ACM Conference on Economics and
  Computation}, pages 138--156, 2021.

\bibitem[BG22]{bai2021envy}
Yushi Bai and Paul G{\"{o}}lz.
\newblock Envy-free and pareto-optimal allocations for agents with asymmetric
  random valuations.
\newblock In {\em Proceedings of the Thirty-First International Joint
  Conference on Artificial Intelligence, {IJCAI} 2022, Vienna, Austria, 23-29
  July 2022}, pages 53--59. ijcai.org, 2022.

\bibitem[BHP22]{benade2022dynamic}
Gerdus Benad{\`e}, Daniel Halpern, and Alexandros Psomas.
\newblock Dynamic fair division with partial information.
\newblock {\em Advances in Neural Information Processing Systems},
  35:3703--3715, 2022.

\bibitem[BKP{\etalchar{+}}22]{benade2022fair}
Gerdus Benad{\`e}, Aleksandr Kazachkov, Alexandros Psomas, Ariel~D Procaccia,
  and David Zeng.
\newblock Fair and efficient online allocations, 2022.
\newblock Manuscript.

\bibitem[BKPP18]{benade2018make}
Gerdus Benad{\`e}, Aleksandr~M Kazachkov, Ariel~D Procaccia, and Alexandros
  Psomas.
\newblock How to make envy vanish over time.
\newblock In {\em Proceedings of the 2018 ACM Conference on Economics and
  Computation}, pages 593--610, 2018.

\bibitem[BM01]{bogomolnaia2001new}
Anna Bogomolnaia and Herv{\'e} Moulin.
\newblock A new solution to the random assignment problem.
\newblock {\em Journal of Economic theory}, 100(2):295--328, 2001.

\bibitem[BS95]{blum1995coloring}
Avrim Blum and Joel Spencer.
\newblock Coloring random and semi-random k-colorable graphs.
\newblock {\em Journal of Algorithms}, 19(2):204--234, 1995.

\bibitem[CGH19]{caragiannis2019envy}
Ioannis Caragiannis, Nick Gravin, and Xin Huang.
\newblock Envy-freeness up to any item with high nash welfare: The virtue of
  donating items.
\newblock In {\em Proceedings of the 2019 ACM Conference on Economics and
  Computation}, pages 527--545, 2019.

\bibitem[CM16]{chawla2016mechanism}
Shuchi Chawla and J~Benjamin Miller.
\newblock Mechanism design for subadditive agents via an ex ante relaxation.
\newblock In {\em Proceedings of the 2016 ACM Conference on Economics and
  Computation}, pages 579--596, 2016.

\bibitem[CZ17]{cai2017simple}
Yang Cai and Mingfei Zhao.
\newblock Simple mechanisms for subadditive buyers via duality.
\newblock In {\em Proceedings of the 49th Annual ACM SIGACT Symposium on Theory
  of Computing}, pages 170--183, 2017.

\bibitem[DF04]{dummit2004abstract}
David~Steven Dummit and Richard~M Foote.
\newblock {\em Abstract algebra}, volume~3.
\newblock Wiley Hoboken, 2004.

\bibitem[DGK{\etalchar{+}}14]{dickerson2014computational}
John Dickerson, Jonathan Goldman, Jeremy Karp, Ariel Procaccia, and Tuomas
  Sandholm.
\newblock The computational rise and fall of fairness.
\newblock In {\em Proceedings of the AAAI Conference on Artificial
  Intelligence}, volume~28, 2014.

\bibitem[FGH{\etalchar{+}}19]{farhadi2019fair}
Alireza Farhadi, Mohammad Ghodsi, Mohammad~Taghi Hajiaghayi, Sebastien Lahaie,
  David Pennock, Masoud Seddighin, Saeed Seddighin, and Hadi Yami.
\newblock Fair allocation of indivisible goods to asymmetric agents.
\newblock {\em Journal of Artificial Intelligence Research}, 64:1--20, 2019.

\bibitem[FHP22]{flanigan2022smoothed}
Bailey Flanigan, Daniel Halpern, and Alexandros Psomas.
\newblock Smoothed analysis of social choice revisited.
\newblock {\em arXiv preprint arXiv:2206.14684}, 2022.

\bibitem[FK98]{feige1998heuristics}
Uriel Feige and Joe Kilian.
\newblock Heuristics for finding large independent sets, with applications to
  coloring semi-random graphs.
\newblock In {\em Proceedings 39th Annual Symposium on Foundations of Computer
  Science (Cat. No. 98CB36280)}, pages 674--683. IEEE, 1998.

\bibitem[Gau59]{gautschi1959some}
Walter Gautschi.
\newblock Some elementary inequalities relating to the gamma and incomplete
  gamma function.
\newblock {\em Journal of Mathematics and Physics}, 38(1):77--81, 1959.

\bibitem[GSV19]{gan2019envy}
Jiarui Gan, Warut Suksompong, and Alexandros~A Voudouris.
\newblock Envy-freeness in house allocation problems.
\newblock {\em Mathematical Social Sciences}, 101:104--106, 2019.

\bibitem[KL20]{katz2020introduction}
Jonathan Katz and Yehuda Lindell.
\newblock {\em Introduction to modern cryptography}.
\newblock CRC press, 2020.

\bibitem[KMM11]{kolla2011play}
Alexandra Kolla, Konstantin Makarychev, and Yury Makarychev.
\newblock How to play unique games against a semi-random adversary: Study of
  semi-random models of unique games.
\newblock In {\em 2011 IEEE 52nd Annual Symposium on Foundations of Computer
  Science}, pages 443--452. IEEE, 2011.

\bibitem[KPW16]{kurokawa2016can}
David Kurokawa, Ariel Procaccia, and Junxing Wang.
\newblock When can the maximin share guarantee be guaranteed?
\newblock In {\em Proceedings of the AAAI Conference on Artificial
  Intelligence}, volume~30, 2016.

\bibitem[Luc17]{lucier2017economic}
Brendan Lucier.
\newblock An economic view of prophet inequalities.
\newblock {\em ACM SIGecom Exchanges}, 16(1):24--47, 2017.

\bibitem[MS17]{manurangsi2017asymptotic}
Pasin Manurangsi and Warut Suksompong.
\newblock Asymptotic existence of fair divisions for groups.
\newblock {\em Mathematical Social Sciences}, 89:100--108, 2017.

\bibitem[MS20]{manurangsi2020envy}
Pasin Manurangsi and Warut Suksompong.
\newblock When do envy-free allocations exist?
\newblock {\em SIAM Journal on Discrete Mathematics}, 34(3):1505--1521, 2020.

\bibitem[MS21]{manurangsi2021closing}
Pasin Manurangsi and Warut Suksompong.
\newblock Closing gaps in asymptotic fair division.
\newblock {\em SIAM Journal on Discrete Mathematics}, 35(2):668--706, 2021.

\bibitem[PSW19]{psomas2019smoothed}
Alexandros Psomas, Ariel Schvartzman, and S~Matthew Weinberg.
\newblock Smoothed analysis of multi-item auctions with correlated values.
\newblock In {\em Proceedings of the 2019 ACM Conference on Economics and
  Computation}, pages 417--418, 2019.

\bibitem[RS17]{rubinstein2017combinatorial}
Aviad Rubinstein and Sahil Singla.
\newblock Combinatorial prophet inequalities.
\newblock In {\em Proceedings of the Twenty-Eighth Annual ACM-SIAM Symposium on
  Discrete Algorithms}, pages 1671--1687. SIAM, 2017.

\bibitem[RW18]{rubinstein2018simple}
Aviad Rubinstein and S~Matthew Weinberg.
\newblock Simple mechanisms for a subadditive buyer and applications to revenue
  monotonicity.
\newblock {\em ACM Transactions on Economics and Computation (TEAC)},
  6(3-4):1--25, 2018.

\bibitem[ST04]{spielman2004smoothed}
Daniel~A Spielman and Shang-Hua Teng.
\newblock Smoothed analysis of algorithms: Why the simplex algorithm usually
  takes polynomial time.
\newblock {\em Journal of the ACM (JACM)}, 51(3):385--463, 2004.

\bibitem[Suk16]{suksompong2016asymptotic}
Warut Suksompong.
\newblock Asymptotic existence of proportionally fair allocations.
\newblock {\em Mathematical Social Sciences}, 81:62--65, 2016.

\bibitem[Von10]{vondrak2010note}
Jan Vondr{\'a}k.
\newblock A note on concentration of submodular functions.
\newblock {\em arXiv preprint arXiv:1005.2791}, 2010.

\bibitem[Wre68]{wrench1968concerning}
John~W Wrench.
\newblock Concerning two series for the gamma function.
\newblock {\em Mathematics of Computation}, 22(103):617--626, 1968.

\bibitem[Xia20]{xia2020smoothed}
Lirong Xia.
\newblock The smoothed possibility of social choice.
\newblock {\em Advances in Neural Information Processing Systems},
  33:11044--11055, 2020.

\bibitem[Xia21]{xia2021semi}
Lirong Xia.
\newblock The semi-random satisfaction of voting axioms.
\newblock {\em Advances in Neural Information Processing Systems},
  34:6075--6086, 2021.

\bibitem[ZP20]{zeng2020fairness}
David Zeng and Alexandros Psomas.
\newblock Fairness-efficiency tradeoffs in dynamic fair division.
\newblock In {\em Proceedings of the 21st ACM Conference on Economics and
  Computation}, pages 911--912, 2020.

\end{thebibliography}

\appendix
\section{Preference Hypergraphs and Submodular valuations}\label{appendix:submod-hgraph}

We conclude by showing that the existence of balanced envy-free allocations for arbitrary valuation functions can, in a sense, be reduced to the class of submodular valuation functions. 

\begin{proposition}
For any valuation function $v$ and corresponding hypergraph $\hgraph^v  = (\items, E)$ such that, for all $S \subseteq \items$ of size $m/2$ exactly one of $S$ or $\overline{S}$ is in $E$, there exists a monotone, submodular valuation function $v'$ so that $\hgraph^v = \hgraph^{v'}$. 
\end{proposition}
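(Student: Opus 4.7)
The plan is to construct $v'$ as the sum of a strictly concave function of cardinality and a tiny perturbation that encodes membership in $E$. Concretely, fix a large constant $c_1 \geq m+1$ and a small constant $\epsilon \in (0,1/2)$, and define
\[
  v'(S) \;=\; c_1 |S| - \tfrac{1}{2}|S|^2 + \epsilon \cdot \mathbb{1}[S \in E].
\]
Since $E$ only contains sets of size $m/2$, the perturbation only affects the value on such sets. For any $S$ with $|S|=m/2$ we have $|S|=|\overline{S}|$, so the concave part cancels in $v'(S)-v'(\overline{S})$, and
\[
  v'(S) - v'(\overline{S}) \;=\; \epsilon\bigl(\mathbb{1}[S\in E] - \mathbb{1}[\overline{S}\in E]\bigr).
\]
By hypothesis, exactly one of $S,\overline{S}$ lies in $E$, so this difference is $+\epsilon$ if $S\in E$ and $-\epsilon$ otherwise. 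In particular $v'(S)>v'(\overline{S})$ iff $S\in E$, which gives $H^{v'}=H^v$.

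It remains to verify that $v'$ is monotone and submodular. Write $f(k)=c_1 k - k^2/2$, whose discrete marginal $f(k)-f(k-1) = c_1 - k + 1/2$ lies in $[1/2, c_1-1/2]$ for $1\le k\le m$, and satisfies $\bigl(f(k)-f(k-1)\bigr) - \bigl(f(k+1)-f(k)\bigr) = 1$. For monotonicity, the marginal of $v'$ when adding an item to $S$ equals $f(|S|+1)-f(|S|)$ plus a correction of absolute value at most $\epsilon$; since $f(|S|+1)-f(|S|) \ge 1/2$ and $\epsilon<1/2$, this marginal is strictly positive. For submodularity, take $S\subseteq T$ and $x\notin T$; the concave-part marginals satisfy $\bigl(f(|S|+1)-f(|S|)\bigr) - \bigl(f(|T|+1)-f(|T|)\bigr) \ge |T|-|S| \ge 0$, and this gap is at least $1$ whenever $|S|<|T|$. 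The perturbation contributes at most $\epsilon$ in absolute value to either marginal, so the total decrease of marginals caused by the perturbation is at most $2\epsilon<1$. Hence when $|S|<|T|$ the concavity gap of $\ge 1$ dominates the perturbation, and when $|S|=|T|$ (so $T=S$) the submodularity inequality becomes an equality on both parts.

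The main obstacle, and the only place requiring care, is the submodularity check at the ``boundary'' $|S|=m/2-1$, $|T|=m/2$, where the indicator $\mathbb{1}[\cdot \in E]$ can flip in either direction as items are added. The argument above handles this because the worst-case swing in $\epsilon$-terms across the two marginals is bounded by $2\epsilon$, which is strictly less than the concavity gap of $1$ between consecutive marginals of $f$. Choosing $c_1\ge m+1$ guarantees monotonicity on all of $2^\mathcal{M}$, and the construction is manifestly a set function, completing the proof.
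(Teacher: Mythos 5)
Your proof is correct and takes essentially the same approach as the paper: construct $v'$ as a monotone, concave-in-cardinality base function plus a small perturbation at sets of size $m/2$ that encodes membership in $E$. The paper uses the piecewise capped-linear base $\min(|S|, m/2)$ with an explicit $\pm\{1/4,1/2\}$ offset on the middle level, whereas you use the quadratic $c_1|S|-|S|^2/2$ plus $\epsilon\,\mathbb{1}[S\in E]$; both rely on the concavity gap of at least $1$ between consecutive marginals to absorb the $O(\epsilon)$ perturbation and guarantee monotonicity and submodularity, and both recover $H^{v'}=H^v$ because the base cancels on complementary sets of size $m/2$.
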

\begin{proof}
    Construct $v'$ as follows: 
    \[v'(S) = \begin{cases}
    |S| & \text{ if } |S| < m/2\\
    m/2 - 1/2 & \text{ if } |S| = m/2 \text{ and } \overline{S} \in E \\
    m/2 - 1/4 & \text{ if } |S| = m/2 \text{ and } S \in E  \\ 
    m/2 & \text{ if } |S| > m/2.
\end{cases}\]
It is straightforward to verify that $v'$ is submodular and that $\hgraph^{v'} = \hgraph^{v}$. 
\end{proof}

\section{Omitted proofs}\label{app: missing proofs}

    \begin{proof}[Proof of Claim~\ref{claim: gamma stuff 1}] 
    Using Gautchi's inequality we have that
    \begin{align*}
        \frac{2}{n} \cdot \frac{\Gamma(k + 1/n)\Gamma(q - k + 1/n)}{\Gamma(q + 1/n)}
        &= \frac{2}{n} \cdot \frac{\Gamma(k + 1)\Gamma(q - k +1)}{\Gamma(q + 1)}\cdot \frac{\Gamma(k + 1/n)}{\Gamma(k + 1)}\cdot \frac{\Gamma(q - k + 1/n)}{\Gamma(q - k + 1)}\cdot \frac{\Gamma(q + 1)}{\Gamma(q + 1/n)}\\
        &\le \frac{2}{n} \cdot \frac{k!(q-k)!}{q!} \frac{1}{k^{1 - 1/n}}\frac{1}{(q - k)^{1 - 1/n}}(q + 1)^{1 - 1/n}\\
        &= \frac{2}{n}\frac{1}{\binom{q}{k}} \left(\frac{q + 1}{k(q - k)}\right)^{1 - 1/n}\\
        &< \frac{2}{n}\frac{1}{\binom{q}{k}} \left(\frac{2q}{q/2}\right)^{1 - 1/n}\\
        &\le\frac{2}{n}\frac{1}{\binom{q}{k}} 4^{1 - 1/n}\\
        &\le \frac{8}{n} \cdot \frac{1}{\binom{q}{k}}
    \end{align*}
    where the third inequality holds because $q + 1 \le 2q$, both $k \ge 1, (q - k) \ge 1$, and at least one $k$ and $(q - k)$ is $\ge q/2$. 
    \end{proof}


    \begin{proof}[Proof of Claim~\ref{claim:bound on L}]
    Recall that $b_{iq} = g_{n(q - 1) + i}$. Further, $g_{n(q - 1) + i} \in L$ requires that $P_{n(q - 1) + i - 1} \subseteq L$,  otherwise an item $g \notin L$, which $i$ prefers to any item in $L$, is available (and $i$ would have preferred to choose it).
    Since $|P_{n(q - 1) + i - 1}| = m  - (n(q - 1) + i - 1) =  n - i + 1$, we have
    \[
    	\Prs{b_{iq} \in L} = \Prs{P_{n(q - 1) + i - 1} \subseteq L} = \sum_{S \in \binom{L}{n + 1 - i}} \Prs{P_{n(q - 1) - i + 1} = S}.
    \]
    Fixing an arbitrary $S \in \binom{L}{n + 1 - i}$ we have
    \begin{align*}
    	\Prs{P_{n(q - 1) - i + 1} = S}
    	&= \Prs{\bigcap_{t = 1}^{n(q - 1) + i - 1} \set{g_t\notin S} }\\
		&= \prod_{t = 1}^{n(q - 1) + i - 1} \Prs{g_t \notin S \suchthat \bigcap_{t' = 1}^{t - 1} g_{t'} \notin S }\\
		&= \prod_{t = 1}^{n(q - 1) + i - 1} \Prs{g_t \notin S \suchthat S \subseteq P_{t - 1} }.
    \end{align*}
    For timesteps $t$ where $i$ made a pick, i.e., $t \equiv i \pmod{n}$, $\Prs{g_t \notin S \suchthat S \subseteq P_{t - 1} }$ is hard to compute, as it depends on exactly which items are available. In these cases, we use the trivial upper bound of $1$. For other timesteps $t$,   $g_t$ is a uniformly random item from $P_{t - 1}$, so $\Prs{g_t \notin S \suchthat S \subseteq P_{t - 1} } = 1 - \frac{|S|}{|P_{t - 1}|} = 1 - \frac{n + 1 - i}{m + 1 - t} = \frac{m + 1 - t - (n + 1 - i)}{m + 1 - t}$. It follows that
    \[
    	\Prs{P_{n(q - 1) - i + 1} = S} \leq \prod_{t = 1: t \not\equiv i \bmod n}^{n(q - 1) + i - 1} \frac{m + 1 - t - (n + 1 - i)}{m + 1 - t}.
    \]
    To ease notation, we perform a change of variables to $t' = n(q - 1) + i - t$, i.e., we reverse the product. The condition $t \not\equiv i \pmod n$ becomes  $t' \not \equiv 0 \pmod{n}$, and we have $t' = n(q - 1) + i - t = m  + 1 - t - (n + 1 - i)$. Therefore, we can simplify to
    \begin{align*}
    	\prod_{t' = 1: t' \not\equiv 0 (\bmod n)}^{n(q - 1) + i - 1} \frac{t'}{t' + (n + 1 - i)}
    	&= \frac{\prod_{t' = 1}^{n(q - 1) + i - 1} \frac{t'}{t' + (n + 1 - i)}}{\prod_{t' = 1: t' \equiv 0 (\bmod n)}^{n(q - 1) + i - 1} \frac{t'}{t' + (n + 1 - i)}}\\
    	&= \prod_{t' = 1}^{n(q - 1) + i - 1} \frac{t'}{t' + (n + 1 - i)} \cdot \prod_{t' = 1: t' \equiv 0 (\bmod n)}^{n(q - 1) + i - 1} \frac{t' + (n + 1 - i)}{t'}.
    \end{align*}
    Observing that $n(q - 1) + i - 1 + (n + 1 - i) = qn = m $, the first product is exactly equal to
    
    \[\frac{(m - (n + 1 - i))!(n + 1 - i)!}{m!} = \frac{1}{\binom{m}{n + 1 - i}} \le \frac{1}{(m - (n + 1 - i))^{n + 1 - i}} \le \frac{1}{(m - n)^{n + 1 - i}} \leq^{(m \ge 2n)} \left(\frac{2}{m}\right)^{n + 1 - i}. \]
    
     The second product simplifies, as we only need to consider values $t' = \ell n$ for $1 \le \ell \le q - 1$. Hence,
    \[
    	\prod_{t' = 1: t' = 0 \bmod n}^{n(q - 1) + i - 1} \frac{t' + (n + 1 - i)}{t'} = \prod_{\ell = 1}^{q - 1} \frac{\ell n + (n + 1 - i)}{\ell n} = \prod_{\ell = 1}^{q - 1} \frac{\ell + \frac{n + 1 - i}{n}}{\ell} = \frac{(1 + \frac{n + 1 - i}{n})^{\overline{q - 1}}}{(q - 1)!},
    \]
    where the notation $(x)^{\overline{r}}$ represents the rising factorial $(x)^{\overline{r}} = x(x + 1) \cdots (x + r - 1)$. We use the fact that $(x)^{\overline{r}} = \frac{\Gamma(x + r)}{\Gamma(x)}$ to get that the above is equal to,
    \[
    	\frac{\Gamma(q + \frac{n + 1 - i}{n})}{\Gamma(1 + \frac{n + 1 - i}{n})\Gamma(q)}.
    \]
    Similarly to Claim~\ref{claim: gamma stuff 1}, using the fact that the Gamma function is lower bounded by $1/2$ and Gautchi's inequality~\cite{gautschi1959some}, along with the assumption that $m \ge 2n$,  the above is at most
    \[\frac{2\Gamma(q + \frac{n + 1 - i}{n})}{\Gamma(q)} \le 2\left(q + \frac{n + 1 - i}{n}\right)^{\frac{n + 1 - i}{n}} \le 2\left(q + 1\right)^{\frac{n + 1 - i}{n}} \le 2m^{\frac{n + 1 - i}{n}} = 2(m^{1/n})^{n + 1 - i}.
    \]
    Combining with the previous bounds, we obtain for any $S \in \binom{L}{n + 1 - i}$ that, $\Prs{P_{n(q - 1) - i + 1} = S} \le
        \left(\frac{2}{m}\right)^{n + 1 - i}\cdot   2(m^{1/n})^{n + 1 - i}=  2\left(\frac{2}{m^{1 - 1/n}}\right)^{n + 1 - i}$. 
    The number of possible choices of $S$ is
    \begin{equation}\label{eq:num-s-bound}
    	\binom{|L|}{n + 1 - i} \le |L|^{n + 1 - i} \le (4n\log m)^{n + 1 - i}.
    \end{equation}

    Union bounding over all of these choices, we get that
    \[
        \Prs{b_{iq} \in L} =  \Prs{P_{n(q - 1) - i + 1} \subseteq L} \le 2\left(\frac{8n\log m}{m^{1 - 1/n}}\right)^{n + 1 - i},
    \]
    which concludes the proof of Claim~\ref{claim:bound on L}.
    \end{proof}

\subsection{Missing details from the proof of Lemma~\ref{lem: rr is sdefx}}\label{app: missing from rr efx}

Here, we flesh out the missing details from the proof of Lemma~\ref{lem: rr is sdefx}.

    For $k < q$, the proof of Lemma~\ref{lem: bound for k < q, EF} goes through almost identically, to give us  
    $\Prs{\cE^{ij}_k} \le \frac{8}{\binom{q}{k}}$. The only difference is that the equality in $m - ((\ell - 1)n + j - 1) = (q - \ell)n + n - j + 1)$ of \Cref{eq:m-to-q} becomes a weak inequality; the inequality is in the right direction for the same conclusion to hold.
        
    For $k = q$, the proof of Lemma~\ref{lem: bound for k = q, EF}, the analysis is very similar, even though more care is needed.
    To ensure that there are at least $3\log m$ items remaining when lower bounding the expectation in Equation~\eqref{eq:l-lower}, since $|R_0| + |R_q|$ is only upper bounded by $(n - 1) + r \le 2n - 1$, we slightly change $L$, which is now defined as the bottom $\ceil{3n\log m} + 2n - 1$ items. It follows that $|L| \le 5 n \log m$. A second change is in the proof of Claim~\ref{claim:bound on L}. $|P_{n(q - 1) + i - 1}|$, the number of items remaining when $i$ makes their $q$'th pick, is now $n - i + 1 + r$. The analysis continues to go through by replacing occurrences of $n - i + 1$ with $n - i + 1 + r$. In the change of variables, we have $t' = n(q - 1) + i - t = m  + 1 - t - (n + 1 - i + r)$.
    This means that Inequality~\eqref{eq:num-s-bound} becomes
    \[
    	\binom{|L|}{n + 1 - i + r} \le |L|^{n + 1 - i + r} \le (5n\log m)^{n + 1 - i + r},
    \]
    and Inequality~\eqref{eq:part2} becomes
    \begin{equation}\label{eq:new-part2}
    	\Prs{\bigcup_{j: j < i} \cE^{ij}_q} \le \frac{n}{m} + 2\left(\frac{10n\log m}{m^{1 - 1/n}}\right)^{n + 1 - i + r}.
    \end{equation}
    
    For $n = 2$, we will need a bound that is even stronger than \eqref{eq:new-part2}. When there are two agents the only envy event we need to consider is $\cE^{21}_q$. We claim   that we need only include $i$'s worst  $1 + r \le 3$ items in $L$ to get 
    $
    	\Prs{\cE^{21}_q \suchthat b_{2q} \notin L} = 0.
    $    
  Indeed, suppose $b_{2q} \notin L$, so  $b_{2q}$ is not one of the $3$ worst items according to $\pi_2$. 
  Since $|R_q| \leq 2$
  and $L \cap A^q_2 = \emptyset$, it must be that $A^q_1 \cap L \ne \emptyset$. We conclude $b_{1q} \in L$, so $b_{2q} \succ_2 b_{1q}$ and event $\cE^{21}_q$ does not occur. Therefore, when $n = 2$,
  \begin{equation}\label{eq:new-part2-n2}
  	\sum_{i \in \agents} \Prs{\bigcup_{j: j < i} \cE^{ij}_q} = \Prs{\cE^{21}_q} \le 2\left(\frac{3}{m^{1 - 1/n}} \right)^{n + 1 - i + r} \le 2\left(\frac{2n}{m^{1/2}} \right)^{1 + r}
  \end{equation}

 We are now ready to upper bound the probability that $\alloc^q$ is not SD-EF. 	
\begin{align*}
        \Prs{\bigcup_{i,j} \cE^{ij} }
        &\le \sum_{i \in \agents} \sum_{k = 1}^q \Prs{\bigcup_{j: j < i} \cE^{ij}_k}\\
        &\le \sum_{i \in \agents} \left(\sum_{k = 1}^{q - 1}\Prs{\bigcup_{j} \cE^{ij}_k } + \Prs{\bigcup_{j} \cE^{ij}_q }  \right)\\
        &\le n \sum_{k = 1}^{q - 1}\Prs{\bigcup_{j} \cE^{ij}_k } + \sum_{i \in \agents} \Prs{\bigcup_{j} \cE^{ij}_q } \\
        &\le 8n \sum_{k = 1}^{q - 1}\frac{1}{\binom{q}{k}} + \sum_{i \in \agents} \Prs{\bigcup_{j} \cE^{ij}_q }  .
    \end{align*}
    Since $qn \ge m/2$ we can refine Inequality~\eqref{eq:q-sum} to get   
    $\sum_{k = 1}^{q - 1}\frac{1}{\binom{q}{k}} \le \frac{4}{q} \le \frac{8n}{m}$, so the first term is at most $\frac{64n^2}{m}$.
    
    When $n = 2$, the second term is bounded by $2\left(\frac{2n}{m^{1/2}} \right)^{1 + r}$ by \eqref{eq:new-part2-n2}. 
    By our assumption $m \ge 2000n^2$ and $r\geq 1$, so  $\frac{2n}{m^{1/2}} \le 1$ and
    \[
    2\left(\frac{2n}{m^{1/2}} \right)^{1 + r} \le 2\left(\frac{2n}{m^{1/2}} \right)^{1 + 1} = \frac{8n^2}{m}, 
    \]
    from which it follows that 
    $
    \Prs{\bigcup_{i,j} \cE^{ij} } \leq \frac{64n^2}{m} + \frac{8n^2}{m}
    = \frac{72n^2}{m} \le \frac{2066n^2}{m}.
    $
    
   Next, we show the same bound for $n\ge 3$.  Here
    \begin{align*}
    	\Prs{\bigcup_{j} \cE^{ij}_q }
    	&\le \frac{n^2}{m} + 2\sum_{i \in \agents} \left(\frac{10n\log m}{m^{1 - 1/n}}\right)^{n + 1 - i + r}\\
    	&\le \frac{n^2}{m} + 2\sum_{i =1}^\infty  \left(\frac{10n\log m}{m^{1 - 1/n}}\right)^{i + r}\\
    	&\le \frac{n^2}{m} + 2\sum_{i =1}^\infty  \left(\frac{10n\log m}{m^{2/3}}\right)^{i + r}.\\
    \end{align*}
    We   claim that
    \[
    	\frac{\log m}{m^{1/6}} \le \frac{6}{e}.
    \]
    To see why, observe that the derivative of $\frac{\log m}{m^{1/6}}$ is $\frac{6 - \log(m)}{6 m^(7/6))}$ the function is maximized over positive reals  exactly at $m = e^6$, where it takes value $\frac{6}{e}$. Substituting into the above, 
    
    \[
    \Prs{\bigcup_{j} \cE^{ij}_q } \le
     \frac{n^2}{m} + 2\sum_{i =1}^\infty  \left(\frac{60n}{e m^{1/2}}\right)^{i + r}.\]
    Since $(60/e)^2 \le 500$ and  we  are restricted to $m \ge 2000n^2$, $\frac{60n}{e m^{1/2}} \le \sqrt{\frac{500}{2000}}  = 1/2$. Using the fact that $r \ge 1$, this infinite series is at most
    \[\frac{n^2}{m} + 4 \left(\frac{60n}{e m^{1/2}}\right)^{2} \le \frac{2001n^2}{m}.\]
    Hence,
    \[\Prs{\bigcup_{j} \cE^{ij}_q } \le \frac{2066n^2}{m},\] as needed.

\subsection{Missing lemmas from Section~\ref{subsec: give away}}\label{app: missing from give away}

    \begin{proof}[Proof of \Cref{lem:sum1}]
    Let $T$ be the top $\floor{32 \log m + 4n}$ items according to $\pi_i$. Notice that if $T \subseteq A_i$, $b_{ik} \succ b_{jk}$ for all $k \le 32 \log m + 4n$. More formally, this means that,
    \[
        \Prs{\bigcup_{k: k < 32\log m + 4n} \bigcup_{j: j \ne i} \cE^{ij}_k} \le \Prs{T \nsubseteq A_i},
    \]
    so we directly upper bound the latter. For $T \nsubseteq A_i$ to occur, it must be the case that some other agent $j$ received an item in $T$ before $i$ made their $(\floor{32 \log m + 4n})$'th pick. Since we are assuming $m \ge 32\log m + 4n + n^2 \ge |T| + n^2$, $i$ would never give away an item in $T$ during the giveaway phase as a worse item must be available. Hence, the only way that another agent receives   an item in $T$ is if are given it by  a non-$i$ agent in the giveaway phase or they pick it in the Round Robin phase before time $n(|T| - 1) + i \le n|T|$. From $i$'s perspective all of these items are chosen uniformly at random from the pool. There are at most $(n - 1)(|T| + n) \le n|T| + n^2 \le 32n\log m + 5n^2$ such items. When they are taken, the pool is always of size at least $m - (32n\log m + 5n^2)$. Hence, each is   selected with probability at most $\frac{32n\log m + 5n^2}{m - 32n\log m - 5n^2}$. Since we are assuming $m \ge 64 n \log m + 10n^2$, this probability is at most $\frac{64n\log m + 10n^2}{m}$. Union bounding over the at most $32n\log m + 5n^2$ choices of $k$, this implies that the total probability is at most
    \begin{align*}
        \Prs{\bigcup_{k: k < 32\log m + 4n} \bigcup_{j: j \ne i} \cE^{ij}_k}
        &\le \frac{(64n\log m + 10n^2)(32n\log m + 5n^2)}{m}\\
        &= \frac{2048n^2 \log^2 m + 640n^3\log m + 50n^4}{m} \qedhere
    \end{align*}
    \end{proof}

    \begin{proof}[Proof of \Cref{lem:sum2}]
    Here, we upper bound each $\Prs{\cE^{ij}_k}$ individually and union bound over the at most $q$ choices of $k$ and $n$ choices of $j$. Fix agent $j\neq i$ and $k$ such that  $32\log m + 4n \le k \le q - 32\log m - 5n$. We first consider  $j < i$ and  later show how to extend it to $j > i$. 
    Notice that $b_{ik} \succeq_i g_{(k - 1)n + i}$, i.e., $i$'s $k$'th favorite item is at least as good as their $k$'th pick in the Round Robin phase (though it may be better if $i$ was lucky during the giveaway phase). Additionally, if $b_{jk} \succ_i b_{ik} \succeq_i g_{(k - 1)n + i}$, this implies that $A_j$ contains at least $k$ items strictly preferred to $g_{(k - 1)n + i}$. None of these could have been chosen after time $(k - 1)n + i$ in the Round Robin phase as $g_{(k - 1)n + i}$ was the best available item from $P_{(k - 1)n + i - 1}$. Additionally, $g_{(k - 1)n + i} \succ \gamma_{ij}$ (the item $i$ gave to $j$) as  $\gamma_{ij}$ was the worst available item at a time when $g_{(k - 1)n + i}$ was available. Hence, for $\cE^{ij}_k$ to hold, at least $k$ of $\set{\gamma_{i'j} \suchthat i' \ne i, j} \cup \set{g_j, \ldots, g_{(k - 1)n + j}}$ are preferred to $g_{(k - 1)n + i}$. Since $|\set{\gamma_{i'j} \suchthat i' \ne i, j}| = n - 2$, this implies that $|\set{\ell \suchthat g_{(\ell - 1)n + j} \succ_i g_{(k - 1)n + i}}| \ge k - (n - 2)$. More formally, we have that
    \[
        \Prs{\cE^{ij}_k} \le \Prs{\sum_{\ell = 1}^k \mathbb{I}[g_{(\ell - 1)n + j} \succ_i g_{(k - 1)n + i}] \ge k - (n - 2)}
    \]
    
    As before, a necessary condition for $g_{(\ell - 1)n + j} \succ_i g_{(k - 1)n + i}$ is $g_{(\ell - 1)n + j} \succeq_i P^{((k - \ell)n + i - j)}_{(\ell - 1)n + j -1}$ , hence,
    \begin{align*}
        \Prs{\sum_{\ell = 1}^k \mathbb{I}[g_{(\ell - 1)n + j} \succ_i g_{(k - 1)n + i}] \ge k - (n - 2)}
        &\le  \Prs{\sum_{\ell = 1}^k \mathbb{I}[g_{(\ell - 1)n + j} \succeq_i P^{((k - \ell)n + i - j)}_{(\ell - 1)n + j - 1}] \ge k - (n - 2)}\\
        &= \Prs{\sum_{\ell = 1}^k \mathbb{I}[g_{(\ell - 1)n + j} \nsucceq_i P^{((k - \ell)n + i - j)}_{(\ell - 1)n + j - 1}] \le n - 2}.
    \end{align*}
    Now each of these indicators is an independent Bernoulli random variable. This means we can use a  Chernoff bound on the sum. To do so, we first bound the expectation of the  sum. We have that 
    \begin{align*}
        \E\left[\sum_{\ell = 1}^k \mathbb{I}[g_{(\ell - 1)n + j} \nsucceq_i P^{((k - \ell)n + i - j)}_{(\ell - 1)n + j - 1}]\right]
        &= \sum_{\ell = 1}^k\Prs{g_{(\ell - 1)n + j} \nsucceq_i P^{((k - \ell)n + i - j)}_{(\ell - 1)n + j - 1}}\\
        &= \sum_{\ell = 1}^k 1 - \frac{(k - \ell)n + i - j}{|P_{(\ell - 1)n + j - 1}|}\\
        &= \sum_{\ell = 1}^k 1 - \frac{(k - \ell)n + i - j}{n(q - (n - 1)) - ((\ell - 1)n + j - 1)}\\
        &= \sum_{\ell = 1}^k 1 - \frac{(k - \ell)n + i - j}{n(q - n -  \ell) + 1 - j}\\
        &= \sum_{\ell = 1}^k \frac{n(q - n - k)  + 1 - i}{n(q - n -  \ell) + 1 - j}\\
        &\ge \sum_{\ell = 1}^k \frac{n(q - n - k)  + 1 - i + (i - 1)}{n(q - n -  \ell) + 1 - j + (i - 1)}\\
        &= \sum_{\ell = 1}^k \frac{n(q - n - k)}{n(q - n -  \ell) +i - j}\\
        &\ge \sum_{\ell = 1}^k \frac{n(q - n - k)}{n(q - n -  \ell) + n}\\
        &\ge \sum_{\ell = 1}^k \frac{q - n - k}{q - n -  \ell + 1}
        \\
        &\ge  k\cdot \frac{q - n - k}{q - n + 1}\\
        &\ge  \frac{(q - n - k)k}{q - n + 1} + \frac{ (q - n - k)}{q - n + 1} - 1\\
        &=  \frac{(q - n - k)(k + 1)}{q - n + 1} - 1\\
        &\ge \frac{\max(q - n - k, k + 1)\cdot \min(q - n - k, k + 1)}{q - n + 1} - 1\\
        &\ge \frac{1}{2} \cdot \min(q - n - k, k + 1) - 1 
        \\
        &\ge 16\log m + 2n - 1.
    \end{align*}
    Let $\mu = \E\left[\sum_{\ell = 1}^k \mathbb{I}[g_{(\ell - 1)n + j} \nsucceq_i P^{((k - \ell)n + i - j)}_{(\ell - 1)n + j - 1}]\right]$. The above implies that $(n - 2) \le \frac{1}{2}\mu$. Hence, a Chernoff bound implies that
    \begin{align*}
       \Prs{\sum_{\ell = 1}^k \mathbb{I}[g_{(\ell - 1)n + j} \nsucceq_i P^{((k - \ell)n + i - j)}_{(\ell - 1)n + j - 1}] \le n - 2}
        &\le \Prs{\sum_{\ell = 1}^k \mathbb{I}[g_{(\ell - 1)n + j} \nsucceq_i P^{((k - \ell)n + i - j)}_{(\ell - 1)n + j - 1}] \le \frac{1}{2}\mu}\\
        &\le \exp\left(-\frac{\left(\frac{1}{2}\right)^2 \mu }{2} \right)\\
        &= \exp\left(-\frac{ \mu }{8} \right)\\
        &\le \exp\left(-2\log m \right) = \frac{1}{m^2}.
    \end{align*}
    All together, this implies that for $j < i$, $\Prs{\cE^{ij}_k} \le 1/m^2$.
    
    For $j > i$, the argument can be modified as follows. Notice that  $j$'s $k$'th pick will be after $i$'s $k$'th pick, hence, we need only sum over $\ell$ from $1$ to $k - 1$. This gives 
    \begin{align*}
        \Prs{\sum_{\ell = 1}^{k-1} \mathbb{I}[g_{(\ell - 1)n + j} \succ_i g_{(k - 1)n + i}] \ge k - (n - 2)}
        &\le  \Prs{\sum_{\ell = 1}^{k-1} \mathbb{I}[g_{(\ell - 1)n + j} \succeq_i P^{((k - \ell)n + i - j)}_{(\ell - 1)n + j - 1}] \ge k - (n - 2)}\\
        &= \Prs{\sum_{\ell = 1}^{k-1} \mathbb{I}[g_{(\ell - 1)n + j} \nsucceq_i P^{((k - \ell)n + i - j)}_{(\ell - 1)n + j - 1}] \le n - 3}.
    \end{align*}
    
    The analysis of the expectation goes through (replacing $k$ with $k - 1$) since we never use that $j < i$, at least until getting the expectation is lower bounded by 
    \[
        \sum_{\ell = 1}^{k-1} \frac{q - n - k}{q - n - \ell + 1}.
    \]
    Now, we  can continue with 
    \begin{align*}
        \sum_{\ell = 1}^{k-1} \frac{q - n - k}{q - n - \ell + 1} \geq \sum_{\ell - 1}^{k} \frac{q - n - k}{q - n - \ell + 1} - 1, 
    \end{align*}
    at which point the remainder of the analysis continues to hold (off by one) until  $\mu \ge 16\log m + 2n - 2$. 
    This  is still sufficient for $(n - 3) \le \frac{1}{2}\mu$ and $\mu \ge 16\log m$, which is all that was needed for the Chernoff bound to hold. This means that $\Prs{\cE^{ij}_k} \le 1/m^2$ for $j > i$ as well.
    
    Union bounding over the $q$ possible choices of $k$ and $n$ possible choices of $j$, we get that
    \begin{equation*}
        \Prs{\bigcup_{k: 32\log m + 4n \le k \le q - 32\log m - 5n} \bigcup_{j: j \ne i} \cE^{ij}_k} \le \frac{qn}{m^2} = \frac{1}{m}. \qedhere
    \end{equation*}
    \end{proof}
    
    \begin{proof}[Proof of \Cref{lem:sum3}]
    Let $L$ be the last $\floor{64n \log m + 12n^2}$ items according to $\pi_i$. Just as in Lemma~\ref{lem: bound for k = q, EF}, we will decompose
    \begin{align*}
        \Prs{\bigcup_{k: q - 32\log m - 5n <  k \le q - 1 } \bigcup_{j: j \ne i} \cE^{ij}_k}
        &\le \Prs{\bigcup_{k: q - 32\log m - 5n <  k \le q - 1 } \bigcup_{j: j \ne i} \cE^{ij}_k \suchthat b_{(q-1)i} \notin L} + \Prs{b_{(q-1)i} \in L}\\
        &\le \sum_{j: j \ne i} \Prs{\bigcup_{k: q - 32\log m - 5n <  k \le q - 1 } \cE^{ij}_k \suchthat b_{(q-1)i} \notin L} + \Prs{b_{(q-1)i} \in L}.
    \end{align*}
    We upperbound each of these summands. We begin with $\Prs{b_{(q-1)i} \in L}$. We again slit decompose this depending on if $i$ received any goods in $L$ during the giveaway phase.
    \[
        \Prs{b_{(q-1)i} \in L} \le \Prs{b_{(q-1)i} \suchthat \set{\gamma_{ji} \suchthat j \ne i} \cap L = \emptyset} + \Prs{\set{\gamma_{ji} \suchthat j \ne i} \cap L \ne \emptyset}.
    \]
    For $\Prs{\set{\gamma_{ji} \suchthat j \ne i} \cap L \ne \emptyset}$, recall that $\gamma_{ji}$ are just random items from the pool from when the pool is of size at least $m - n^2$, hence, the probability each is an element of $L$ is at most $\frac{|L|}{m - n^2}$. Using the assumption that $m \ge 2n^2$, we have that this is at most $\frac{2|L|}{m} \le \frac{128n\log m + 20 n^2}{m}$.
    
    For $\Prs{b_{(q-1)i} \in L \suchthat \set{\gamma_{ji} \suchthat j \ne i} \cap L = \emptyset}$, since we are conditioning on $i$ not being given any items in $L$ during the giveaway phase, the only way that $i$ will receive such items is if they pick them during round robin. For $b_{(q-1)i} \in L$, $i$ must receive at least $2$ items in $L$. This means that even when $i$ made their second to last pick at Round Robin $g_{n(q - n - 1) + i}$ the pool was  $P_{n(q - n - 1) + i - 1}$, the only available items were from $L$, i.e.,$P_{n(q - n - 1) + i - 1} \subseteq L$. Notice that $|P_{n(q - n - 1) + i - 1}| = m - n(n - 1) - (n(q - n - 1) + i - 1) = 2n - i + 1$. We will union bound over all possible sets $S \in \binom{L}{2n - i + 1}$ analyzing the probability that $\Prs{P_{n(q - n - 1) + i - 1} = S \suchthat \set{\gamma_{ji} \suchthat j \ne i} \cap L \ne \emptyset}$. 
    
    We can decompose
    \begin{align*}
       &\Prs{P_{n(q - n - 1) + i - 1} = S \suchthat \set{\gamma_{ji} \suchthat j \ne i} \cap L \ne \emptyset}\\
        &\qquad= \Prs{S \subseteq P_0 \suchthat \set{\gamma_{ji} \suchthat j \ne i} \cap L \ne \emptyset} \cdot \prod_{t = 1}^{P_{n(q - n - 1) + i - 1}} \Prs{g_t \notin S \suchthat S \subseteq P_{t - 1} \land \set{\gamma_{ji} \suchthat j \ne i} \cap L \ne \emptyset}.
    \end{align*}
    For the first term and all terms where $t \equiv i \bmod n$, this probability may be hard to compute, but we can upperbound it by $1$. For all other terms with $t \not\equiv i \bmod n$,
    \[
        \Prs{g_t \notin S \suchthat S \subseteq P_{t - 1} \land \set{\gamma_{ji} \suchthat j \ne i} \cap L \ne \emptyset} = 1 - \frac{|S|}{|P_{t - 1|}}.
    \]
    Hence, this entire product is upper bounded by
    \[
        \prod_{t = 1: t \not\equiv i \bmod n}^{n(q - n - 1) + i - 1} \frac{n(q - (n - 1)) + 1 - t - (2n - i + 1)}{n(q - (n - 1) + 1 - t}.
    \]
    Notice that this is now the same analysis as the proof of \Cref{claim:bound on L} with $m$ replaced with $q(n - 1)$ and $n + 1 -i$ replaced with $2n + 1 - i$. Hence, the above product is upperbounded by
    \[
        2\left( \frac{2}{(n(q - (n - 1)))^{1 - 1/n}}\right)^{2n + 1 - i}.
    \]
    Using the fact that $m \ge 2n^2$ so $n(q - (n - 1)) \ge m/2$, this is at most
    \[
        2\left( \frac{4}{m^{1 - 1/n}}\right)^{2n + 1 - i}.
    \]
    Union bounding over the at most $\binom{|L|}{2n + 1 - i} \le |L|^{2n + 1 - i} \le (64n \log m + 10n^2)^{2n + 1 - i}$ subsets $S$, we get that
    \[
     \Prs{b_{(q-1)i} \suchthat \set{\gamma_{ji} \suchthat j \ne i} \cap L = \emptyset} \le \left( \frac{256n \log m + 48n^2}{m^{1 - 1/n}} \right)^{2n + 1 - i} \le \left( \frac{256n \log m + 48n^2}{m^{1/2}} \right)^{2n + 1 - i}.
    \]
    Since we have restricted to $m \ge (256 n \log m + 40n^2)^2$, $m^{1/2} \ge 256 n \log m + 48 n^2$, so the inside is at most $1$. Hence, since 
    $2n + 1 - i \ge n + 1 \ge 2$, this is at most
    \[
        \left( \frac{64n\log m + 12n^2}{m^{1/2}} \right)^{2} = \frac{65536 n^2\log^2 m + 24576 n^3 \log m + 2304 n^4}{m}.
    \]
    Hence, we have that
    \[
        \Prs{b_{(q-1)i} \in L} \le \frac{65536 n^2\log^2 m + 24576 n^3 \log m + 2304 n^4 + 128n \log m + 24n^2}{m}.
    \]
    
    Finally, fix some agent $j \ne i$. We consider $\Prs{\bigcup_{k: q - 32\log m - 5n <  k \le q - 1 } \cE^{ij}_k \suchthat b_{(q-1)i} \notin L}$. Notice that if $b_{(q - 1)i} \notin L$, as long as $|A_j \cap L| \ge 32\log m + 5n$, then for all $k \le 32 \log m + 5n$, $b_{ik} \succeq_i b_{i(q - 1)} \succ_i b_{jk}$, so $\cE^{ij}_k$ does not hold. Hence, we wish to upperbound the probability that $|A_j \cap L| < 32 \log m + 5n$. We use the second equivalent process, and consider sampling all of $\gamma_{i'j'}$ items along with sets $R_0, \ldots, R_q$ and goods picked by agent $i$, $g_t$ with $t \equiv i \bmod n$, but we have not yet sampled which non-$i$ agent receives which items in $R_0, \ldots, R_q$. Notice that conditioned on $b_{i(q - 1)} \notin L$, $|A_i \cap L| \le 1$. Since $n(n - 1)$ items are in the giveaway phase and $|R_0| + |R_q| = n - 1$, this implies that $\left| L \cap \left(\bigcup_{\ell = 1}^{q - 1} R_\ell \right) \right| \ge |L| - n^2 \ge \floor{64n \log m + 11n^2} \ge 64n\log m + 10 n^2$. Notice that $|A_j \cap L| \ge \sum_{\ell = 1}^{q - 1} \mathbb{I}[A_j \cap R_l \cap L \ne \emptyset]$. Further, this is a sum of independent Bernoulli variables with $\Prs{A_j \cap R_l \cap L \ne \emptyset} = \frac{|R_l \cap L|}{n - 1}$. Hence, $\mathbb{E}[\sum_{\ell = 1}^{q - 1} \mathbb{I}[A_j \cap R_l \cap L \ne \emptyset]] \ge \sum_{\ell = 1}^{q - 1} \frac{|R_l \cap L|}{n - 1} = \frac{\left| L \cap \left(\bigcup_{\ell = 1}^{q - 1} R_\ell \right) \right|}{n - 1} \ge 64 \log m + 10n$. We can then use a Chernoff bound to show that
    \begin{align*}
        \Prs{|A_j \cap L| \le 32\log m + 5n}
        &\le \Prs{\sum_{\ell = 1}^{q - 1} \mathbb{I}[A_j \cap R_l \cap L \ne \emptyset] \le 32\log m + 5n}\\
        &\le \Prs{\sum_{\ell = 1}^{q - 1} \mathbb{I}[A_j \cap R_l \cap L \ne \emptyset] \le \frac{1}{2}\mathbb{E}\left[\sum_{\ell = 1}^{q - 1} \mathbb{I}[A_j \cap R_l \cap L \ne \emptyset]\right]}\\
        &\le \exp\left(-\frac{(1/2)^2 \mathbb{E}\left[\sum_{\ell = 1}^{q - 1} \mathbb{I}[A_j \cap R_l \cap L \ne \emptyset]\right]}{2} \right)\\
        &\le \exp\left(-\frac{32 \log m}{8} \right) \le \frac{1}{m^4} \le \frac{1}{m}.
    \end{align*}
    Hence, we have that
    \[
        \sum_{j: j \ne i} \Prs{\bigcup_{k: q - 32\log m - 5n <  k \le q - 1 } \cE^{ij}_k \suchthat b_{(q-1)i} \notin L} \le \frac{n}{m}.
    \]
    Putting this together, we have that
    \begin{align*}
        &\Prs{\bigcup_{k: q - 32\log m - 5n <  k \le q - 1 } \bigcup_{j: j \ne i} \cE^{ij}_k}\\
        &\qquad \le  \frac{65536 n^2\log^2 m + 24576 n^3 \log m + 2304 n^4 + 128n \log m + 24n^2 + n}{m}. \qedhere
    \end{align*}
    \end{proof}

    \begin{proof}[Proof of \Cref{lem:sum4}]
    Let $L$ be the last $n - 1$ items according to $\pi_i$. A sufficient condition for $\bigcup_{k: k = q} \bigcup_{j: j \ne i} \cE^{ij}_k = \bigcup_{j: j \ne i} \cE^{ij}_q$ to hold is that $\set{\gamma_{ij} \suchthat i \ne j} = L$, i.e., $i$ is able to give away their bottom $n - 1$ goods. In this case, $b_{iq} \succ_i L$ while $b_{jq} \in L$ for all $j \ne i$. For this to hold, the only necessary condition is that no other agent has given away a good in $L$ prior to $i$ giving them away. There are at most $n^2$ items given away before $i$ is able to give, and each of these is an element of $L$ with probability at most $\frac{|L|}{m - n^2}$. Since we are assuming that $M \ge 2n^2$, this is at most $\frac{2n}{m}$. Hence,
    \begin{equation*}
        \Prs{\bigcup_{k: k = q} \bigcup_{j: j \ne i} \cE^{ij}_k} \le \frac{2n}{m}.\qedhere
    \end{equation*}
\end{proof}

\section{Non-existence of $\SDEFX$ allocations}\label{appendix:no-sdefx}
Complementing Theorem \ref{thm:positive for order consistent}, in this section we show that $\SDEFX$ allocations may not exist, even for additive valuations. Additionally, this also contrasts with the fact that the existence of $\EFX$ remains an elusive open problem.

We first consider an instance $m=4$ items and $n=2$ agents having identical additive valuation function $v: 2^{[4]} \to \mathbb{R}_{\geq 0}$. The function $v$ is such that $v(\{1\}) = 4$, $v(\{2\}) = 1+\epsilon$, $v(\{3\}) = 1$, and $v(\{4\}) = 1 - \epsilon$. Indeed, the only $\EFX$ allocations in this instance are $\alloc = (\set{1}, \set{2,3,4})$ and $\widehat{\alloc} = (\set{2,3,4}, \set{1})$. Note that an $\SDEFX$ allocation must also be $\EFX$, therefore, if an $\SDEFX$ allocation exists for this instance it must be $\alloc$ or $\widehat{\alloc}$.

Furthermore, if an allocation is $\SDEFX$, then it must also be $\EFX$ for any other instance in which both agents have the same preference order over the items. Now consider a second instance wherein both agents instead have the additive function $\widehat{v}$ such that $\widehat{v}(\{1\}) = 4$, $\widehat{v}(\{2\}) = 3$, $\widehat{v}(\{3\}) = 2$, and $\widehat{v}(\{4\}) = 1$. Note that the preference order of items is the same in $v$ and $\widehat{v}$, therefore, if $\alloc$ (or $\widehat{\alloc}$) is $\SDEFX$ then it must be $\EFX$ for the second instance. It is easy to see that neither $\alloc$ nor $\widehat{\alloc}$ is $\EFX$ for the second instance. This shows that $\SDEFX$ allocations do not exist, even for additive valuations.

\end{document}